\Crefname{algocf}{Algorithm}{Algorithms}
\crefname{algocfline}{line}{lines}
\Crefname{invariant}{Invariant}{Invariants}
\Crefname{claim}{Claim}{Claims}
\Crefname{subclaim}{Subclaim}{Subclaims}
\definecolor{DarkGray}{rgb}{0.66, 0.66, 0.66}
\definecolor{DarkPowderBlue}{rgb}{0.0, 0.2, 0.6}
\definecolor{fluorescentyellow}{rgb}{0.8, 1.0, 0.0}
\definecolor{cerulean}{rgb}{0.0, 0.48, 0.65}
\definecolor{bleudefrance}{rgb}{0.19, 0.55, 0.91}
\newcounter{note}[section]
\newcommand{\initOneLiners}{%
    \setlength{\itemsep}{0pt}
    \setlength{\parsep }{0pt}
    \setlength{\topsep }{0pt}
}
  \def\\{}%
  \def\texttt#1{<#1>}%
  \def\textsf#1{<#1>}%
  \def\mathsf#1{<#1>}%
  \def\ensuremath#1{#1}%
  \def\Cref#1{<Label:#1>}%
  \def\eqref#1{<Eq.:#1>}%
\newtheorem{theorem}{Theorem}[section]
\newtheorem{lemma}[theorem]{Lemma}
\newtheorem{fact}[theorem]{Fact}
\newtheorem{corollary}[theorem]{Corollary}
\theoremstyle{definition}
\newtheorem{defn}[theorem]{Definition}
\theoremstyle{remark}
\renewcommand{\theinvariant}{(I\@arabic\c@invariant)}
\newcommand{\tO}{\tilde{O}}
\newcommand{\var}{{\mathbb{V}}}
\newcommand{\ex}{{\mathbb{E}}}
\newcommand{\E}{\mathbb{E}}
\newcommand{\relv}{\eta}
\newcommand{\ugp}{u_G(p)}
\newcommand{\zgp}{z_G(p)}
\newcommand{\xgp}{x_G(p)}
\newcommand{\Cresp}{{\cal C}_7}
\newcommand{\zresp}{z_7(p)}
\newcommand{\euler}{{\cal E}}
\newcommand{\odd}{{\rm odd}}
\newcommand{\even}{{\rm even}}
\newcommand{\bS}{\mathbb{S}}
\newcommand{\bL}{\mathbb{L}}
\newcommand{\diff}{{\rm d}}
\newcommand{\tX}{\tilde{X}}
\newcommand{\eps}{\varepsilon}
\newcommand{\EE}{\mathbb{E}}
\newcommand{\RR}{\mathbb{R}}
\newcommand{\poly}{\operatorname{poly}}
\renewcommand{\emptyset}{\varnothing}
\newcommand{\tG}{{\tilde G}}
\newcommand{\nf}{\nicefrac}
\newcommand{\junk}[1]{}
\newcommand{\eat}[1]{}
\newif\ifhideproofs
\title{Beyond the Quadratic Time Barrier for Network Unreliability}
\author{Ruoxu Cen\thanks{Department of Computer Science, Duke University. Email: {\tt ruoxu.cen@duke.edu}} \and William He\thanks{Duke University. Email: {\tt william.rui.he@gmail.com}} \and Jason Li\thanks{Simons Institute for the Theory of Computing, UC Berkeley. Email: {\tt jmli@alumni.cmu.edu}} \and Debmalya Panigrahi\thanks{Department of Computer Science, Duke University. Email: {\tt debmalya@cs.duke.edu}}}
\date{}
\begin{document}

\pagenumbering{gobble}

\maketitle

\begin{abstract}
Karger (STOC 1995) gave the first FPTAS for the network (un)reliability problem, setting in motion research over the next three decades that obtained increasingly faster running times, eventually leading to a $\tilde{O}(n^2)$-time algorithm (Karger, STOC 2020). This represented a natural culmination of this line of work because the algorithmic techniques used can enumerate $\Theta(n^2)$ (near)-minimum cuts. In this paper, we go beyond this quadratic barrier and obtain a faster FPTAS for the network unreliability problem. Our algorithm runs in $m^{1+o(1)} + \tilde{O}(n^{1.5})$ time. 

Our main contribution is a new estimator for network unreliability in \emph{very} reliable graphs. These graphs are usually the bottleneck for network unreliability since the disconnection event is elusive. Our estimator is obtained by defining an appropriate importance sampling subroutine on a dual spanning tree packing of the graph. To complement this estimator for very reliable graphs, we use recursive contraction for \emph{moderately} reliable graphs. We show that an interleaving of sparsification and contraction can be used to obtain a better parametrization of the recursive contraction algorithm that yields a faster running time matching the one obtained for the very reliable case.
\end{abstract}

\clearpage

\pagenumbering{arabic}

\section{Introduction}
\label{sec:introduction}
In the network unreliability problem, we are given an undirected, unweighted graph $G=(V, E)$ and a failure probability $0 < p < 1$. The goal is to find the probability that the graph disconnects when every edge is deleted independently with probability $p$. The probability of disconnection is called the unreliability of the graph and denoted $\ugp$. 
Reliability problems naturally arise from the need to keep real-world networks connected under edge failures, and entire books have been devoted to the topic~\cite{colbourn1987combinatorics, chaturvedi2016network}. In particular, the problem of estimating $\ugp$ has been dubbed as ``\ldots perhaps the most fundamental problem in the study of network reliability''\cite{Karger20}.

In 1979, Valiant showed that the network unreliability problem is $\mathsf{\#P}$-hard~\cite{Valiant79}, which implies that it is unlikely that a polynomial-time algorithm can {\em exactly} compute the value of $u_G(p)$. Over the next 15 years, several algorithms were designed to approximate $\ugp$ for various special cases such as planar graphs (Karp and Luby~\cite{karp1989monte}), dense graphs (Alon, Frieze, and Welsh~\cite{AlonFW95}), etc. Eventually, in a seminal work, Karger \cite{Karger99} proposed the first fully polynomial-time randomized approximation scheme (FPRAS) for the unreliability problem. For any constant $\eps \in (0, 1)$, this algorithm outputs a $(1\pm\eps)$-approximation for $\ugp$ with high probability (whp)\footnote{In this paper, as in the network unreliability literature, a result is said to hold with high probability if it holds with probability $1-\frac{1}{\poly(n)}$.} in $\tO(mn^4)$ time, where $n$ is the number of vertices in the graph and $m$ is the number of edges. This work established a bifurcation of instances of the problem into two cases depending on the value of $\ugp$. If $\ugp$ is {\em large} (the unreliable case), then Monte Carlo sampling suffices. On the other hand, if $\ugp$ is {\em small} (the reliable case), then Karger showed that whenever the graph disconnects, it's almost always the case that a {\em near-minimum} cut fails. In this case, the algorithm uses DNF counting~\cite{karp1989monte} on the (polynomial) set of near-minimum cuts to obtain an estimate of $\ugp$. Using the same template, Harris and Srinivasan \cite{HarrisS18} improved the running time of the algorithm to $n^{3+o(1)}$. They did so by establishing a tighter bound on the number of cuts that contribute to $u_G(p)$ and showing that the instances of DNF counting generated by the unreliability problem are nongeneric and admit faster algorithms than the generic case.

In the last decade, Karger \cite{Karger16, Karger17, Karger20} has further improved the running time of the problem to the current best bound of $\tO(n^2)$. As in prior works, the unreliable case is handled by Monte Carlo sampling. But, for the reliable case, instead of DNF counting, every edge is failed with some probability $q\geq p$ and all edges that survive are contracted to yield a smaller graph $H$. On this smaller graph $H$, the algorithm computes $u_H(p/q)$, which is an unbiased estimator of $\ugp$. Now, if the relative variance of $u_H(p/q)$ is bounded by $\eta$, then a standard technique of repeating $\tO(\eta\eps^{-2})$ times and taking the {\em median-of-averages} yields a $(1\pm\eps)$-approximation to $\ugp$ whp. 

The algorithms in \cite{Karger16, Karger17, Karger20} differ in the choice of $q$, the structure of the computation tree, and the bound on relative variance.\footnote{The relative variance of a random variable is the ratio of its variance to the square of its expectation.} In \cite{Karger16}, $q$ is chosen so that $H$ is expected to be of constant size, and the resulting estimator has a relative variance of $O(n^2)$ and can be computed in $\tO(n)$ time. This yields a running time of $\tO(n^3)$. In \cite{Karger17}, $q$ is chosen differently, so as to ensure that the expected size of $H$ is a constant fraction of $G$. The algorithm then proceeds to recursively estimate $u_H(p/q)$, closely resembling the recursive contraction framework of Karger and Stein~\cite{karger1996new} originally proposed for the minimum cut problem. The bound on the running time of this algorithm was initially established as $\tO(n^{2.71})$~\cite{Karger17}, but a more refined analysis of the relative variance later improved this to $\tO(n^2)$~\cite{Karger20}. This remained the fastest algorithm for the network unreliability problem prior to our work.

\subsection{Our Contribution}

In this paper, we give a fully polynomial-time randomized approximation scheme (FPRAS) for the network unreliability problem that has a running time of $m^{1+o(1)} + \tO(n^{1.5})$. We show the following theorem:

\begin{theorem}\label{thm:main}
    For any \eat{constant} $\eps\in (0, 1)$, there is a randomized Monte Carlo algorithm for the network unreliability problem that runs in $m^{1+o(1)}\eps^{-2} + \tO(n^{1.5}\eps^{-3})$ time and outputs a $(1\pm\eps)$-approximation to $\ugp$ whp.
\end{theorem}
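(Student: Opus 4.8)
The plan is to reduce the edge count once and then split into three reliability regimes, with the thresholds chosen so that every regime's subroutine fits inside the claimed budget.

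\textbf{Preprocessing.} First I would run a sparsification-based preprocessing step that, in $m^{1+o(1)}$ time, replaces $G$ by an equivalent instance on the same vertex set with only $\tO(n)$ edges while changing $\ugp$ by at most a $(1\pm\eps)$ factor; this is the only place $m$ enters the running time, and afterwards a connectivity test, a min-cut computation, or one round of random contraction each cost $\tO(n)$. I would also compute the min cut $c$ of the sparse graph in $\tO(n)$ time. Since Karger's structural estimates give $p^c\le\ugp\le\poly(n)\,p^c$, the quantity $c\ln(1/p)$ essentially identifies which regime we are in, up to a coarse Monte Carlo check near the unreliable/moderate boundary.

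\textbf{Unreliable and moderately reliable regimes.} If $\ugp\gtrsim n^{-1/2}$, plain Monte Carlo suffices: draw $\tO(\eps^{-2}/\ugp)$ random subgraphs, test connectivity of each in $\tO(n)$ time, and return the empirical disconnection rate; a Chernoff bound gives a $(1\pm\eps)$-approximation whp, and since $1/\ugp\lesssim\sqrt n$ the cost is $\tO(n^{1.5}\eps^{-O(1)})$. (We run this with a pessimistic lower bound on $\ugp$ and only trust it when the empirical rate is large.) In the complementary reliable case, when $c\ln(1/p)$ is below a threshold $\theta$, I would use recursive contraction in the style of Karger and Karger--Stein: pick a failure probability $q\ge p$ so contracting the surviving edges shrinks the graph by a constant factor in expectation, recurse to estimate $u_H(p/q)$, and combine the recursive estimates by median-of-averages over $\tO(\eta\eps^{-2})$ runs, where $\eta$ is the relative variance of the composed estimator. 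The new twist, as flagged in the abstract, is to insert a sparsification step at the top of each recursive call; this lets us re-parametrize the recursion (branching factor, value of $q$, stopping size) so that (relative variance)$\,\times\,$(per-sample cost), accumulated over the recursion tree, is $\tO(n^{1.5})$ rather than $\tO(n^2)$. The work here is the recursion-tree accounting: multiplicativity of relative variance down the tree, Karger's single-level variance bound re-expressed via the contraction parameter, and then balancing variance against cost.

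\textbf{Very reliable regime.} When $c\ln(1/p)\ge\theta$, $p^c$ is so small that contraction cannot make progress without an exponential blow-up, so here I would use the paper's new estimator built on a (fractional) packing of $\Theta(c)$ spanning trees---the dual object to the cut structure, since any cut of size $k$ must meet each tree in the packing. Using this packing I would define an importance-sampling distribution on subgraphs that boosts the probability of the disconnection event to something non-negligible while reweighting each sample to keep the estimator unbiased for $\ugp$, then bound its relative variance to be $\tO(n)$-scale and show each sample can be drawn and evaluated in $\tO(\sqrt n)$ amortized time, for $\tO(n^{1.5})$ total (with the extra $\eps$-dependence absorbed into the stated $\eps^{-3}$). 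Finally I would check that the three regimes overlap consistently so that every instance is covered at the claimed running time.

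\textbf{Main obstacle.} The crux is the very-reliable estimator: obtaining all three of unbiasedness, a near-linear-in-$n$ relative variance, and $\tO(\sqrt n)$ per-sample evaluation from the tree-packing-based importance sampling is where essentially all the new ideas are needed. Secondarily, the re-parametrized recursive-contraction analysis must be tuned so its running time matches the very-reliable bound exactly, so that neither regime becomes the bottleneck.
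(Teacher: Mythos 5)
Your regime decomposition (unreliable / moderately reliable / very reliable) and the high-level tools in each (plain Monte Carlo, recursive contraction with interleaved sparsification, importance sampling on a spanning-tree packing) do match the paper, but your first step contains a genuine gap that the rest of your plan depends on.

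You propose a one-shot preprocessing sparsification that ``replaces $G$ by an equivalent instance on the same vertex set with only $\tO(n)$ edges while changing $\ugp$ by at most a $(1\pm\eps)$ factor.'' No such reduction exists via cut sparsification. A $(1\pm\delta)$-cut-sparsifier $H$ of $G$ preserves every cut value up to a multiplicative $(1\pm\delta)$, but $u_G(p)$ depends on cut values \emph{exponentially}: a single cut of size $c$ contributes $p^{c}$, so perturbing $c$ to $(1\pm\delta)c$ multiplies the contribution by $p^{\pm\delta c}$, which is polynomially large when $p^{c}$ is as small as $n^{-\Theta(1)}$. Consequently $u_{H}(p)$ can differ from $u_G(p)$ by polynomial factors no matter how small a $\polylog$-sized $\delta$ you take, and the entire accounting you build on top of ``$m$ only enters once in preprocessing'' collapses. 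The paper handles sparsification in two \emph{different}, regime-specific ways, neither of which is the claim you made. In the recursive-contraction regime, sparsification is treated as a \emph{random} transformation: one picks a rescaled parameter $q$ with $1-q=(1-p)/\alpha$ so that $u_{\tilde G}(q)$ (not $u_{\tilde G}(p)$) is an exactly \emph{unbiased} estimator of $u_G(p)$, and one then proves a relative-second-moment bound of $2+O(1/\log n)$ for $z_{\tilde G}(q)$ under the constraint $n^{-3}\le p^\lambda\le n^{-1/2}$ (Lemmas 5.1--5.2); this variance has to be paid for by branching the recursion, which is why the paper interleaves sparsification inside the recursion rather than doing it once up front. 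In the very-reliable regime, the sparsifier is used \emph{only} to construct the $\tO(1)$-size spanning-tree packing; every cut value that feeds into the estimator is computed on the original graph $G$ via an orthogonal-range-query data structure, precisely to avoid the exponential sensitivity above.

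A secondary consequence: because the preprocessing you assumed does not exist, the regime $p<\theta$ but $p^\lambda>n^{-1/2}$ is not covered by ``plain Monte Carlo'' at the claimed cost, since a single sample still costs $O(m)$ and you may need $\tO(\sqrt n)$ of them, giving $\tO(m\sqrt n)$. The paper closes this with a two-step Monte Carlo (Section 6.2): one $O(m)$-time contraction with $q$ satisfying $q^\lambda=\nicefrac12$ producing an $H$ with $O(n\lambda)$ edges, followed by $\tO(\sqrt n)$ inner samples each implementable in $\tO(n)$ time by directly sampling the (few) edges to contract rather than iterating over all of $H$. You also omit the trivial base case $n\le\tO(\eps^{-2})$, which is needed to bottom out the recursion with Karger's $\tO(n^2)$ estimator. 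These are fixable once the sparsification step is corrected, but as written the proposal rests on an impossible lemma.
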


We remark that the dependence on $m$, which is not explicit in prior bounds because the bounds are $\Omega(n^2)$ and therefore $\Omega(m)$, is necessary because the number of edges may exceed $n^{1.5}$ in general. 


The running time bound of $\tO(n^2)$ obtained by \cite{Karger20} was a natural culmination of prior research on the unreliability problem because all prior algorithms involve explicit or implicit enumeration of a set of $\Omega(n^2)$ (near-)minimum cuts of the graph. To see this, consider the cycle graph on $n$ vertices. This is a sparse graph of $n$ edges, and one would ideally like a subquadratic running time for the unreliability problem. However, the cycle has ${n\choose 2} = \Theta(n^2)$ minimum cuts, which means that any subquadratic algorithm must refrain from enumerating the minimum cuts of the graph. The prior techniques for the unreliability problem are unsuitable for this purpose. The DNF counting-based algorithms~\cite{Karger99sparsify,HarrisS18} perform explicit enumeration of near-minimum cuts and the DNF formula is already of $\Omega(n^2)$ size. The later algorithms based on recursive contraction~\cite{Karger16,Karger17,Karger20} do not perform explicit enumeration, but the framework can be used to generate all the (possibly $\Theta(n^2)$) minimum cuts of a graph. Our main conceptual contribution in this work is to overcome this bottleneck of $\Omega(n^2)$ in the running time of unreliability algorithms. We outline the main techniques that we employ for this purpose next.

\eat{
\subsection{Related Work}
Our work primarily builds upon a line of works on the unreliability problem \cite{Karger99, Karger16, Karger17, HarrisS18, Karger20}. These previous algorithms for approximating graph unreliability all consist of two different algorithms handling different cases of the problem. When the graph is unreliable ($u_G(p)$ is large), then these algorithms all run a naive Monte Carlo approximation of $u_G(p)$. Simply sampling $G$ with failure probability $p$ a few times and returning the ratio of disconnected samples to total samples suffices. The case in which $u_G(p)$ is small is handled using various other techniques in each of these algorithms.

Karger initialized the study of randomized approximation algorithms for network unreliability in \cite{Karger99}. Here the observation was made that by a cut-counting argument, when $u_G(p)$ is small, only a small number of cuts can significantly contribute to the probability that the graph disconnects. Using this observation, Karger reduced the unreliability problem to the problem of approximate DNF counting with a DNF of polynomial size. Such a DNF counting problem can be solved using the approximation algorithm of Karp, Luby, and Madras \cite{karp1989monte}, and the resulting algorithm runs in time $\tO(mn^4\epsilon^{-3})$. 

Following this work, Harris and Srinivasan \cite{HarrisS18} improved upon this algorithm to run in time $\tO(n^{3}\epsilon^{-2})$ by showing that an even smaller number of cuts can significantly contribute to $u_G(p)$ and improving upon the use of \cite{karp1989monte} by taking advantage of the structure of DNF counting problems arising from unreliability problems.

The later works of Karger \cite{Karger16, Karger17, Karger20} all use the recursive contraction algorithm to approximate $u_G(p)$, and this algorithm most closely resembles the algorithm we give in this work. The algorithm similarly applies naive Monte Carlo in the case where $u_G(p)$ is large. However, rather than reducing to DNF counting in the case of small $u_G(p)$ the algorithm instead reduces the problem to a (likely) smaller case by fixing $q\geq p$ and sampling an $H$ from $G$ by contracting edges with probability $1-q$ and then estimating $u_H(p/q)$. This quantity is an unbiased estimator for $u_G(p)$, so to bound its probability of error it suffices to bound its variance. 

In \cite{Karger16} this $q$ is chosen so that $H$ is very likely to be of constant size, and bounding variance is then accomplished using a coupling argument. The algorithm of \cite{Karger17} sets $q$ differently, so that sampling $H$ from $G$ by contracting edges independently with probability $1-q$ results in a graph that is most likely a multiplicative constant size smaller than $G$. Here a variance bound is obtained by a cut-counting argument. The recursive structure of the resulting algorithm very closely resembles the Karger-Stein algorithm \cite{karger1996new} for minimum cut. Finally, \cite{Karger20} provides a refined analysis of this recursive contraction algorithm for unreliability, and shows that it can run in time \textcolor{blue}{(Check this quantity)} $\tO(n^2 \epsilon^{-1})$.
}

\subsection{Our Techniques}\label{sec:techniques}

\eat{

To get around this bottleneck, let us consider writing $\ugp$ in terms of the individual contributions of the near-minimum cuts of the graph. Our task is to obtain an estimator for $\ugp$ that samples $o(n^2)$ of these cuts and suitably reweighs them to remove bias and reduce relative variance. This can be posed as two separate questions. First, how do we design an efficient sampling procedure for near-minimum cuts? Second, how do we reweigh each sampled cut to control bias and relative variance? We answer these questions using a combination of two algorithms for the reliable case, in addition to Monte Carlo sampling for the unreliable case. The first of these two algorithms is the previously used recursive contraction algorithm. However, we obtain a better parametrization of the algorithm that gives a sharper bound on the running time under certain assumptions. But, in general, these assumptions may not hold, e.g., for the cycle, and recursive contraction might actually run in $\tO(n^2)$ time. In this situation, we use a different algorithm based on a dual packing of spanning trees, originally pioneered by Karger~\cite{Karger00mincut} for the minimum cut problem. The main observation is that near-minimum cuts overlap spanning trees in a maximal packing in a small number of edges on average. This allows us an alternative to directly sampling near-minimum cuts by sampling cuts with small overlaps with spanning trees instead. However, this distorts the distribution over cuts since all cuts with a small overlap are not necessarily near-minimum cuts. This requires us to design a careful importance sampling algorithm that removes the bias from the estimator while controlling its relative variance. 

}


As in prior work, in the {\em unreliable} case, i.e., when $u_G(p)$ is above some threshold $n^{-o(1)}$, we apply na\"ive Monte Carlo sampling. The algorithm removes each edge independently with probability $p$ and checks whether the remaining graph is connected. Each sample takes $O(m)$ time and the estimator has relative variance $\approx \frac{1}{\ugp}$. Thus, $\tO\left(\frac{1}{u_G(p)}\right)=n^{o(1)}$ samples are sufficient and the running time of the algorithm is $m^{1+o(1)}$. 

When $\ugp$ is too small for Monte Carlo sampling, Karger~\cite{Karger20} proved that the partition function $\zgp$, defined as the expected number of failed cuts, approximates $\ugp$ up to a factor of $1+o(1)$.  So, it suffices to estimate $\zgp$ in this case instead of $\ugp$. 

Let us first review the recursive contraction algorithm in \cite{Karger20} for estimating $\zgp$. The algorithm removes edges with probability $q > p$, contracts the edges that remain to form a smaller graph $H$ (putting back the edges removed initially), and estimates $u_H(p/q)$ in this graph recursively. The parameter $q$ is chosen so that $q^{-\lambda} = 2$ (or any constant), where $\lambda$ is the value of the minimum cut in the graph. This ensures that one step of contraction reduces the number of vertices by a constant fraction in expectation. The relative variance is $q^{-\lambda}(1+o(1))$. By repeating $q^{-\lambda}$ times, the relative variance is kept at $1+o(1)$. This yields the recurrence:
$T(n)=q^{-\lambda}\cdot T(q^{\lambda/2} \cdot n)$.
Solving this recurrence gives $T(n)=O(n^2)$.

In this recurrence, the size bound and relative variance can both be simultaneously tight. To see this, consider a cycle graph with $\frac{\lambda}{2}+1$ parallel edges where one vertex on the cycle is connected to a leaf vertex with $\lambda$ parallel edges. Suppose $p$ is small enough that the minimum cut dominates unreliability, i.e., $\zgp \approx p^\lambda$. The estimator (approximately) returns $(p/q)^\lambda$ with probability $q^\lambda$; hence, the relative variance is $\frac{1}{\zgp^2}\cdot q^\lambda \cdot \left(\frac{p}{q}\right)^{2\lambda} \approx q^{-\lambda}$. The expected size of the contracted graph is $\approx q^{\lambda/2+1}\cdot n$. This example shows that we cannot hope to unconditionally improve the previous analysis to obtain a sub-quadratic bound.

But, under what conditions can the pathological example above be realized? To answer this question, we partition the reliable case into two subcases depending on whether $\ugp$ exceeds $O(n^{-3})$. We call these the {\em moderately reliable} and {\em very reliable} cases respectively. Clearly, the example above can be realized in the very reliable case; hence, we need a new algorithm for this case. This algorithm is our main contribution, and we will describe it below. But, before doing so, let us consider the moderately reliable setting. In this case, our main observation is that the two bounds on graph size and relative variance in the recurrence relation for recursive contraction cannot be tight simultaneously. Consider the following two typical cases. In a cycle with $\frac{\lambda}{2}$ parallel edges, the size bound $q^{\lambda/2}\cdot n$ is tight, but the true relative variance is close to 1, which is much smaller than the $q^{-\lambda}$ bound. In contrast, in a graph with a single minimum cut where all other cuts are large (e.g., a dumbbell graph), the relative variance is close to $q^{-\lambda}$ but the size decreases much faster than $q^{\lambda/2}\cdot n$. In general, we interpolate between these two extreme cases. We choose a parameter $\gamma \in \left[\frac{3}{4}\lambda, \lambda\right]$, and prove a sharper bound of $q^{-\gamma }$ on relative variance and $q^{\frac{2\gamma}{3}}\cdot n$ on the size of the recursive instance.
Then the recurrence becomes $T(n) = q^{-\gamma }\cdot T\left(q^{\frac{2\gamma}{3}}\cdot n\right)$, which yields a running time of $O(n^{1.5})$ as desired.

Finally, we consider the very reliable case, i.e., when $\ugp$ is smaller than $O(n^{-3})$. In this case, we design a new algorithm for estimating $\zgp$. When $p$ is small, the contribution of large cuts to the value of $\zgp$ decreases. In particular, for $\ugp = O(n^{-3})$, we can discard all cuts of value $> 3.5\lambda$. Our goal, then, is to obtain an estimator that performs {\em importance sampling} on cuts of value $\le 3.5\lambda$ (i.e., reweights them appropriately) to yield $\zgp$. But, how do we sample such cuts? We use a semi-duality between (near-)minimum cuts and maximum spanning tree packings for this purpose. Namely, one can construct $\lambda$ spanning trees with edge congestion $2$, which implies that in a randomly chosen tree, a cut of value $\le \alpha\lambda$ only has $\le 2\alpha$ edges, i.e., {\em $2\alpha$-respects} the tree, in expectation. (See Gabow~\cite{Gabow95} and Karger~\cite{Karger00mincut} for prior uses of such duality in minimum cut algorithms.) This allows us to sample $7$-respecting cuts from a spanning tree instead of cuts of value $\le 3.5\lambda$, and redefine the support of the estimator to $7$-respecting cuts. There are two main challenges: first, we need to implement the importance sampling subroutine very efficiently, namely calculate each sample and its corresponding weight in $\tO(1)$ time, and second, we must control the relative variance to $\tO(n^{1.5})$ since we can only draw $\tO(n^{1.5})$ samples (note that the support of the distribution is $\approx n^7$). We design a data structure based on orthogonal range queries in $\RR^2$ to implement importance sampling efficiently. To control the relative variance, we contract well-connected components of the graph (using a Gomory-Hu tree~\cite{gomory1961multi}) and apply the sampling subroutine on this contracted graph to boost the sampling probability of the small cuts. 

\eat{
During random contraction, the edge size will be bounded. By Lemma \ref{lem:contraction-size-bound}, the expected edge size of $G(q)$ is at most $n/(1-q)= O(n\lambda)$. So the recurrence of balanced case is
\[T(n,m) = m^{1+o(1)} + \tO(n^{1.5}) + q^{-\gamma }T(q^{\frac 23\gamma }n, O(n\lambda))\]
The solution is $T(n,m) = \tO(m^{1+o(1)}+n^{1.5}\lambda^{1+o(1)})$. Finally, we apply a sparsification step to reduce $\lambda$ to $\tO(1)$, so that the running time is $m^{1+o(1)}+\tO(n^{1.5})$.
}

\subsection{Overall Algorithm and Paper Organization} 


We now give an overview of the algorithm. We describe when the algorithm invokes its different components, and give a pointer to the section where each component is described and analyzed. 

Our overall algorithm builds a recursive computation tree. There are several base cases of this recursion that we describe first. 
\begin{itemize}
    \item The first base case is determined simply the number of vertices in the graph. If $n \le \tO(\eps^{-2})$, then we run Karger's algorithm \cite{Karger20} that gives an unbiased estimator for $\ugp$. The following theorem states the properties of this estimator: 
    \begin{theorem}[\cite{Karger20}]\label{thm:karger-estimator}
        Given a graph $G$ with vertex size $n$, an unbiased estimator of $\ugp$ with $O(1)$ relative variance can be computed in $\tO(n^2)$ time. As a consequence, a $(1\pm\eps)$-approximation to $\ugp$ can be computed in time $\tO(n^2\eps^{-2})$, which is $\tO(n^{1.5}\eps^{-3})$ for $n = \tO(\eps^{-2})$. 
    \end{theorem}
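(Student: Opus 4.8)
The plan is to recall Karger's two-regime estimator from \cite{Karger20} and re-derive its running time and relative-variance bound; the $(1\pm\eps)$-approximation and its running time then follow by a standard median-of-means boost together with a one-line arithmetic identity. First I would compute a constant-factor approximation of the minimum cut value $\lambda$ and use a $\poly(1/n)$ threshold on $\ugp$ to split into the \emph{unreliable} and \emph{reliable} cases. In the unreliable case the estimator is a single Monte Carlo trial — delete each edge independently with probability $p$ and output the indicator that $G$ disconnects — which is exactly unbiased for $\ugp$, runs in $O(m)=\tO(n^2)$ time, and has relative variance $O(1)$ once $\ugp$ is bounded below; averaging a few such trials handles the threshold. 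In the reliable case one estimates instead the partition function $\zgp$, which \cite{Karger20} shows equals $\ugp$ up to a $1+o(1)$ factor, and removes the residual bias via Karger's ``smaller cut'' correction so that the final output is a genuine unbiased estimator of $\ugp$.

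For the reliable case, pick $q\in(p,1)$ with $q^{-\lambda}=2$, contract each edge independently with probability $1-q$ to obtain a multigraph $H$ on fewer vertices, recursively estimate $u_H(p/q)$, average $q^{-\lambda}=2$ independent copies per recursion node, and brute-force the base case on a constant-size graph (which has $O(1)$ cuts). Unbiasedness is immediate cut by cut: a fixed cut of size $c$ survives contraction with probability $q^{c}$ and then fails in the recursive instance with probability $(p/q)^{c}$, so it contributes $p^{c}$, giving $\E[z_H(p/q)]=\zgp$, and this propagates up the tree. For the running time, the Karger--Stein size bound says one contraction step shrinks the vertex count to $q^{\lambda/2}n = n/\sqrt2$ in expectation, so with branching factor $2$ the recursion has $\tO(n^2)$ nodes with near-linear work each (after a sparsification making $m=\tO(n)$), and a concentration argument along the tree, as in \cite{karger1996new}, yields total time $\tO(n^2)$ rather than merely $\tO(n^2)$ in expectation.

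The hard part is the variance bound. The combinatorial input is a cut-counting estimate showing $\E[z_H(p/q)^2] \le (1+o(1))\,q^{-\lambda}\,\zgp^2$ in the reliable regime, since there are at most $n^{2\alpha}$ cuts of value $\alpha\lambda$ and their second-moment contribution is geometrically dominated by the minimum cuts. Plugging this into the law of total variance, $\var[\widehat Z]=\E[\var[\widehat Z\mid H]]+\var[z_H(p/q)]$, one level at a time and using that we average $q^{-\lambda}$ copies, the relative variance obeys a recursion whose naive solution is $O(\log n)$; the technical content of \cite{Karger20} is a more careful accounting (charging the per-level variance increments against the geometrically growing number of descendant leaves) that brings this down to $O(1)$. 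This is precisely the step I expect to be delicate and where I would lean on the cited analysis rather than redo it. Combining, the estimator is unbiased for $\ugp$, has $O(1)$ relative variance, and is computable in $\tO(n^2)$ time.

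Finally, averaging $\tO(\eps^{-2})$ independent copies of this estimator reduces the relative variance below $\eps^2/3$, so Chebyshev gives a $(1\pm\eps)$-approximation with constant probability, and the median of $\tO(1)$ such averages boosts this to hold whp; the cost is $\tO(n^2)\cdot\tO(\eps^{-2})=\tO(n^2\eps^{-2})$. When $n=\tO(\eps^{-2})$ we have $\sqrt n=\tO(\eps^{-1})$, hence $n^2\eps^{-2}=n^{1.5}\cdot\sqrt n\cdot\eps^{-2}=\tO(n^{1.5}\eps^{-3})$, which is the stated bound.
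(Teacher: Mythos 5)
This theorem is not proved in the paper: the first sentence is quoted verbatim from Karger~\cite{Karger20}, and the second sentence is only the standard median-of-means boost (\Cref{lem:mc-sample}) plus the one-line arithmetic $n^2\eps^{-2} = n^{1.5}\cdot\sqrt n\cdot\eps^{-2} = \tO(n^{1.5}\eps^{-3})$ when $\sqrt n = \tO(\eps^{-1})$. Your arithmetic at the end matches the paper exactly, and your overall reconstruction of the cited result (Monte Carlo for the unreliable regime; recursive contraction with $q^{-\lambda}=2$, branching factor $2$, Karger--Stein size bound, and a cut-counting second-moment estimate for the reliable regime) is the right skeleton.

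One conceptual wrinkle worth flagging: you establish ``unbiasedness'' by a cut-by-cut computation showing $\E[z_H(p/q)]=\zgp$, and you speak of ``removing residual bias'' so that the estimator becomes unbiased for $\ugp$. That is not how Karger's estimator works, and the framing risks a real error. The recursive estimator targets $u_H(p/q)$ directly (the base case brute-forces $u$, not $z$), and unbiasedness for $\ugp$ follows from the trivial two-stage sampling identity — deleting edges with probability $p$ is the same as deleting with probability $q$ and then $p/q$ (this is the paper's \Cref{lem:contract-unbiased}). There is no bias to ``correct.'' The quantity $z$ enters only in the variance analysis, where the inequalities $z - x \le u \le z$ let one bound the relative second moment of $u_H(p/q)$ by that of $z_H(p/q)$, which is tractable via cut counting (this is exactly what \Cref{lem:u-from-z} and \Cref{lem:karger-relvar} formalize). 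With that correction your sketch is consistent with the cited result; the rest is as the paper has it.
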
 

    \item The second base case is to run Monte Carlo sampling. Intuitively, this is done when the probability of the graph being disconnected is large. There are two subcases for Monte Carlo sampling: 
        \begin{itemize} 
            \item The first subcase is when $p > \theta$ for some threshold $\theta$ whose value will be given by \Cref{lem:z-approx-u}. In this case, we run a na\"ive Monte Carlo algorithm to obtain an estimator for $\ugp$ (\Cref{sec:naive}). The properties of the estimator are summarized below:
            
            \begin{restatable}{lemma}{mc}\label{lem:mc}
                For any $p\ge\theta$, an unbiased estimator of $\ugp$ with relative variance $O(1)$ can be computed in time $m^{1+o(1)}$. As a consequence, a $(1\pm\eps)$-approximation to $\ugp$ can be computed in $m^{1+o(1)}\eps^{-2}$ time under the condition that $p\ge\theta$.
            \end{restatable}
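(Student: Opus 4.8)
The plan is to use the classical naive Monte Carlo estimator together with variance reduction by averaging. Define the basic estimator $X$ as follows: delete each edge of $G$ independently with probability $p$, and set $X := 1$ if the resulting subgraph is disconnected and $X := 0$ otherwise. (If $G$ itself is disconnected then $u_G(p)=1$ and we output $1$ trivially, so assume $G$ is connected, hence $m \ge n-1$.) By the definition of unreliability, $\E[X] = u_G(p)$, so $X$ is unbiased; and since $X$ is Bernoulli with mean $u_G(p)$, its relative variance is $\frac{1-u_G(p)}{u_G(p)} < \frac{1}{u_G(p)}$. A single sample of $X$ can be produced in $O(m)$ time: $O(m)$ coin flips to choose the failed edges, followed by an $O(m)$-time BFS/DFS to test connectivity.

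Next I would drive the relative variance down to $O(1)$ by averaging enough independent copies of $X$. By the choice of the threshold $\theta$ (supplied by \Cref{lem:z-approx-u}), the hypothesis $p \ge \theta$ guarantees $u_G(p) \ge n^{-o(1)} =: L$, using that $u_G(\cdot)$ is monotone nondecreasing in $p$. We need not know $u_G(p)$ precisely: let $\hat X$ be the average of $k := \lceil 1/L \rceil = n^{o(1)}$ independent copies of $X$. Then $\E[\hat X] = u_G(p)$ and the relative variance of $\hat X$ is at most $\frac{1}{k}\cdot\frac{1}{u_G(p)} \le \frac{1}{kL} \le 1$. Computing $\hat X$ costs $k \cdot O(m) = n^{o(1)} \cdot O(m) = m^{1+o(1)}$ (using $n \le m+1$), which proves the first statement of the lemma.

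For the $(1\pm\eps)$-approximation I would apply the standard median-of-means amplification to $\hat X$: form $t = \Theta(\log n)$ independent blocks, each block being the mean of $s = \Theta(\eps^{-2})$ independent copies of $\hat X$, and output the median of the $t$ block means. Each block mean is unbiased with relative variance $O(\eps^2)$, so by Chebyshev it lies in $(1\pm\eps)\,u_G(p)$ with probability at least $2/3$; a Chernoff bound over the $t$ blocks then shows the median lies in $(1\pm\eps)\,u_G(p)$ with probability $1 - n^{-\Omega(1)}$. The total number of basic samples is $s\,t\,k = O(\eps^{-2}\log n)\cdot n^{o(1)}$, each taking $O(m)$ time, for a total running time of $m^{1+o(1)}\eps^{-2}$.

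I do not expect a real obstacle here; the estimator is the textbook one and the analysis is routine. The only point requiring care is that the sample count must remain $n^{o(1)}$ rather than polynomial, which is exactly what the value of $\theta$ in \Cref{lem:z-approx-u} is designed to ensure, since it pins down the lower bound $u_G(p) \ge n^{-o(1)}$ for $p \ge \theta$. If one prefers not to fix $k$ in advance using this bound, an adaptive doubling of $k$ until the empirical disconnection frequency stabilizes achieves the same guarantee.
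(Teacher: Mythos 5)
Your proposal is correct and matches the paper's proof: both use the naive Monte Carlo Bernoulli estimator with relative variance at most $1/u_G(p)$, average $n^{o(1)}$ independent copies using the bound $u_G(p) \ge u_G(\theta) = n^{-o(1)}$ from \Cref{lem:z-approx-u} (and monotonicity of $u_G$), and then apply standard median-of-means amplification (\Cref{lem:mc-sample}) to get the $(1\pm\eps)$-approximation. No meaningful difference in approach.
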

            
            \item The second subcase is when $p < \theta$ but $p^{\lambda} > n^{-0.5}$. In this case, we run a {\em two-step} Monte Carlo algorithm to obtain an estimator for $\ugp$ (\Cref{sec:two-step}). 
            The properties of the estimator are summarized below:

            \begin{restatable}{lemma}{mctwo}\label{lem:mc-2step}
                For $p$ such that $p<\theta$ and $p^\lambda > n^{-1/2}$, an unbiased estimator of $\ugp$ with relative variance $O(1)$ can be computed in $\tO(m+n^{1.5})$ time. As a consequence, a $(1\pm\eps)$-approximation to $\ugp$ can be computed in $\tO((m+n^{1.5})\eps^{-2})$ time under the condition that $p<\theta$ and $p^\lambda > n^{-1/2}$.
            \end{restatable}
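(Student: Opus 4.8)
The plan is to use the hypothesis $p^\lambda>n^{-1/2}$ to reduce the estimation of $\ugp$ to a na\"ive Monte Carlo computation run \emph{on a compressed graph}; the two ``steps'' are the compression and the Monte Carlo estimation on it. The starting point is that failing the minimum cut alone disconnects $G$, so $\ugp\ge p^\lambda>n^{-1/2}$. Hence the na\"ive estimator — fail each edge with probability $p$, test connectivity, output the fraction of disconnected trials — has per-trial relative variance at most $1/\ugp<n^{1/2}$, so $\tO(n^{1/2})$ trials already bring its relative variance down to $O(1)$. The only problem is the $O(m)$ cost of a connectivity test, which makes this $\tO(m\,n^{1/2})$ — too slow when $m\gg n$. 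The first step therefore compresses $(G,p)$, in $\tO(m)$ time, into a sparse instance on which each connectivity test costs only $\tO(n)$, essentially preserving the estimator's expectation and inflating its relative variance by at most a polylogarithmic factor.

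The compression has two parts. First, a reliability-preserving sparsification (as in~\cite{Karger99sparsify,Karger20}) replaces $G$ by a graph with $\tO(n)$ edges and minimum cut $\tO(1)$ — and replaces $p$ by a rescaled value — in such a way that $\ugp$ changes by at most a $1\pm o(1)$ factor; we continue to write $(G,p)$ for the result. Second, a single round of random contraction following Karger: fix $q\in(p,1)$, delete each edge independently with probability $q$, contract the surviving edges, and reinsert the deleted edges to form $H$. Since a cut $C$ survives into $H$ with probability $q^{|C|}$ and then fails under failure probability $p/q$ with probability $(p/q)^{|C|}$, whose product is $p^{|C|}$, the estimator $u_H(p/q)$ is unbiased for $\ugp$, the contraction inflates the relative variance by at most a factor $q^{-\lambda}$, and $\E[|E(H)|]=O(n/(1-q))$. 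The tension flagged in the introduction — that $q^{-\lambda}$ and $n/(1-q)$ cannot both be small unless $\lambda$ is polylogarithmic — is exactly why the minimum cut is shrunk first; with $\lambda=\tO(1)$ one takes $q=1-\Theta(1/\log n)$, so that $q^{-\lambda}=O(1)$ and $\E[|E(H)|]=\tO(n)$.

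The second step runs na\"ive Monte Carlo on $H$ with failure probability $p/q$ for $t=\Theta(n^{1/2})$ i.i.d.\ trials and outputs their average $\hat u$. By the law of total variance,
\[
\var(\hat u)=\E_H\!\big[\var(\hat u\mid H)\big]+\var_H\!\big(u_H(p/q)\big)\le\tfrac1t\,\ugp+O\!\big(q^{-\lambda}\big)\,\ugp^{2}=O(\ugp^{2}),
\]
using $\var(\hat u\mid H)\le\tfrac1t\,u_H(p/q)$ and $\E_H[u_H(p/q)]=\ugp$ for the first term, the standard bound on the single-round contraction estimator's relative variance (valid because $p<\theta$ puts us in the reliable regime) for the second, and $\tfrac1t\le\ugp$ together with $q^{-\lambda}=O(1)$ for the last equality. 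So $\hat u$ has relative variance $O(1)$ after $\tO(1)$ independent repetitions that absorb the polylogarithmic slack, and the running time is $\tO(m)$ for the compression plus $O(t)\cdot\tO(n)=\tO(n^{1.5})$ for the Monte Carlo phase, i.e.\ $\tO(m+n^{1.5})$. The ``as a consequence'' statement is then the standard median-of-averages amplification, which multiplies the running time by $\tO(\eps^{-2})$.

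The step I expect to require the most care is the compression: it must be simultaneously (near-)unbiased, sparse enough that a connectivity test costs $\tO(n)$, and variance-preserving up to polylogarithmic factors — and the $q^{-\lambda}$ versus $n/(1-q)$ trade-off is what forces the reliability-preserving reduction of the minimum cut to happen \emph{before} the contraction. One then has to check that these two operations compose without spoiling unbiasedness, or argue that the residual $1\pm o(1)$ error from sparsification is harmless. The rest — the total-variance bookkeeping above and verifying that $\ugp>n^{-1/2}$ dominates $1/t$ with enough polylogarithmic room for the repetitions — is routine.
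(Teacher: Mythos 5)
Your proposal takes a genuinely different route from the paper's. Both use a two-step scheme (contract once to $H\sim G(q)$, then run cheap Monte Carlo trials on $H$), but they differ in how each MC trial on $H$ is brought down to $\tO(n)$ time. You sparsify $G$ first so that $\lambda=\tO(1)$, which forces $\E[|E(H)|]=\tO(n)$ and lets you run each connectivity test na\"ively. The paper does \emph{not} sparsify: it keeps $\lambda$ arbitrary, takes $q^\lambda=\nicefrac12$ so that $\E[|E(H)|]=O(n\lambda)$, and then observes that $1-\nicefrac pq=O(\log n/\lambda)$, so only $O(n\log n)$ edges get contracted per trial; it directly generates that contracted set (Binomial count plus uniform sampling with replacement) rather than iterating over $E(H)$. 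Your approach trades that implementation trick for an extra sparsification layer; the two are comparable in cost, and yours is arguably conceptually simpler.

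There is, however, a real gap on the bias side. The sparsification you invoke (\Cref{lem:sparsify}, i.e.\ the cut-sparsifier of~\cite{Karger99sparsify}) only preserves cut values to a $1\pm\delta$ \emph{multiplicative} factor, so the rescaled instance approximates $\ugp$ but does not estimate it without bias. The lemma you are proving explicitly asks for an \emph{unbiased} estimator, and unbiasedness of the MC base case is used downstream (e.g.\ in \Cref{lem:contract-bias,lem:contract-var} the Monte Carlo base cases are treated as having zero bias, while the entire $0.1\eps$ bias budget is spent on the importance-sampling base case). You flag this yourself (``argue that the residual $1\pm o(1)$ error from sparsification is harmless''), but that is not an option here. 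The fix is to use the \emph{exactly unbiased} sparsification of \Cref{sec:sparsify}: pick each edge with probability $\alpha$, set $1-q'=(1-p)/\alpha$, and estimate $u_{\tilde G}(q')$, which equals $\ugp$ in expectation (\Cref{lem:sparsify-unbiased}) and has relative second moment $2+O(1/\log n)$ (\Cref{cor:sparsify-u-var}). With that substitution your composition $\text{sparsify}\to\text{contract}\to\text{MC}$ is unbiased, and \Cref{lem:relvar-multiply} controls the composed variance.

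One arithmetic slip worth fixing: after sparsification $\lambda_H=\Theta(\log^3 n)$, so $q=1-\Theta(1/\log n)$ gives $q^{-\lambda_H}=\exp(\Theta(\log^2 n))=n^{\Theta(\log n)}$, not $O(1)$. You want $q=1-\Theta(1/\lambda_H)=1-\Theta(1/\log^3 n)$, which simultaneously gives $q^{-\lambda_H}=O(1)$ and $\E[|E(H)|]=O(n/(1-q))=\tO(n)$, as your argument needs.
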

        \end{itemize}
    
    \item The final base case is the most interesting new contribution of this paper. This is invoked in the highly reliable setting, when $p^\lambda < 4n^{-3}$. In this case, we run an importance sampling algorithm on a spanning tree packing of the graph (\Cref{sec:importance}).
    We prove the following lemma (an estimator $X$ for $\ugp$ with relative bias $\delta$ satisfies $\E[X] \in (1\pm \delta)\ugp$):
    \begin{restatable}{lemma}{importance}\label{lem:interface-importance}
        For any $p$ such that $p^\lambda \le O(n^{-3})$, an estimator for $\ugp$ with relative bias $O\left(\frac{\log n}{\sqrt{n}}\right)$ and relative variance $O(1)$ can be computed in $m^{1+o(1)}+\tO(n^{1.5})$ time. As a consequence, a $(1\pm\eps)$-approximation to $\ugp$ can be computed in $(m^{1+o(1)}+\tO(n^{1.5}))\eps^{-2}$ time under the condition that $p^\lambda \le O(n^{-3})$.
    \end{restatable}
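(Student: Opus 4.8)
Since we are in the regime $p^\lambda\le O(n^{-3})$, which lies far below the Monte Carlo threshold, \Cref{lem:z-approx-u} already gives $\ugp=(1\pm o(1))\,\zgp$, so it suffices to estimate $\zgp$. I would then discard all cuts of value more than $3.5\lambda$: by the cut-counting bound (at most $n^{2\alpha}$ cuts of value $\le\alpha\lambda$), summing the resulting geometric series and using $p^\lambda\le O(n^{-3})$ together with $\zgp\ge p^\lambda$ shows $\sum_{|C|>3.5\lambda}p^{|C|}\le O(\log n/\sqrt n)\cdot\zgp$. Hence the goal becomes to $(1\pm\eps)$-estimate $z_G^{\le}(p):=\sum_{C\,:\,|C|\le 3.5\lambda}p^{|C|}$, with an additional $O(\log n/\sqrt n)$ relative bias. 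I would also first apply a sparsification step (as in prior work) so that $\lambda=\tO(1)$, adjusting $p$ accordingly; this changes $\ugp$ only by a $1\pm o(1)$ factor and is what keeps all tree-packing and per-tree costs within $\tO(n^{1.5})$.

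\textbf{An importance-sampling estimator on a tree packing.} Using a maximum tree packing (Gabow~\cite{Gabow95}, Karger~\cite{Karger00mincut}) I would build $\ell=\Theta(\lambda)$ spanning trees $T_1,\dots,T_\ell$ of edge congestion $2$, so that $\sum_{i=1}^\ell|C\cap T_i|\le 2|C|$ for every cut $C$. Consequently a cut $C$ with $|C|\le 3.5\lambda$ has $\sum_i|C\cap T_i|\le 7\lambda$, so by averaging a constant fraction of the $T_i$ are $7$-respected by $C$ (i.e.\ $|C\cap T_i|\le 7$); in particular each such cut $7$-respects at least one tree. Writing $Z_T:=\sum_{C'\,7\text{-resp }T}p^{|C'|}$, the estimator samples $i\in[\ell]$ uniformly, then samples a $7$-respecting cut $C$ of $T_i$ with probability $p^{|C|}/Z_{T_i}$, computes the total selection probability $w(C):=\tfrac1\ell\sum_{i\,:\,C\,7\text{-resp }T_i}p^{|C|}/Z_{T_i}$, and outputs $X:=p^{|C|}/w(C)$ if $|C|\le 3.5\lambda$ and $X:=0$ otherwise. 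Every cut of value $\le 3.5\lambda$ lies in the support, so $\E[X]=\sum_{|C|\le 3.5\lambda}w(C)\cdot p^{|C|}/w(C)=z_G^{\le}(p)$, i.e.\ $X$ is unbiased for $z_G^{\le}(p)$ and hence has relative bias $O(\log n/\sqrt n)$ for $\ugp$.

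\textbf{Bounding the relative variance.} The relative variance of $X$ is at most $\big(\sum_{|C|\le 3.5\lambda}p^{2|C|}/w(C)\big)/(z_G^{\le}(p))^2$. The key point is that $Z_T=O(\zgp)$ for every tree $T$: its value-$\le 3.5\lambda$ part is at most $z_G^{\le}(p)$, while its value-$>3.5\lambda$ part is over $O(n^7)$ cuts (each $7$-respecting cut is specified by $\le 7$ tree edges and a $2$-coloring) and is therefore at most $O(n^7)\,p^{3.5\lambda}$, which the same calculation as in the first step shows is $O(\log n/\sqrt n)\cdot\zgp$. Since each small cut $C$ is $7$-respected by $\Omega(\ell)$ of the $\ell$ trees, $w(C)=\Omega(p^{|C|}/\zgp)$, and plugging this in bounds the relative variance by $O\!\big(\zgp\sum_{|C|\le 3.5\lambda}p^{|C|}\big)/(z_G^{\le}(p))^2=O(1)$. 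To keep these selection probabilities large even on graphs with very few genuinely small cuts, I would first contract — along a Gomory--Hu tree — every maximal vertex set of internal edge-connectivity $>3.5\lambda$; this destroys no cut of value $\le 3.5\lambda$, hence preserves $z_G^{\le}(p)$, and forces the packing's $7$-respecting cuts to concentrate on the cuts that actually matter.

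\textbf{Efficient implementation, and the main obstacle.} What remains — and what I expect to be the technically hardest part — is to run the importance-sampling subroutine in $\tO(1)$ amortized time per sample without ever enumerating the $\approx n^7$ $7$-respecting cuts: (i)~precompute all $Z_{T_i}$, (ii)~sample $C$ with probability $p^{|C|}/Z_{T_i}$, and (iii)~evaluate $w(C)$. For (iii), computing $|C\cap T_j|$ for each $j$ reduces to $O(1)$ ancestor/LCA queries per tree. For (i) and (ii), I would precompute, for each tree, a data structure that expresses the number of graph edges crossing a $7$-respecting cut as a sum of precomputed quantities indexed by pairs of ``sides'' of the chosen tree edges, reducing both the weighted count $\sum p^{|C|}$ and the conditional sampling of one tree edge at a time to orthogonal range-sum queries in $\RR^2$; building these structures over the $\tO(1)$ trees is where the $\tO(n^{1.5})$ term arises, while sparsification and building the packing account for the $m^{1+o(1)}$ term. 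Finally, since $X$ has relative variance $O(1)$ and relative bias $O(\log n/\sqrt n)$, averaging $\tO(\eps^{-2})$ independent copies and taking a median-of-means yields a $(1\pm\eps)$-approximation to $\ugp$ whp within the stated running time.
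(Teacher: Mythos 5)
Your high-level plan tracks the paper closely — truncate at $3.5\lambda$, pack $\Theta(\lambda_H)$ trees with congestion $2$ in a sparsifier, exploit that $3.5\lambda$-weak cuts $7$-respect trees, and importance-sample $7$-respecting cuts. But your proposal has a genuine gap in the step you flag as "the technically hardest part," and it is not a detail: it changes what importance distribution you actually need, and the variance bound you get.

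You propose to sample, for each tree $T_i$, a $7$-respecting cut $C$ with probability proportional to $p^{|C|}/Z_{T_i}$ where $Z_{T_i}=\sum_{C'\,7\text{-resp }T_i}p^{|C'|}$. This distribution is what makes your relative variance come out as $O(1)$: since $w(C)=\Omega(p^{|C|}/z_G(p))$ for every small cut, you essentially implement the \emph{ideal} importance sampler, which is precisely what \Cref{lem:importance-ex-var} says one should do but cannot. The problem is that both computing $Z_{T_i}$ and sampling from $p^{|C|}/Z_{T_i}$ require summing $p^{|C(\chi)|}$ over all $\approx n^7$ subsets $\chi$ of $\le 7$ tree edges. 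Because $|C(\chi)|$ does not decompose additively over $\chi$ — e.g., for $|\chi|=2$ one has $|C(\{e_1,e_2\})|=|C(e_1)|+|C(e_2)|-2|C(e_1)\cap C(e_2)|$, and the crossing term depends jointly on $e_1,e_2$ — $p^{|C(\chi)|}$ is not a product of independent per-edge factors. Orthogonal range-sum queries, which you cite, answer $\sum_{(i,j)\in R}w_{ij}$ for \emph{fixed} weights $w_{ij}$; they cannot compute $\sum_{e_1,e_2}p^{|C(e_1)|}p^{|C(e_2)|}p^{-2|C(e_1)\cap C(e_2)|}$ (and similarly for $j\ge 3$), nor support sequential conditional sampling from such a distribution. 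Already for $j=2$ this looks like $\Omega(n^2)$ work, i.e., no improvement over the bound you are trying to beat; for $j=7$ it is far worse. You do not give an algorithm for this, and I do not see one.

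The paper avoids this entirely by using a coarse distribution — choose a random tree and then $j$ random tree edges uniformly with replacement — whose mass $q_j(C_i)$ (cf.\ \Cref{eq:qj,eq:q1}) depends only on $|C_i\cap T|$, which \emph{can} be computed in $\tO(1)$ per query via range queries. The price is that this coarse $q$ loses a factor of up to $\tO(n^{1.5})$ against the ideal distribution; this is recovered by drawing $\tO(n^{1.5})$ samples (which is exactly what sets the running time) and, crucially, by mixing in the Gomory--Hu-contracted distribution $q_1$ to boost the probability of $2$-respecting cuts, without which the max likelihood ratio would be $\Omega(n^2)$. Your Gomory--Hu contraction plays a different and (given your variance claim) unnecessary role; the paper's contraction to $\sqrt{n}$ super-nodes is a carefully tuned variance-reduction device for the coarse sampler, not a preprocessing step. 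So: your variance analysis is internally consistent and would in fact be \emph{stronger} than \Cref{lem:importance-relvar} — $O(1)$ versus $\tO(n^{1.5})$ — but it rests on an importance sampler you cannot implement within the time budget, and the discrepancy between your $O(1)$ bound and the paper's $\tO(n^{1.5})$ is itself a signal that something is missing: if your sampler existed, the whole algorithm would be near-linear, which the paper explicitly does not claim.
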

\end{itemize}

We have described the base cases, all of which are non-recursive algorithms. The remaining case is when $4n^{-3} \le p^{\lambda} \le n^{-0.5}$ and $p < \theta$. In this case, we run a step of recursive contraction (\Cref{sec:contraction}). In earlier works using recursive contraction, the number of edges is trivially bounded by $O(n^2)$ in recursive calls. Since we would like to impose the stricter bound of $\tO(n^{1.5})$ on the running time, we need the number of edges to also satisfy this bound. But, this may not hold in general in recursive steps. To restore this property, we occasionally interleave calls to a standard sparsification algorithm with the recursive contraction steps. This increases variance -- we bound it in \Cref{sec:sparsify} and use this bound in the analysis of recursive contraction in \Cref{sec:contraction}. We obtain the following lemma (the relative second moment of a random variable $X$ is defined as $\EE[X^2]/(\EE[X])^2$):
\begin{restatable}{lemma}{contract}\label{lem:interface-contract}
    Suppose $4n^{-3} \le p^{\lambda} \le n^{-0.5}$ and $p < \theta$. An estimator $X$ for $\ugp$ with relative bias $\le 0.1\eps$ and relative second moment $\le \log^{O(1)} n$ can be computed in $m^{1+o(1)}+\tO(n^{1.5}\eps^{-1})$ time.
\end{restatable}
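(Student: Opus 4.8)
The plan is to compute the estimator by one step of recursive contraction, recursed into a computation tree whose leaves are the base cases of \Cref{thm:karger-estimator}, \Cref{lem:mc}, \Cref{lem:mc-2step}, and \Cref{lem:interface-importance}. At a generic internal node, on a graph $G$ with failure probability $p<\theta$ and minimum cut $\lambda$, I would sample $H$ by deleting each edge independently with probability $q\in(p,1)$ (with $q^{-\lambda}=\Theta(1)$, feasible because $p^{\lambda}\le n^{-1/2}<q^{\lambda}$) and contracting the surviving edges, then recurse on several independent copies of $H$ and output the average of the children's estimates. Since $u_{H}(p/q)$ is an \emph{exactly} unbiased estimator of $\ugp$, unbiasedness is inherited level by level, so the relative bias of the final estimator equals the largest base-case relative bias encountered, which is $0$ except in the importance-sampling case where it is $O(\log n/\sqrt n)$; the threshold $n\le\tO(\eps^{-2})$ for invoking \Cref{thm:karger-estimator} is calibrated so that importance sampling is reached only when $O(\log n/\sqrt n)\le 0.05\eps$, keeping the accumulated bias within budget.

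The analytic heart is a one-step size/variance estimate. Writing $\zgp=\sum_{(S,\bar S)} p^{|\delta(S)|}$ for the partition function ($\delta(S)$ the edge boundary of the two-partition $(S,\bar S)$) and using that the cuts of $H$ are exactly the two-partitions of $V(G)$ whose boundary contains no contracted edge, one gets $\E_H[z_H(p/q)]=\zgp$ and, on squaring, $\E_H[z_H(p/q)^2]=\sum_{S,S'} p^{|\delta(S)|+|\delta(S')|}\,q^{-|\delta(S)\cap\delta(S')|}$; that is, the one-step relative second moment of $z_H(p/q)$ is the average of $q^{-|\delta(S)\cap\delta(S')|}$ over cut pairs weighted by $p^{|\delta(S)|+|\delta(S')|}$. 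Since $u\le z$ pointwise and $\zgp=(1+o(1))\ugp$ at internal nodes, the same quantity upper-bounds $\E_H[u_H(p/q)^2]$ up to $1+o(1)$, so it controls the variance in the recursion. I would then establish that for every such $G$ there is a parameter $\gamma=\gamma(G)\in[\tfrac34\lambda,\lambda]$ (equivalently, a choice of $q$ with $q^{-\lambda}\in[2,2^{4/3}]$) for which \emph{simultaneously} (A) $\E_H[|V(H)|]\le q^{2\gamma/3}\,n$ and (B) $\E_H[z_H(p/q)^2]\le q^{-\gamma}\,\zgp^2$. The two bounds interpolate between opposite regimes: if the near-minimum cuts are numerous with small pairwise overlaps, the weighted average in (B) is $1+o(1)$ and (B) holds with $\gamma$ near $\tfrac34\lambda$; if instead a single cut dominates $\zgp$ (the average in (B) approaching $q^{-\lambda}$, forcing $\gamma$ near $\lambda$), then $G$ is dense away from that cut, contraction destroys a larger constant fraction of the vertices, and (A) holds with this larger $\gamma$. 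I expect $\gamma$ to be pinned down quantitatively through the cut-counting bound $|\{S:|\delta(S)|\le\alpha\lambda\}|\le n^{2\alpha}$, with a single case split delivering both (A) and (B). \emph{This is the step I expect to be the main obstacle.}

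Given (A) and (B), the recursion is analyzed Karger--Stein style. Let each internal node spawn $k=q^{-\gamma}$ children (an integer $2$ after the tuning above, so there is no integrality loss); then by (B) the relative second moment $M_d$ of the estimator $d$ levels above the leaves satisfies $M_d\le 1+\tfrac{q^{-\gamma}M_{d-1}-1}{k}=M_{d-1}+(1-q^{\gamma})$, hence $M_{\mathrm{root}}=M_0+O(\log n)=\log^{O(1)}n$ since leaf estimators have $O(1)$ relative second moment. By (A), each level shrinks the vertex count by a constant factor $q^{2\gamma/3}<1$, so the depth is $O(\log n)$, and telescoping $\prod_i q^{-\gamma_i}=(\prod_i q^{2\gamma_i/3})^{-3/2}$ shows the total number of leaves down to the base size $n_0=\tO(\eps^{-2})$ is $\tO((n/n_0)^{3/2})$. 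A \Cref{thm:karger-estimator} leaf costs $\tO(n_0^2)$, so these contribute $\tO((n/n_0)^{3/2}n_0^2)=\tO(n^{3/2}n_0^{1/2})=\tO(n^{3/2}\eps^{-1})$; the Monte Carlo and importance-sampling leaves and all internal bookkeeping sum to $\tO(n^{3/2})$ plus one $m^{1+o(1)}$ term explained next. This gives running time $m^{1+o(1)}+\tO(n^{3/2}\eps^{-1})$ and relative second moment $\log^{O(1)}n$ for the recursion.

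It remains to handle the edge count: after a contraction step $|E(H)|$ can be $\Theta(n\lambda)$, which may exceed $\tO(n^{3/2})$. I would interleave cut-sparsification steps that rescale the current minimum cut down to $\tO(1)$, so $|E|=\tO(n)$ and the post-contraction bound $\Theta(n\lambda)=\tO(n)$: one such step on the input graph, costing $m^{1+o(1)}$ (the only place the input edge count enters), and a re-sparsification whenever $\lambda$ has grown by a constant factor thereafter (those later graphs have $\tO(n)$ edges, so each such step costs $\tO(n)$). Each sparsification perturbs every cut value, and hence $\ugp$, by a $1\pm o(1)$ factor and inflates variance; invoking the bounds of \Cref{sec:sparsify} I would bound the cumulative relative bias by $0.05\eps$ and the cumulative variance inflation by $\log^{O(1)}n$ and fold both into the recursion, yielding the claimed estimator: relative bias $\le 0.1\eps$, relative second moment $\le \log^{O(1)}n$, in time $m^{1+o(1)}+\tO(n^{3/2}\eps^{-1})$.
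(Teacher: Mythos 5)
Your high-level framework is the right one — bounding a one-step size estimate (A) $\E_H[|V(H)|]\le q^{2\gamma/3}n$ and a one-step variance estimate (B) $\E_H[z_H(p/q)^2]\le q^{-\gamma}\zgp^2$ for an interpolating parameter $\gamma\in[\tfrac34\lambda,\lambda]$, then running a Karger--Stein recursion — and your unwinding of the variance recursion $M_d\le M_{d-1}+(1-q^\gamma)$ and the leaf-count/cost telescoping are sound. But you have correctly flagged the real gap: you have no concrete construction of $\gamma$ for which (A) and (B) hold \emph{simultaneously for the same graph}, and the route you propose (the cut-counting bound $|\{S:|\delta(S)|\le\alpha\lambda\}|\le n^{2\alpha}$ plus a dichotomy between ``many near-min cuts with small overlap'' and ``one dominant cut'') does not obviously close it. Cut counting bounds cardinalities of weak cuts, but says nothing about the edge-to-vertex ratio of a contracted graph, and your heuristic ``if a single cut dominates $z$ then $G$ is dense away from that cut'' is not a theorem: a graph can have a unique minimum cut while still being sparse everywhere (e.g.\ a long path of bundles with one thin bundle). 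Moreover, your dichotomy leaves out the case of many near-minimum cuts with \emph{large} pairwise overlaps; the paper handles exactly this by splitting $\E[z_H^2]$ into a diagonal term, a small-overlap term, and a large-overlap term, and by killing the large-overlap term with the phase-transition bound $x_G(p)/z_G(p)\le 1/\log n$ — an ingredient missing from your sketch.

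The paper's construction of $\gamma$ is the missing idea. It builds a Gomory--Hu tree $Y$ of $G$, lets $S_k$ be the sum of the $k$ smallest edge weights in $Y$, and sets $\gamma=\min\{\lambda,\tfrac34\cdot S_k/k\}$ with $k=2^{-2/3}n$. This single definition delivers both halves: (A) follows because any supernode $W$ of a contracted graph $H$ with $|V(H)|>k$ has degree at least the weight of some surviving Gomory--Hu tree edge, which after summing over the $|V(H)|-1$ distinct such edges gives $|E(H)|/(|V(H)|-1)\ge\tfrac23\gamma$, and a coupling to a star graph turns this into the size bound $(2^{-2/3}+n^{-0.1})n$ whp; (B) follows from the convexity inequality $z_G(p)\ge k\cdot p^{S_k/k}\ge p^\gamma$ (which crucially uses $p^\lambda\ge 4n^{-3}$), after which the three-way split of $\E[z_H^2]$ gives a relative second moment of $2+O(1/\log n)$ rather than merely $q^{-\gamma}$. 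Without the Gomory--Hu tree you do not have a handle that simultaneously controls $z$ from below and vertex degrees from below. One additional small slip: sparsification is \emph{exactly} unbiased (if $1-q=(1-p)/\alpha$ then $u_{\tG}(q)$ is an unbiased estimator of $u_G(p)$), so it contributes variance but no bias; your reservation of a $0.05\eps$ bias budget for it is unnecessary, and the only bias source is the importance-sampling leaf, exactly as you said at the start.
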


We now show that \Cref{thm:main} follows from this lemma:
\begin{proof}[Proof of \Cref{thm:main}]
The base cases are immediate from \Cref{thm:karger-estimator} and \Cref{lem:mc,lem:mc-2step,lem:interface-importance}. So, we focus on the recursive contraction case.
Let $X$ be the estimator output by the recursive contraction algorithm.
From \Cref{lem:interface-contract}, we have $\eta[X]\le \tO(1)$. 
By standard techniques (see \Cref{lem:mc-sample}), we can run the algorithm $\tO(\eps^{-2})$ times to get a $(1\pm\frac{\eps}{2})$-approximation of $\E[X]$ whp.
Because $\E[X]$ is a $(1\pm 0.1\eps)$-approximation of $\ugp$, the aggregated estimator is a $(1\pm \eps)$-approximation of $\ugp$.
Each run takes $m^{1+o(1)}+\tO(n^{1.5}\eps^{-1})$ time, so the overall running time is $m^{1+o(1)}\eps^{-2}+\tO(n^{1.5}\eps^{-3})$.
\end{proof}


Finally, we describe how the algorithm decides which case it is in at any node of the computation tree. The first base case can be identified based on the size of the graph. If not in the first base case, we calculate the value of the minimum cut $\lambda$ in $\tO(m)$ time~\cite{Karger00mincut}. If $p^{\lambda} < 4n^{-3}$, we are in the third base case. We are left to identify the two Monte Carlo base cases. We need the value of $\theta$ for this determination. Unfortunately, we do not know of a way to efficiently calculate $\theta$. Therefore, we distinguish identify these cases indirectly. We first run the na\"ive Monte Carlo algorithm and calculate the estimator of $\ugp$ given by this algorithm (which is the empirical probability of disconnection). If the value of this estimator is at least $n^{-o(1)}$, then we can conclude that the estimator $(1\pm\eps)$-approximates $\ugp$ whp. If the estimator returns a smaller value, then we are in the case $p < \theta$. In this case, we calculate the value of $p^\lambda$ and depending on whether it exceeds $n^{-0.5}$, we either run the two-step Monte Carlo algorithm or a step of recursive contraction.


\section{Preliminaries}
\label{sec:prelim}
We give some known observations in this section that we use throughout the paper.

\paragraph{Phase Transition.} An important observation due to Karger~\cite{Karger20} is the so called {\em phase transition} property of network unreliability. Roughly speaking, this property says that there is a threshold $\theta$ on the value of $p$ such that (a) above this threshold, a na\"ive Monte Carlo algorithm is efficient, and (b) below this threshold, conditioned on the graph getting disconnected, the probability that a single cut fails is close to 1. We state this property precisely below (\Cref{lem:z-approx-u}).

Let $\mathcal{C}(G)$ be the family of cuts in $G$, where a cut is represented by the set of cut edges.  Let $z_G(p)$ be the number of failed cuts, and $x_G(p)$ be the number of failed cut pairs.
By linearity of expectation,
\[z_G(p)=\sum_{C_i\in \mathcal{C}(G)}p^{|C_i|}
\quad \text{and} \quad
x_G(p)=\sum_{C_i, C_j\in \mathcal{C}(G), C_i\ne C_j} p^{|C_i\cup C_j|}\].
When context is clear, we write $z=z_G(p)$ and we omit the index range $\mathcal{C}(G)$ from the sums.

We state the following phase transition lemma: 
\begin{lemma}\label{lem:z-approx-u}
There exists a threshold $\theta$ such that
\begin{enumerate}
    \item $u_G(\theta) = n^{-O(1/\log \log n)}$.
    \item When $p<\theta$, we have $\frac{x_G(p)}{z_G(p)} \le \frac{1}{\log n}$; therefore, $\left(1-\frac{1}{\log n}\right) z_G(p) \le \ugp \le z_G(p)$.\footnote{In this paper, all logarithms are with base 2 unless otherwise mentioned.}
\end{enumerate}
\end{lemma}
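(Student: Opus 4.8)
The plan is to derive both parts from the classical relationship among $\ugp$, $\zgp$, and $\xgp$, using cut‑counting to control the ratio $\xgp/\zgp$ and defining $\theta$ through the level sets of $\zgp$.

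\emph{Step 1: reduce part~2 to a ratio bound.} Two Bonferroni inequalities over the events ``$C_i$ fails'' give, for every $p$,
\[
\zgp - \tfrac12\,\xgp \;\le\; \ugp \;\le\; \zgp,
\]
the upper bound being a union bound. Hence, once we show $\xgp/\zgp \le 1/\log n$ for $p<\theta$, the inequality chain $(1-\tfrac1{\log n})\zgp \le \ugp \le \zgp$ in part~2 follows immediately. So it suffices to (a) choose $\theta$, (b) prove the ratio bound, and (c) check part~1.

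\emph{Step 2: choice of $\theta$.} Since $\zgp$ is a polynomial in $p$ with nonnegative coefficients and no constant term, it is continuous and strictly increasing on $[0,1]$; let $\theta$ be the unique value with $z_G(\theta)=n^{-c/\log\log n}$ for a constant $c$ fixed in Step~3. Granting the ratio bound (which, being a non‑strict inequality between continuous functions, also holds at $p=\theta$ by continuity), the Bonferroni bounds give $u_G(\theta)=(1\pm o(1))\,z_G(\theta)=n^{-\Theta(1/\log\log n)}$, which is part~1. It remains to prove: \emph{$z_G(p)\le n^{-c/\log\log n}$ implies $\xgp\le \zgp/\log n$} (in particular this covers all $p\le\theta$).

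\emph{Step 3: the ratio bound.} Use the standard correspondence between pairs of distinct cuts and small partitions: if $C_i,C_j$ are edge sets of distinct cuts, then $C_i\cup C_j$ is exactly the set of edges crossing the common refinement of the two bipartitions, a partition of $V$ into $3$ or $4$ nonempty parts, and each such partition $P$ is produced by only $O(1)$ ordered pairs; thus $\xgp = O\big(\sum_P p^{w(P)}\big)$ over $3$- and $4$-way partitions $P$ with crossing weight $w(P)$. Two ingredients then bound this relative to $\zgp$ without losing spurious $\poly(n)$ factors. First, fixing a cut $C_i$ with sides $S,\bar S$, the residual edge set $C_j\setminus C_i$ decomposes as a cut of $G[S]$ together with a cut of $G[\bar S]$, so $\sum_{j}p^{|C_i\cup C_j|}=p^{|C_i|}\sum_{j}p^{|C_j\setminus C_i|}$ factorizes over the two sides of $C_i$ into a product of subgraph partition functions; summing over $i$ yields a bound of the form $\xgp \le \zgp\cdot\big(\zgp + \rho_G(p)\big)$ with $\rho_G(p)$ a ``sharing'' correction term. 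Second, since an $r$-way partition has crossing weight $\ge \tfrac r2\lambda$ and, by the Karger--Stein cut‑counting bound, the number of $r$-way partitions of weight $\le\alpha\lambda$ is $n^{O(\alpha)}$, one checks $\zgp+\rho_G(p)\le n^{O(1/\log\log n)}\cdot z_G(p)^{\Omega(1)}$ in the regime $z_G(p)\le n^{-c/\log\log n}$; taking $c$ large enough makes this at most $1/\log n$.

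I expect the technical heart to be Step~3, specifically proving the ratio bound with the \emph{correct, loss‑free} dependence on $n$. The naive route---bounding $\zgp\ge p^\lambda$ from below and $\xgp\le n^{O(1)}p^{1.5\lambda}$ from above via cut‑counting---only yields a useful bound when $p^\lambda$ is polynomially small, whereas we need the ratio bound for $p$ with $p^\lambda$ as large as $n^{-1/2}$ (such $p$ can still lie below $\theta$). Avoiding this loss forces one to exploit the finer structure of how near‑minimum cuts overlap---the factorization/locality of $\sum_j p^{|C_i\cup C_j|}$ over the two sides of $C_i$ and a careful accounting of the sharing term---which is exactly where I would follow Karger's analysis most closely.
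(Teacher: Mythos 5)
Your Steps 1 and 2 are sound: the Bonferroni sandwich $\zgp - \tfrac12\xgp \le \ugp \le \zgp$ is correct (the paper uses the slightly weaker $\zgp - \xgp \le \ugp$, which suffices), and defining $\theta$ by calibrating $z_G(\theta)$ (rather than $\theta^\lambda$ as the paper does) is a legitimate alternative. You also correctly identify the genuine difficulty: the ratio bound must hold for $p$ with $p^\lambda$ only barely sub-polynomial (around $n^{-O(1/\log\log n)}$), so any argument that incurs a spurious $n^{O(1)}$ factor from cut counting is dead on arrival.

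The gap is that Step 3 is never actually proved. You sketch a plausible strategy (pairs of cuts $\leftrightarrow$ $3$- and $4$-way partitions, a side-by-side factorization of $\sum_j p^{|C_i\cup C_j|}$, and a claimed bound $\zgp + \rho_G(p) \le n^{O(1/\log\log n)}\cdot z_G(p)^{\Omega(1)}$), but none of these claims is verified, and the last one is exactly where the ``loss-free'' dependence on $n$ must be established --- you say so yourself and then defer to ``following Karger's analysis.'' As written, the proposal reduces the lemma to an unproved inequality that is at least as hard as the lemma. By contrast, the paper's proof does not re-derive this; it cites Theorems 5.1 and 7.1 of Karger (2020) as a black-box lemma giving the calibrated bound $\frac{x_G(p)}{z_G(p)} \le (p/s)^{\lambda/2}$ for $p<s$ (together with relations between $s$, $b$, and $z_{G,\alpha}(b)$), and then the threshold $\theta$ is fixed via $\theta^\lambda = s^\lambda/\log^2 n$ so that $(\theta/s)^{\lambda/2} = 1/\log n$ immediately. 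To repair your proposal you should either import that theorem explicitly (as the paper does), or actually carry out the factorization argument and the partition-function bound you gesture at; as it stands, the proposal's technical heart is missing.
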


In the very reliable case, i.e, $p < O(n^{-3})$, we can show a tighter bound for approximating $u$ with $z$:
\begin{lemma}\label{lem:z-approx-u-reliable}
    When $p^\lambda < O(n^{-3})$, we have $\frac{x_G(p)}{z_G(p)} \le O\left(\frac{\log n}{\sqrt{n}}\right)$; therefore, 
    \[
        \left(1-O\left(\frac{\log n}{\sqrt{n}}\right)\right) z_G(p) \le \ugp \le z_G(p).
    \]
\end{lemma}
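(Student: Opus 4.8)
The plan is to prove $\ugp\le\zgp$ and $\ugp\ge\bigl(1-O(\log n/\sqrt n)\bigr)\zgp$ separately. The upper bound is just the union bound: $\ugp=\Pr[\text{some cut fails}]\le\sum_C\Pr[C\text{ fails}]=\zgp$. For the lower bound I would use the Bonferroni inequality $\ugp\ge\zgp-\xgp/2$, so that it suffices to prove $\xgp\le O(\log n/\sqrt n)\,\zgp$. This is the same reduction that underlies \Cref{lem:z-approx-u} (where one only needs $\xgp\le\zgp/\log n$); the task is to sharpen that estimate using the stronger hypothesis $p^\BL\le O(n^{-3})$.

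To bound $\xgp=\sum_{C_i\ne C_j}p^{|C_i\cup C_j|}$ I would rely on two structural facts. First, for distinct cuts $C_i,C_j$ the symmetric difference $C_i\triangle C_j$ is again a nontrivial cut, and $|C_i\cup C_j|=\tfrac12\bigl(|C_i|+|C_j|+|C_i\triangle C_j|\bigr)$ with all three terms at least $\BL$; equivalently, $C_i\cup C_j$ is the edge boundary of a partition of $V$ into three or four nonempty parts. Second, Karger's cut-counting bound: there are at most $n^{2\alpha}$ cuts of value at most $\alpha\BL$, and likewise at most $n^{O(\alpha)}$ partitions into at most four parts with boundary at most $\alpha\BL$. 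The key quantitative improvement — and the step I expect to be the main obstacle — is that picking two of the three cuts of a triple $\{C_i,C_j,C_i\triangle C_j\}$ freely overcounts the ``additive triples'' of near-minimum cuts by a factor of up to $\sqrt n$; the cycle $C_n$ realizes this, having $\Theta(n^2)$ minimum cuts but only $\Theta(n^3)$ (not $\Theta(n^4)$) additive triples among them. To obtain the tight count I would invoke the cactus representation of (near-)minimum cuts, which should give that the number of additive triples among the cuts of value at most $c\BL$ is $\tO\bigl(N_c^{3/2}\bigr)$, where $N_c$ is the number of such cuts, and then argue that the contribution of cuts of value larger than $c\BL$ is absorbed once one uses the sharper lower bound $\zgp\ge N_c\,p^{c\BL}$ in place of $\zgp\ge p^\BL$ — this last step requires summing the geometric tails carefully, keeping the dependence on $\BL$.

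Finally I would put the pieces together. By linearity of expectation $\zgp\ge M\,p^\BL$, where $M\le\binom n2$ is the number of minimum cuts. The dominant part of $\xgp$ comes from triples of (nearly) minimum cuts, and is $\tO\bigl(M^{3/2}p^{3\BL/2}\bigr)$, so
\[
\frac{\xgp}{\zgp}\;\le\;\tO\!\left(\frac{M^{3/2}p^{3\BL/2}}{M\,p^\BL}\right)\;=\;\tO\!\left(\sqrt M\cdot p^{\BL/2}\right)\;\le\;\tO\!\left(\sqrt{\binom n2}\cdot\sqrt{O(n^{-3})}\right)\;=\;\tO\!\left(n^{-1/2}\right),
\]
where I used $M\le\binom n2$ and $p^\BL\le O(n^{-3})$; the polylogarithmic slack accumulated in the additive-energy and tail estimates is exactly what produces the $\log n$ factor in the final bound $O(\log n/\sqrt n)$. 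The two places where I anticipate real care will be needed are (a) establishing the additive-energy bound $\tO(N_c^{3/2})$ for near-minimum cuts and lifting it from pairs of cuts to three- and four-part partitions, and (b) verifying that the large-cut tail is genuinely negligible even when $\BL$ is large (equivalently, $p$ close to $1$), which prohibits treating quantities such as $1-n^{2/\BL}p$ as constants.
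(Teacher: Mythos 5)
Your plan follows the same \emph{reduction} as the paper (bound $\xgp/\zgp$ and apply Bonferroni), but the way you propose to bound $\xgp/\zgp$ is fundamentally different from what the paper does, and it contains genuine gaps.

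The paper's proof is a short calculation built entirely on a black-box result of Karger (Theorem 5.1 of \cite{Karger20}, quoted here as part of \Cref{lem:karger-z-approx-u}): for $p<s$ one has $\xgp/\zgp\le (p/s)^{\lambda/2}$, where $s$ is Karger's saturation threshold. One then lower-bounds $s^\lambda=\Omega(b^\lambda/\log^2 n)=\Omega(n^{-2}/\log^2 n)$ using $1/2=u_G(b)\le n^2 b^\lambda$, and concludes $(p/s)^{\lambda/2}=\sqrt{p^\lambda/s^\lambda}\le\sqrt{O(n^{-3})/\Omega(n^{-2}/\log^2 n)}=O(\log n/\sqrt n)$. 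No new combinatorics is needed — the refinement over \Cref{lem:z-approx-u} is purely numerical.

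Your route tries instead to \emph{reprove} the estimate $\xgp\lesssim\zgp\cdot n^{-1/2}$ from scratch, and the two load-bearing claims are unestablished. First, the ``additive-energy'' bound $E_3(c)\le\tO(N_c^{3/2})$ on the number of triples $(C_i,C_j,C_i\triangle C_j)$ of $c\lambda$-weak cuts: for \emph{minimum} cuts this is plausible via the cactus representation (the cycle graph shows it is tight), but the cactus does not describe cuts of value up to $3\lambda$ or more, and no argument is offered for the near-minimum regime. Second — and more seriously — the tail term is not negligible under any bookkeeping I can see with only the ingredients you list. Consider, say, pairs $(C_i,C_j)$ with both values near $1.5\lambda$: the generic cut-counting bound gives up to $n^3$ such cuts and hence $n^6$ such pairs, each contributing $\ge p^{1.5\lambda}$, for a total of up to $n^6 p^{1.5\lambda}$; with $p^\lambda\approx n^{-3}$ this is $\approx n^{1.5}$, while the lower bound $\zgp\ge Mp^\lambda$ with $M\le\binom n2$ only gives $\zgp\ge p^\lambda\approx n^{-3}$. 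More generally, stratifying by $|C_i\cup C_j|=\beta\lambda$, the naive count $n^{2\beta}$ of 4-way partitions with that boundary gives a per-stratum ratio $n^{2\beta}p^{\beta\lambda}/p^\lambda\le n^{3-\beta}$, which is super-polynomially large for $\beta$ close to $1.5$. Beating this requires exactly the refined counting of 3- and 4-way partitions with boundary $\le\beta\lambda$ (and the matching sharper lower bound on $\zgp$) that is the content of Karger's Theorem 5.1. You anticipate both of these difficulties yourself — you call them ``the main obstacle'' and flag them as the ``two places where real care will be needed'' — but they are not incidental details: they \emph{are} the proof. As written, the final display is a heuristic cancellation with the two hard lemmas assumed, so the proposal does not constitute a proof of \Cref{lem:z-approx-u-reliable}.
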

The bounds in above two lemmas easily follow from previously known bounds in \cite{Karger20}. We give proof of the lemmas for completeness in \Cref{sec:proofs}. 

\paragraph{Random Contraction, Sparsification, and Near-Minimum Cuts.} Next, we give some standard results in graph connectivity related to random contractions, counting near-minimum cuts, and graph sparsification. First is a standard bound on the number of near-minimum cuts. Let $\lambda$ be the value of a minimum cut. Let the value of a cut $C_i$ be denoted $c_i$. We call a cut $C_i$ $d$-strong if $c_i\ge d$ and $d$-weak if $c_i\le d$.

\begin{lemma}[Lemma 3.2 of \cite{Karger00mincut}]\label{lem:cut-counting}
The number of $\alpha\lambda$-weak cuts in a graph with minimum cut value $\lambda$ is at most $n^{\lfloor 2\alpha\rfloor}$ for any $\alpha \ge 1$.
\end{lemma}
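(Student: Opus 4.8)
The plan is to prove this by Karger--Stein-style random edge contraction; although the statement is quoted from \cite{Karger00mincut}, here is the proof I would give. Fix an $\alpha\lambda$-weak cut $C$, say of value $c = c_C \le \alpha\lambda$, and run the random contraction process on $G$: repeatedly pick a uniformly random multiedge and contract it, stopping once $k := \lfloor 2\alpha\rfloor$ supervertices remain (if $k \ge n$ there is nothing to do and the trivial bound $2^{n-1}-1 \le n^{n-1}$ already suffices). I would first lower bound the probability that $C$ \emph{survives}, i.e.\ that no edge of $C$ is ever contracted. The key point is that throughout the process the current multigraph has minimum cut at least $\lambda$, since every one of its cuts is also a cut of $G$; hence when it has $i$ supervertices it has at least $i\lambda/2$ edges, so a uniformly random edge lies in $C$ with probability at most $\frac{c}{i\lambda/2} \le \frac{2\alpha}{i}$. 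Multiplying over the stages $i = n, n-1, \dots, k+1$ gives
\[
  \Pr[\,C \text{ survives}\,] \ \ge\ \prod_{i=k+1}^{n}\Bigl(1-\frac{2\alpha}{i}\Bigr),
\]
which, in the case where $2\alpha = k$ is an integer, telescopes exactly to $1/\binom{n}{k}$.

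Next comes a double-counting step. In any single run of the process, a cut of $G$ survives if and only if each of its two sides is a union of the $k$ final supervertices, so the surviving cuts are exactly the $2^{k-1}-1$ bipartitions of those supervertices. Hence $\sum_{C:\,c_C\le\alpha\lambda} \mathbf{1}[\,C\text{ survives}\,] < 2^{k-1}$ deterministically, and taking expectations, $\sum_{C:\,c_C\le\alpha\lambda}\Pr[\,C\text{ survives}\,] < 2^{k-1}$. Dividing by the per-cut lower bound $1/\binom{n}{k}$ shows that the number of $\alpha\lambda$-weak cuts is less than $2^{k-1}\binom{n}{k} < \frac{2^{k-1}}{k!}\,n^{k}$; since $\alpha\ge1$ forces $k\ge2$, we have $2^{k-1}\le k!$, so this is at most $n^{k}=n^{\lfloor 2\alpha\rfloor}$, as required.

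The step I expect to be the main obstacle is obtaining the exponent $\lfloor 2\alpha\rfloor$ exactly --- with no stray constant, $\log$, or $\lambda$ factor --- for a general real $\alpha$. The clean telescoping above exploits that $2\alpha$ is an integer, which is precisely the regime this paper uses (e.g.\ $\alpha=3.5$, giving $7$-respecting cuts). For $\alpha$ with $k=\lfloor 2\alpha\rfloor < 2\alpha$, the product $\prod_{i=k+1}^{n}(1-2\alpha/i)$ decays like $n^{-2\alpha}$ rather than $n^{-k}$, so vanilla contraction only yields $O(n^{2\alpha})$; pushing this down to $n^{\lfloor 2\alpha\rfloor}$ needs a more careful argument, e.g.\ exploiting that cut values are integers to reduce to the half-integral case, or a structural argument about the family of near-minimum cuts as in \cite{Karger00mincut}. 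The other mildly delicate point is the constant bookkeeping: one must keep the $2^{k-1}/k!$ factor at most $1$ so that the final bound is $n^{\lfloor 2\alpha\rfloor}$ on the nose rather than merely $O(n^{\lfloor 2\alpha\rfloor})$.
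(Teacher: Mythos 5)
Your contraction-and-double-counting argument is correct for the case that $2\alpha$ is an integer: the survival probability telescopes to $1/\binom{n}{k}$, the final graph has at most $2^{k-1}-1$ cuts, and the constant $2^{k-1}/k!\le 1$ for $k\ge 2$ gives exactly $n^{k}$. This is the standard Karger/Karger--Stein argument and you execute it cleanly. But as you yourself flag, the non-integer case is a genuine gap, not a bookkeeping detail: $\prod_{i=k+1}^{n}(1-2\alpha/i)$ really does decay like $\Theta(n^{-2\alpha})$, so the best the vanilla contraction argument gives is $O(n^{2\alpha})$, and no amount of tightening the $2^{k-1}/k!$ constant can recover the missing $n^{\,2\alpha-\lfloor2\alpha\rfloor}$ factor. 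Your suggested repair via integrality of cut values also does not obviously work: a cut of value $c\le\alpha\lambda$ with $c,\lambda$ integers is only guaranteed $c\le\lfloor\alpha\lambda\rfloor$, and $\lfloor 2\lfloor\alpha\lambda\rfloor/\lambda\rfloor\le\lfloor 2\alpha\rfloor$ goes the wrong way (you would need a reduction to a half-integral $\alpha'\ge\alpha$ with $\lfloor2\alpha'\rfloor=\lfloor2\alpha\rfloor$, which is false). Closing the gap requires the more careful argument in Karger's paper itself, which this paper simply cites.

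One factual correction to your closing remark: this paper does rely on the non-integer case. In the proof of its Lemma on truncating $z_G(p)$, it bounds the number of cuts with value in the half-open interval $\left[\tfrac{k}{2}\lambda,\tfrac{k+1}{2}\lambda\right)$ by $n^{k}$, i.e.\ it invokes the lemma with $\alpha$ strictly between $k/2$ and $(k+1)/2$ and uses $\lfloor 2\alpha\rfloor=k$; using the integer-$2\alpha$ bound $n^{k+1}$ here would lose a factor of $n$ and break the estimate $z^{\ge 3.5}\le O(n^{-0.5})\cdot z_G(p)$. So the floor in the exponent is load-bearing, and a complete proof of the lemma cannot simply assume $2\alpha\in\mathbb{Z}$.
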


We also use the following standard sparsification result:
\begin{lemma}[Corollary 2.4 of \cite{Karger99sparsify}]\label{lem:sparsify}
Given an unweighted undirected graph $G$ with min-cut value $\lambda$ and any parameter $\delta\in(0,1)$, there exists $\alpha=O\left(\frac{\log n}{\delta^2\lambda}\right)$ such that if a subgraph $H$ is formed by picking each edge independently with probability $\alpha$ in $G$, then the following holds whp: for every cut $S$, its value in $H$ (denoted $d_H(S)$) and its value in $G$ (denoted $d_G(S)$) are related by $d_H(S)\in [(1-\delta)\cdot \alpha\cdot d_G(S), (1+\delta)\cdot \alpha\cdot d_G(S)]$. Note that this implies that the min-cut value in $H$ is $\lambda_H = O(\log n/\delta^2)$.

The running time of the sparsification algorithm is $O(m)$.
\end{lemma}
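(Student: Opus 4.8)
The plan is to realize the estimator by a single step of recursive contraction --- recursing into the base cases of \Cref{thm:karger-estimator} and \Cref{lem:mc,lem:mc-2step,lem:interface-importance} --- but with a sharper parametrization of the contraction recurrence, plus an occasional sparsification step to control the edge count. First, since $p<\theta$, \Cref{lem:z-approx-u} lets us replace the target $\ugp$ by the partition function $z:=z_G(p)$ at a relative cost $O(1/\log n)$; this, together with the biases of the base-case leaves (each within $O(\log n/\sqrt{n'})$ or $O(1/\log n')$ of the $z$-value of its sub-instance, by \Cref{lem:z-approx-u,lem:z-approx-u-reliable}) and the perturbation introduced by sparsification (discussed below), is what has to be kept within the $0.1\eps$ budget. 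We compute $\lambda$ in $\tO(m)$ time via \cite{Karger00mincut}, pick $q\in(0,1)$ with $q^{-\lambda}=2$, form the random minor $H$ by deleting each edge of $G$ independently with probability $q$, contracting the surviving edges, and restoring the deleted edges on the contracted vertex set, and recurse on $(H,p/q)$. A cut $C$ of $G$ remains a cut of $H$ exactly when all $|C|$ of its edges were deleted (probability $q^{|C|}$), in which case it contributes $(p/q)^{|C|}$ to $z_H(p/q)$, so $\EE[z_H(p/q)]=\sum_C q^{|C|}(p/q)^{|C|}=z$: the one-step estimator is \emph{exactly} unbiased for $z$, hence so is the whole recursion up to the leaf and sparsification biases.

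The technical core is to exhibit, for every input, a single parameter $\gamma\in[\tfrac34\lambda,\lambda]$ for which \emph{both} the relative variance of one contraction step is $\le q^{-\gamma}(1+o(1))$ \emph{and} the expected number of vertices of $H$ is $\le q^{2\gamma/3}\cdot n$. Expanding $\EE[z_H(p/q)^2]/z^2=z^{-2}\sum_{C,C'}p^{|C|+|C'|}q^{-|C\cap C'|}$, the diagonal terms give the crude bound $q^{-\lambda}$, tight for a dumbbell-like graph but loose whenever the mass of $z$ is not concentrated on a tiny set of near-minimum cuts. I would make this quantitative by grouping cut pairs by $|C\cap C'|$ and by $|C|,|C'|$ and bounding, via \Cref{lem:cut-counting}, both the number of $\alpha\lambda$-weak cuts and the number of weak cuts overlapping a fixed cut in a fixed pattern; the resulting bound is governed by how concentrated $z$ is on near-minimum cuts, and the \emph{same} dichotomy --- few near-minimum cuts versus many near-disjoint ones --- controls how quickly random contraction collapses the vertex set, yielding the matching $q^{2\gamma/3}\cdot n$ bound. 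Feeding these into the recurrence $T(n)=q^{-\gamma}\cdot T(q^{2\gamma/3} n)$, whose homogeneous solution is $n^{3/2}$ \emph{independent of $\gamma$} and whose depth is $O(\log n)$ (since $q^{-\lambda}=2$ pins $q^{2\gamma/3}$ into a fixed subinterval of $(0,1)$), gives the running time modulo the edge-count issue below.

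Two remaining points finish the argument. For variance composition, $q^{-\gamma}\le q^{-\lambda}=2$, so at each internal node we average only $R=O(1)$ independent recursive copies --- with the constant chosen so that the per-level recurrence $\eta(n)\le(1-\Omega(1))\,\eta(q^{2\gamma/3}n)+O(1)$ contracts to $\eta=O(1)$ --- seeded by the $O(1)$-relative-variance base cases, at only an $O(1)$ multiplicative cost per level. For the edge count, recursive minors need not satisfy $m_H=\tO(n_H)$, so we interleave sparsification (\Cref{lem:sparsify}) with accuracy $\delta=\tilde\Theta(\sqrt\eps)$: the point is that a random sparsifier preserves each cut value \emph{in expectation}, so the multiplicative perturbation it induces on $z$ is second order, $e^{O(\delta^2\,\polylog n)}-1=O(\eps)$ in this regime, rather than the worst-case $O(\delta\,\polylog n)$; this lets $\delta$ be as large as $\tilde\Theta(\sqrt\eps)$ and so reduces the minimum cut --- hence the per-vertex sparse edge count --- to $\tO(\delta^{-2})=\tO(\eps^{-1})$. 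Combined with the $\tO(\eps^{-4})$-time Karger base case (invoked at leaves of size $\tO(\eps^{-2})$, of which there are $\tO(n^{1.5}\eps^{3})$), this is the source of the $\eps^{-1}$ factor, while the $m^{1+o(1)}$ term collects the initial min-cut and sparsification computations. Sparsification also injects variance, bounded in \Cref{sec:sparsify}; because that variance scales with $p^{\lambda_H}$, which decays geometrically down the recursion (contraction can only raise the minimum cut), the per-level contributions telescope to the top-level bound $\polylog n$, which is why the final relative second moment is $\polylog n$ rather than $O(1)$.

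The obstacle I expect to dominate is the simultaneous variance/size trade-off of the second paragraph: proving that one $\gamma\in[\tfrac34\lambda,\lambda]$ makes both the $q^{-\gamma}$ and the $q^{2\gamma/3}n$ bounds hold requires a fairly delicate joint accounting of how overlapping families of near-minimum cuts affect $\EE[z_H(p/q)^2]$ and $\EE[n_H]$ at once. A secondary difficulty is calibrating the sparsification schedule --- I anticipate re-sparsifying whenever $n_H$ has shrunk by a $\polylog$ factor and leaning on the geometric decay of $p^{\lambda_H}$ to keep the injected variance summing to $\polylog n$ while the per-node work stays $\tO(n_H\eps^{-1})$.
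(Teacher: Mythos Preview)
Your proposal is aimed at the wrong statement. The lemma you were asked to address is \Cref{lem:sparsify}, the standard cut-sparsification result: sample each edge with probability $\alpha=O(\log n/(\delta^2\lambda))$ and whp every cut is preserved to within a $(1\pm\delta)$ factor. The paper does not prove this lemma at all; it is quoted verbatim as Corollary~2.4 of \cite{Karger99sparsify} and used as a black box. A correct response here is simply to note that the claim follows from Chernoff on each cut plus a union bound over the $n^{O(\alpha)}$ many $\alpha\lambda$-weak cuts (\Cref{lem:cut-counting}), summed over scales --- this is exactly Karger's original argument.

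What you wrote instead is a sketch of \Cref{lem:interface-contract}, the recursive-contraction lemma. Even judged on that basis, there are substantive gaps relative to the paper's proof. First, you set $q^{-\lambda}=2$, whereas the paper sets $q^{-\gamma}=2$ with $\gamma$ defined \emph{concretely} from the Gomory--Hu tree as $\gamma=\min\{\lambda,\tfrac34 S_k/k\}$ for $k=2^{-2/3}n$; this explicit definition is what makes the simultaneous variance bound (\Cref{lem:relvar-1step}) and shrinkage bound (\Cref{lem:gamma-deg-bound}, \Cref{lem:contract-size-dec}) provable, and your proposal to ``exhibit'' such a $\gamma$ via cut-counting on overlapping families is not the mechanism the paper uses. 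Second, your sparsification accuracy $\delta=\tilde\Theta(\sqrt{\eps})$ and the accompanying ``second-order'' bias story are not what the paper does: it takes $\delta=1/\log n$, shows sparsification is an \emph{exactly unbiased} estimator of $\ugp$ via the coupling $1-q=(1-p)/\alpha$ (\Cref{lem:sparsify-unbiased}), and bounds its relative second moment directly (\Cref{lem:sparsify-before-contract}). The $\eps^{-1}$ factor in the running time comes instead from the $\tO(\eps^{-2})$ threshold for the Karger base case, not from the sparsifier's edge count.
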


Finally, we state a bound on the expected number of uncontracted edges after random edge contractions. This was previously used (and proved) in the celebrated linear-time randomized MST algorithm of Karger, Klein, and Tarjan.
\begin{lemma}[Lemma 2.1 of \cite{KargerKT95}]\label{lem:contraction-size-bound}
Given an undirected multigraph, if we contract each edge independently with probability $\pi$, then the expected number of uncontracted edges is at most $n/\pi$. 
\end{lemma}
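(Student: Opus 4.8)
The plan is to prove the bound by the standard deferred-decisions argument of~\cite{KargerKT95}. Write $S$ for the (random) set of contracted edges, so each edge lies in $S$ independently with probability $\pi$, and call an edge \emph{uncontracted} if its two endpoints lie in distinct connected components of $(V,S)$ (equivalently, it survives as a genuine non-loop edge after the contraction). Fix an arbitrary order $e_1,\dots,e_m$ on the edges, and for $0\le i\le m$ let $\mathcal P_i$ be the partition of $V$ into connected components of $\bigl(V,\;S\cap\{e_1,\dots,e_i\}\bigr)$; thus $\mathcal P_0$ is all singletons and $\mathcal P_m$ is exactly the partition that defines the contraction. Say $e_i$ is \emph{active} if its two endpoints lie in different blocks of $\mathcal P_{i-1}$. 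Since adding edges only merges components, $\mathcal P_{i-1}$ refines $\mathcal P_m$, so an edge whose endpoints are separated in $\mathcal P_m$ is certainly separated in $\mathcal P_{i-1}$; hence every uncontracted edge is active, and it suffices to show $\E\bigl[\#\{\text{active edges}\}\bigr]\le n/\pi$.

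Next I would bound the number of active edges by coupling with a Bernoulli process, revealing the indicators $\mathbbm 1[e_i\in S]$ one at a time in the order $i=1,\dots,m$ but only for the active $e_i$'s. This is well defined because whether $e_i$ is active is determined solely by the previously revealed indicators (it depends only on $S\cap\{e_1,\dots,e_{i-1}\}$), and the indicators revealed this way are i.i.d.\ $\mathrm{Bernoulli}(\pi)$. The key structural fact is that each time an active edge turns out to lie in $S$, the two blocks holding its endpoints merge, so the number of blocks of the running partition drops by exactly one; this can happen at most $n-1$ times, and after it has happened $n-1$ times the partition is a single block and no further edge can be active.

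To turn this into an expectation bound cleanly, couple the successive revealed indicators with a single infinite i.i.d.\ $\mathrm{Bernoulli}(\pi)$ sequence $b_1,b_2,\dots$ (feed $b_j$ to the $j$-th active edge encountered), and set $\tau=\min\{t:\sum_{j\le t}b_j=n-1\}$. By the fact above, the process encounters at most $\tau$ active edges. Since $\tau$ is a negative binomial random variable — the number of $\mathrm{Bernoulli}(\pi)$ trials needed to collect $n-1$ successes — we get $\E[\tau]=(n-1)/\pi$, and therefore
\[
\E\bigl[\#\{\text{uncontracted edges}\}\bigr]\;\le\;\E\bigl[\#\{\text{active edges}\}\bigr]\;\le\;\E[\tau]\;=\;\frac{n-1}{\pi}\;\le\;\frac n\pi .
\]

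The only genuinely delicate point is the middle step: because the collection of active edges is itself random and is correlated with the coin flips being revealed, one has to state ``the number of active edges is dominated by the time to see $n-1$ successes'' with some care. The explicit coupling with a fixed infinite Bernoulli sequence handles this, and it is the one place that needs an argument rather than routine bookkeeping — everything else is the refinement observation, linearity of expectation, and the mean of a negative binomial.
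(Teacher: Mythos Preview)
Your proof is correct and is precisely the deferred-decisions argument of Karger--Klein--Tarjan that the paper cites; the paper itself does not give a proof of this lemma but simply quotes it as Lemma~2.1 of \cite{KargerKT95}. The one point you flagged as delicate---that activity of $e_i$ depends only on the indicators of the \emph{active} edges among $e_1,\dots,e_{i-1}$, since an inactive $e_j$ leaves $\mathcal P_j=\mathcal P_{j-1}$ regardless of its coin---is handled correctly, and the coupling with an infinite $\mathrm{Bernoulli}(\pi)$ sequence is exactly how the original paper makes the stopping-time bound rigorous.
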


\paragraph{Gomory-Hu Tree.} Next, we recall the definition of a Gomory-Hu tree ~\cite{gomory1961multi}, which is used in various parts of the paper:
\begin{defn}\label{def:gomory-hu}
The {Gomory-Hu tree} \cite{gomory1961multi} of an undirected graph $G = (V, E)$ is a (weighted) tree $Y$ on the same set of vertices $V$ such that for every pair of vertices $s, t\in V$, the minimum $(s, t)$-cut in $Y$ (which is simply the minimum weight edge on the unique $s-t$ path in $Y$) is also a minimum $(s, t)$-cut in $G$, and has the same value.
\end{defn}

\paragraph{Relative Variance and Relative Bias.} The {\em relative variance} of our estimators will play an important role in our analysis. We define this below and state some standard properties. 
\begin{defn}
The {\em relative variance} of a random variable $X$, denoted $\relv[X]$, is defined as the ratio of its variance and its squared expectation, i.e., $\relv[X] = \frac{\var[X]}{\ex^2[X]} = \frac{\ex[X^2]}{\ex^2[X]}-1$. 
We also define {\em relative second moment} of $X$ as $\frac{\ex[X^2]}{\ex^2[X]} = \relv[X]+1$.
\end{defn}
Similar to variance, relative variance can be decreased by taking multiple independent samples.
\begin{fact}[Lemma I.4 of \cite{Karger17}]\label{fact:rel-var-decrease}
The average of $N$ independent samples of $X$ has relative variance $\frac{\eta[X]}{N}$.
\end{fact}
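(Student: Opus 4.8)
The plan is to reduce the claim to two elementary facts: that expectation is linear, and that the variance of a sum of independent random variables is additive (together with the scaling rule $\var[cY]=c^2\var[Y]$). Let $X_1,\dots,X_N$ be the $N$ independent samples of $X$, and let $\bar X = \frac1N\sum_{i=1}^N X_i$ denote their average; the goal is to show $\relv[\bar X]=\relv[X]/N$. I would unfold the definition $\relv[\bar X]=\var[\bar X]/\ex^2[\bar X]$ and compute the numerator and denominator separately.

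For the denominator, linearity of expectation gives $\ex[\bar X]=\frac1N\sum_{i}\ex[X_i]=\ex[X]$, so $\ex^2[\bar X]=\ex^2[X]$; in particular the mean is preserved by averaging. For the numerator, independence of the $X_i$ lets me split the variance of the sum into a sum of variances, and the scalar $\frac1N$ comes out of the variance squared, so $\var[\bar X]=\frac{1}{N^2}\var\!\big(\sum_i X_i\big)=\frac{1}{N^2}\sum_i\var[X_i]=\frac{1}{N^2}\cdot N\var[X]=\frac{\var[X]}{N}$. Dividing, $\relv[\bar X]=\frac{\var[X]/N}{\ex^2[X]}=\frac1N\cdot\frac{\var[X]}{\ex^2[X]}=\frac{\relv[X]}{N}$, which is exactly the claim.

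I do not expect any genuine obstacle; the proof is essentially a one-line manipulation once the definitions are written out, and the ``hard part'' is merely bookkeeping. The only points worth stating explicitly are that the argument needs only pairwise independence of the $X_i$ (so that the covariance cross-terms in $\var(\sum_i X_i)$ vanish), and that $\relv$ is well defined precisely when $\ex[X]\ne 0$ — which holds for every estimator to which this fact is applied in the paper, since all of them are nonnegative with strictly positive expectation.
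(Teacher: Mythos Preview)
Your proof is correct and is the standard elementary argument. The paper does not give its own proof of this fact; it simply cites it from \cite{Karger17}, so there is nothing to compare against beyond noting that your derivation is exactly what one would expect.
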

This leads to the following property:
\begin{lemma}[Lemma I.2 of \cite{Karger17}]\label{lem:mc-sample}
Fix any $\eps, \delta \in (0, 1)$. For a random variable $X$ with relative variance $\relv[X]$,  the median of $O\left(\log\frac{1}{\delta}\right)$ averages of $O\left(\frac{\relv[X]}{\eps^2}\right)$ independent samples of $X$ is a $(1\pm \eps)$-approximate estimation of $\E[X]$ with probability $1-\delta$.
\end{lemma}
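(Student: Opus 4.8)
This is the standard ``median of means'' boosting lemma, so the plan is the textbook two-stage argument: first drive the variance down by averaging, then drive the failure probability down by taking a median. Throughout write $\mu=\E[X]$, and note that the statement presumes $\mu\neq 0$ and $\relv[X]<\infty$ (which holds in all our applications, where the estimators are nonnegative with positive mean).

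\emph{Stage 1 (averaging, via Chebyshev).} Set $N=\lceil 3\,\relv[X]/\eps^2\rceil$ and let $\bar X$ be the average of $N$ i.i.d.\ copies of $X$. By \Cref{fact:rel-var-decrease}, $\relv[\bar X]=\relv[X]/N\le \eps^2/3$, i.e.\ $\var[\bar X]\le (\eps^2/3)\,\mu^2$. Chebyshev's inequality then gives $\Pr\big[\,|\bar X-\mu|>\eps|\mu|\,\big]\le \var[\bar X]/(\eps^2\mu^2)\le \tfrac13$, so a single such average is a $(1\pm\eps)$-approximation of $\mu$ with probability at least $2/3$. (Any constant strictly below $1/2$ would do; the value $1/3$ is convenient for Stage 2.)

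\emph{Stage 2 (median, via Chernoff).} Take $k=\lceil C\log(1/\delta)\rceil$ independent averages $\bar X_1,\dots,\bar X_k$, each formed as in Stage 1, and let $Y=\operatorname{median}(\bar X_1,\dots,\bar X_k)$; the constant $C$ will be fixed below. Call index $i$ \emph{bad} if $|\bar X_i-\mu|>\eps|\mu|$; by Stage 1 the indices are bad independently, each with probability at most $1/3$. The key deterministic observation is that if $Y\notin[(1-\eps)\mu,(1+\eps)\mu]$ then at least $\lceil k/2\rceil$ of the $\bar X_i$ lie on the same side of this interval as $Y$ (if $Y>(1+\eps)\mu$ then at least $\lceil k/2\rceil$ of them are $\ge Y$; symmetrically if $Y<(1-\eps)\mu$), hence are bad. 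So $Y$ fails to be a $(1\pm\eps)$-approximation only if at least $k/2$ indices are bad. Since the bad events are independent with mean at most $1/3<1/2$, a Chernoff bound on this binomial yields $\Pr[\ge k/2\text{ bad}]\le e^{-ck}$ for an absolute constant $c>0$; taking $C=1/c$ makes this at most $\delta$. The total number of samples of $X$ used is $kN=O\!\big(\log\tfrac1\delta\big)\cdot O\!\big(\relv[X]/\eps^2\big)$, as claimed.

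\emph{Main obstacle.} There is no substantive obstacle here; this is a routine concentration argument. The only points that require care are (i) fixing the constant in $N$ so that the per-average failure probability is bounded strictly below $1/2$ (we used $1/3$), which is exactly what makes the Chernoff step in Stage 2 go through, and (ii) stating precisely the implication ``the median lies outside an interval $\Rightarrow$ a majority of the samples lie outside that interval,'' handling the two sides symmetrically. One should also record the standing assumption $\E[X]\neq 0$ (and $\relv[X]<\infty$) under which the quantity $\relv[X]$ and the notion of $(1\pm\eps)$-approximation are meaningful.
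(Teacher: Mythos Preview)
Your proof is correct and is exactly the standard median-of-means argument. The paper does not actually prove this lemma but cites it from \cite{Karger17}; your two-stage Chebyshev-then-Chernoff argument is the textbook proof one would find there.
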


The next lemma is an important property of relative variance that allows us to compose estimators in a recursive algorithm.
\begin{lemma}[Lemma II.3 of \cite{Karger17}]\label{lem:relvar-multiply}
Suppose $Y$ is an unbiased estimator of $x$ with relative variance $\eta_1$, and conditioned on a fixed $Y$, $Z$ is an unbiased estimator of $Y$ with relative variance $\eta_2$. Then $Z$ is an unbiased estimator for $x$ with relative variance $(\eta_1+1)(\eta_2+1)-1$.
\end{lemma}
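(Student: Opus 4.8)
The plan is to apply the tower property (law of total expectation) twice, once to get unbiasedness and once to get the second moment. Since relative variance is only defined when the mean is nonzero, I would first note that we may assume $x \neq 0$, and hence — by the unbiasedness of $Y$ — that $Y$ is not identically zero, so that all the normalizing quantities below make sense.

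For unbiasedness: the hypothesis states $\E[Z \mid Y] = Y$, so the tower property immediately gives $\E[Z] = \E\big[\E[Z \mid Y]\big] = \E[Y] = x$. Thus $Z$ is an unbiased estimator of $x$, and $\E[Z]^2 = x^2$.

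For the relative variance, I would work with the relative second moment, which equals $\relv[\cdot] + 1$. The key translation step — the one place that needs a little care — is to read the hypothesis "conditioned on a fixed $Y$, $Z$ is unbiased for $Y$ with relative variance $\eta_2$" as the statement $\E[Z^2 \mid Y] = (\eta_2 + 1)\,Y^2$, i.e., the conditional second moment is normalized by the \emph{conditional} mean $Y^2$, not by $x^2$. Then a second application of the tower property gives
\[
  \E[Z^2] = \E\big[\E[Z^2 \mid Y]\big] = (\eta_2+1)\,\E[Y^2] = (\eta_2+1)(\eta_1+1)\,x^2,
\]
where the last equality uses that $Y$ is unbiased for $x$ with relative variance $\eta_1$, so $\E[Y^2] = (\eta_1+1)\,x^2$. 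Dividing by $\E[Z]^2 = x^2$ and subtracting $1$ yields $\relv[Z] = (\eta_1+1)(\eta_2+1) - 1$, completing the proof.

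There is no genuine obstacle here: the argument is two invocations of the tower property that combine multiplicatively. The only subtlety worth flagging explicitly in the write-up is the normalization convention just mentioned — conditional relative variance is measured against the conditional mean $Y$ — since a careless reading would incorrectly normalize $\E[Z^2\mid Y]$ by $x^2$ and lose the clean product form.
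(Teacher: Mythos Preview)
Your argument is correct: two applications of the tower property, once for the mean and once for the second moment, give exactly the claimed formula. The paper does not supply its own proof of this lemma but cites it directly from \cite{Karger17}, so there is nothing to compare against; your write-up is the standard proof and would serve well as a self-contained justification.
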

\eat{
\paragraph{Gomory-Hu Tree.}
The importance sampling subroutine of our algorithm makes use of Gomory-Hu trees. The {\em Gomory-Hu tree} \cite{gomory1961multi} of an undirected graph $G = (V, E)$ is a (weighted) tree $Y$ on the same set of vertices $V$ such that for every pair of vertices $s, t\in V$, the minimum $(s, t)$-cut in $Y$ (which is simply the minimum weight edge on the unique $s-t$ path in $Y$) is also a minimum $(s, t)$-cut in $G$, and has the same value.}

We also define the {\em relative bias} of an estimator:
\begin{defn}
The {\em relative bias} of an estimator $X$ for a value $x$ is defined as $\frac{|\E[X]-x|}{x}$. 
\end{defn}

\section{Importance Sampling on a Spanning Tree Packing}
\label{sec:importance}
As stated earlier, our main new algorithmic contribution is an estimator for $\ugp$ in the very reliable case, i.e., $p^\lambda \le O(n^{-3})$. Our goal in this section is to prove \Cref{lem:interface-importance}, which we restate below:

\importance*

We design an importance sampling algorithm that gives an estimator $X$ of $\ugp$ with a relative bias of $O(\log n/\sqrt{n})$ in $m^{1+o(1)}+\tO(n^{1.5})$ time. Since $n > \tilde{\Omega}(\eps^{-2})$ (otherwise we are in the first base case), the relative bias is at most $\eps/3$. It follows by \Cref{lem:mc-sample} that $\tO(\eps^{-2})$ runs of the algorithm gives a $(1\pm\eps/3)$-approximation of $\E[X]$, which in turn is a $(1\pm\eps/3)^2 \in (1\pm\eps)$-approximation of $\ugp$. 


%

\subsection{Dependence of $\zgp$ on Near-Minimum Cuts}

By \Cref{lem:z-approx-u-reliable}, since $p^\lambda \le O(n^{-3})$, we have 
\[
    \left(1 - O\left(\frac{\log n}{\sqrt{n}}\right)\right)\zgp \le \ugp \le \zgp.
\] 
This allows us to focus on approximating $\zgp$ instead of $\ugp$.

Intuitively, when $p^\lambda$ is small, the value of $\zgp$ only depends on the near-minimum cuts because larger cuts scarcely fail. We make this intuition formal below. 
Let $M_k$ be the number of cuts of value $k$. Then, $\zgp$ is defined as
\[ \zgp = \sum_{k\ge \lambda} M_k \cdot p^{k}.\]
By cut counting (\Cref{lem:cut-counting}), we know that $M_k\le n^{2k/\lambda}$. The assumption $p^\lambda \le O(n^{-3})$ implies $p^k \le O(n^{-3k/\lambda})$ for all $k \ge \lambda$. So, $p^k$ decreases much faster than the increase in $M_k$ as $k$ increases; as a consequence, the product $M_k\cdot p^k$ decays rapidly. The next lemma shows that we can truncate the sum $\sum_{k\ge \lambda} M_k\cdot p^k$ at $k = 3.5\lambda$ without significantly distorting its value. To state the lemma, let us define the partial sums:
\[z^{\ge\alpha}=\sum_{C_i:c_i\ge \alpha\lambda}p^{c_i} \qquad \text{and} \qquad  z^{\le\alpha}=\sum_{C_i:c_i\le \alpha\lambda}p^{c_i}.\]

\begin{lemma}\label{lem:p-small-truncate}
If $p^\lambda \le O(n^{-3})$, then $z^{\ge 3.5}\le O(n^{-0.5})\cdot \zgp$, where $z^{\ge 3.5}$ is the expected number of failed cuts of value at least $3.5\lambda$.
\end{lemma}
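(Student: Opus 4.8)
The plan is to reduce the claim to a tail estimate on the partition function $\zgp$ and then control that tail via cut-counting. Since $G$ has at least one minimum cut, $\zgp=\sum_i p^{c_i}\ge p^{\lambda}$, so it suffices to prove $z^{\ge 3.5}\le O(n^{-1/2})\cdot p^{\lambda}$. To bound $z^{\ge 3.5}=\sum_{i:\,c_i\ge 3.5\lambda}p^{c_i}$, the key idea is to trade the per-value cut multiplicities (hard to control one value at a time) for the \emph{cumulative} cut counts, which is exactly the quantity bounded by \Cref{lem:cut-counting}. I would do this by expanding each term as a geometric series, $p^{c_i}=(1-p)\sum_{k\ge c_i}p^{k}$, and swapping the order of summation; with $N_{\le k}$ denoting the number of cuts of value at most $k$, and using that every cut has value at least $\lambda$,
\[
z^{\ge 3.5}=(1-p)\sum_{i:\,c_i\ge 3.5\lambda}\;\sum_{k\ge c_i}p^{k}=(1-p)\sum_{k\ge 3.5\lambda}p^{k}\,\bigl|\{\,i:3.5\lambda\le c_i\le k\,\}\bigr|\le (1-p)\sum_{k\ge\lceil 3.5\lambda\rceil}N_{\le k}\,p^{k}.
\]
This reindexing is essential: the factor $(1-p)$ it produces is what eventually cancels a $\tfrac{1}{1-r}$ term below. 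Without it, one is left with $\sum_k n^{2k/\lambda}p^k$, which carries an uncontrolled factor that blows up like $\lambda/\log n$ when the minimum cut $\lambda$ is large, and the argument would fail.

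Next I would plug in cut-counting, $N_{\le k}\le n^{\lfloor 2k/\lambda\rfloor}\le n^{2k/\lambda}$, to get $z^{\ge 3.5}\le (1-p)\sum_{k\ge\lceil 3.5\lambda\rceil}r^{k}$ with $r:=n^{2/\lambda}p$. The hypothesis $p^{\lambda}\le O(n^{-3})$ gives $r^{\lambda}=n^{2}p^{\lambda}\le O(n^{-1})<1$, so the series converges, and since $\lceil 3.5\lambda\rceil\ge 3.5\lambda$ and $r<1$,
\[
z^{\ge 3.5}\le\frac{(1-p)\,r^{\,3.5\lambda}}{1-r}=\frac{1-p}{\,1-r\,}\cdot\bigl(n^{2}p^{\lambda}\bigr)^{3.5}.
\]
The crux is to show the prefactor $\frac{1-p}{1-r}$ is $O(1)$, and this is exactly where the gap between the exponent $3$ in the hypothesis and the exponent $2$ coming from cut-counting is used. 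Since $f(p)=\tfrac{1-p}{1-n^{2/\lambda}p}$ is nondecreasing in $p$ on the relevant range, it suffices to check this at $p=p_{\max}:=\bigl(O(n^{-3})\bigr)^{1/\lambda}$, where $r=\bigl(O(n^{-1})\bigr)^{1/\lambda}$; using the elementary bounds $1-x\le\ln(1/x)=\tfrac1\lambda\ln(1/x^{\lambda})$ and $1-x\ge x\ln(1/x)$ one gets $1-p_{\max}=\Theta(\tfrac{\log n}{\lambda})$ and $1-r=\Theta(\tfrac{\log n}{\lambda})$ when $\lambda$ is large (the numerator carrying implied constant roughly $3$, the denominator roughly $1$), while for small $\lambda$ the quantity $r$ is bounded away from $1$ and $\frac{1}{1-r}=O(1)$ directly. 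Either way $\frac{1-p}{1-r}=O(1)$, hence $z^{\ge 3.5}\le O\!\bigl((n^{2}p^{\lambda})^{3.5}\bigr)$. Dividing by $p^{\lambda}\le\zgp$ and invoking $p^{\lambda}\le O(n^{-3})$ once more finishes it:
\[
\frac{z^{\ge 3.5}}{\zgp}\le\frac{z^{\ge 3.5}}{p^{\lambda}}\le O\!\bigl(n^{7}(p^{\lambda})^{2.5}\bigr)\le O\!\bigl(n^{7}\cdot n^{-7.5}\bigr)=O(n^{-1/2}).
\]

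I expect the only genuine obstacle to be the $\frac{1-p}{1-r}=O(1)$ step; the clean route is the monotonicity reduction to $p_{\max}$ followed by the two $\ln$ inequalities, and the reason it goes through is precisely that $3>2$. (More generally, truncating at $\alpha\lambda$ for any fixed $\alpha>3$ gives $z^{\ge\alpha}\le O(n^{3-\alpha})\cdot\zgp$ by the same argument; $\alpha=3.5$ is the concrete value that yields the $n^{-1/2}$ loss used elsewhere in \Cref{sec:importance}.) A secondary point to get right is the reindexing identity itself, since, as noted, dropping the $(1-p)$ factor it supplies breaks the argument for graphs with a large minimum cut.
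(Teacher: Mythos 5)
Your proof is correct and reaches the same bound, but it sets up the geometric series differently from the paper, and the difference is worth noting. The paper also reduces to showing $z^{\ge 3.5}/p^{\lambda}\le O(n^{-1/2})$ and also uses the cumulative cut-count bound of \Cref{lem:cut-counting}, but it buckets cuts by half-$\lambda$ intervals: the contribution of cuts with value in $[\tfrac{k}{2}\lambda,\tfrac{k+1}{2}\lambda)$ is at most $n^{k}p^{k\lambda/2}$, so after dividing by $p^\lambda$ one gets a geometric series with ratio $np^{\lambda/2}=O(n^{-1/2})$, which trivially sums to $O(1)$ and there is never a $\tfrac{1}{1-r}$ term to worry about. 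Your route instead expands $p^{c_i}=(1-p)\sum_{k\ge c_i}p^{k}$ and swaps sums, which produces the tighter ratio $r=n^{2/\lambda}p$ together with the compensating $(1-p)$ prefactor; the price is the extra work showing $\tfrac{1-p}{1-r}=O(1)$, which the paper's bucketing sidesteps entirely. You are right that if one drops the $(1-p)$ factor and sums $\sum_k n^{2k/\lambda}p^k$ directly the bound degrades by roughly $\lambda/\log n$, which is fatal for multigraphs with large $\lambda$; the paper avoids that failure mode by choosing the bucket width $\lambda/2$ rather than $1$, which is exactly the other way to kill the blow-up. Your monotonicity reduction to $p_{\max}$ and the two $\ln$ inequalities are sound, though the case analysis ("large $\lambda$" vs.\ "small $\lambda$") is stated a bit loosely and should also cover the intermediate regime $\lambda=\Theta(\log n)$, where $r$ is a constant bounded away from both $0$ and $1$ and the claim is immediate from $1-r=\Omega(1)$. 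The paper's version is shorter and more robust to implied constants; your version is more transparent about where the slack between the exponents $3$ and $2$ is spent, and the generalization $z^{\ge\alpha}\le O(n^{3-\alpha})\zgp$ you note at the end is correct and matches what the paper's calculation would give as well.
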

\begin{proof}
By \Cref{lem:cut-counting}, there are at most $n^{\lfloor 2\alpha\rfloor}$ $\alpha\lambda$-weak cuts. So for each integer $k\ge 7$, the expected number of failed cuts with cut value in $\left[\frac{k}{2}\cdot \lambda, \frac{k+1}{2}\cdot \lambda\right)$ is at most $n^{k}\cdot p^{k\lambda/2}$. Therefore,
\begin{align*}
    \frac{z^{\ge 3.5}}{p^\lambda} &\le \frac{1}{p^\lambda}\sum_{k\ge 7} n^k \cdot p^{k\lambda/2}
    \le n^7 \cdot p^{2.5\lambda}\cdot \left(\sum_{k\ge 0} n^k \cdot p^{k\lambda/2}\right)
    = O\left(n^7 \cdot p^{2.5\lambda}\right)
    = O(n^{-0.5})
\end{align*}
since $p^\lambda = O(n^{-3})$.
Therefore $z^{\ge 3.5}= O(n^{-0.5})\cdot p^\lambda \le O(n^{-0.5})\cdot \zgp$.
\end{proof}

\subsection{Algorithm for Estimating $\zgp$ for $3.5\lambda$-weak Cuts}\label{sec:sample-algorithm}
\Cref{lem:p-small-truncate} implies that for the purpose of getting a $(1\pm\eps)$-approximation of $\zgp$, it suffices to only consider cuts of value $\alpha\lambda$ for $1 \le \alpha \le 3.5$. We first give the high level idea for estimating the contribution of these cuts to $\zgp$. Suppose we choose  a spanning tree of the graph uniformly at random from a collection of $\frac{\lambda}{2}$ edge-disjoint spanning trees. By averaging, we expect to see at most $7$ edges from a $3.5\lambda$-weak cut in this randomly chosen tree. We don't know how to sample a $3.5\lambda$-weak cut directly, but we can sample a cut whose projection on the spanning tree contains at most $7$ edges. Therefore, we can write an estimator that reweights these cuts appropriately to obtain an unbiased sample of $\zgp$ restricted to the $3.5\lambda$-weak cuts. This is the {\em importance sampling} problem that we solve below.

There are two main challenges in this problem. First, even if we had access to a uniform distribution over these $\tO(n^7)$ cuts, we do not have enough time to draw sufficiently many samples to pick every cut. Hence, we need a careful sampling and reweighting algorithm that allows us to sample much fewer cuts but still keep the variance under control. Second, we do not have access to a uniform distribution over these cuts. Instead, we can only sample a tree, obtain a random set of $7$ or fewer edges in the tree, and define the corresponding cut as our sampled cut. This creates a biased distribution over the cuts themselves, and our reweighting must eliminate this bias, again without increasing variance.

Formally, our algorithm for estimating $\zgp$ has the following three steps.

\paragraph{Step 1: Sparsification.}
First, we apply the sparsification algorithm (\Cref{lem:sparsify}) with parameter $\delta=\frac{1}{\log n}$ to get a sparsifier $H$.
%

\paragraph{Step 2: Tree Packing.}
Next, we construct a packing of $\lambda_H$ spanning trees in the sparsifier graph $H$ where each edge appears at most twice:
\begin{lemma}\label{lem:tree-packing}
Given an undirected graph $G$ with min-cut value $\lambda$, we can construct in $\tO(\lambda m)$ time a collection $\mathcal{T}$ of $\lambda$ spanning trees such that every edge appears in at most two trees. 
\end{lemma}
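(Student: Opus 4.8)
The plan is to reduce the statement to packing edge-disjoint spanning trees in an auxiliary multigraph obtained by doubling every edge. Let $G'$ be the multigraph on vertex set $V(G)$ in which each edge $e\in E(G)$ is replaced by two parallel copies $e^{(1)}, e^{(2)}$; then $|E(G')|=2m$ and, since every cut of $G'$ has exactly twice the value of the corresponding cut of $G$, the edge-connectivity of $G'$ is exactly $2\lambda$. By the classical Nash--Williams--Tutte theorem, a multigraph contains $k$ edge-disjoint spanning trees iff every partition of its vertex set into $r\ge 2$ parts has at least $k(r-1)$ crossing edges. For $G'$: given such a partition $P_1,\dots,P_r$, each $P_i$ is a proper nonempty vertex subset, so $d_{G'}(P_i)\ge 2\lambda$; summing over $i$ and using that each crossing edge is counted in exactly two of these degrees gives at least $\lambda r\ge \lambda(r-1)$ crossing edges. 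Hence $G'$ contains $\lambda$ pairwise edge-disjoint spanning trees $T'_1,\dots,T'_\lambda$.

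Next I would produce such a packing algorithmically. The edge subsets of $G'$ that decompose into $\lambda$ forests are precisely the independent sets of the $\lambda$-fold union of the graphic matroid of $G'$, and by the previous paragraph a maximum such set has full size $\lambda(n-1)$, i.e.\ decomposes into $\lambda$ spanning trees. A maximum set can be computed with a matroid-union / arboricity-packing algorithm such as Gabow's \cite{Gabow95}, whose running time on a graph with $O(m)$ edges is $\tO(\lambda m)$; computing $\lambda$ itself beforehand takes only $\tO(m)$ time by \cite{Karger00mincut} and is subsumed. This step outputs $T'_1,\dots,T'_\lambda$ explicitly.

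Finally I would pull the packing back to $G$. Collapsing the two copies $e^{(1)},e^{(2)}$ of each edge $e$ maps each $T'_i$ to a subgraph $T_i$ of $G$; a spanning tree of $G'$ cannot contain both $e^{(1)}$ and $e^{(2)}$ (that would form a length-$2$ cycle), so $T_i$ uses each edge of $G$ at most once and is therefore a genuine spanning tree of $G$. Since $T'_1,\dots,T'_\lambda$ are edge-disjoint in $G'$ and each edge of $G$ has exactly two copies in $G'$, at most two of the $T_i$ contain any fixed edge. Returning $\mathcal{T}=\{T_1,\dots,T_\lambda\}$ then proves the lemma, with total running time $\tO(\lambda m)$.

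The only non-routine ingredient is the running time of the tree-packing step: we need a matroid-union algorithm whose dependence on the number of trees and on $m$ is $\tO(\lambda m)$, rather than the $O(\lambda^2 n^2)$ of the textbook augmenting-path (Roskind--Tarjan) algorithm, which would be too slow for our overall target; this is exactly what Gabow's matroid approach \cite{Gabow95} provides, and it is the point to get right. The doubling construction, the Nash--Williams--Tutte existence argument, and the pull-back are all elementary.
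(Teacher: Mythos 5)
Your proof is correct and arrives at the same construction, but frames the reduction differently from the paper. You double each edge of $G$ into an undirected multigraph $G'$ on $2m$ edges, observe that $G'$ is $2\lambda$-edge-connected, invoke the Nash--Williams--Tutte theorem to establish the \emph{existence} of $\lambda$ pairwise edge-disjoint spanning trees in $G'$, and then appeal to a matroid-union packing algorithm to compute them. The paper instead replaces each undirected edge by two oppositely oriented directed edges to get a digraph $D$ whose min-cut value is exactly $\lambda$ (each undirected crossing edge contributes only one directed edge out of $S$), and then applies Gabow's algorithm \cite{Gabow95} as a black box: it packs $\lambda$ edge-disjoint ``directionless spanning trees'' in a digraph of min-cut value $\lambda$, and existence is part of Gabow's theorem, so no separate Nash--Williams--Tutte argument is needed. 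The two constructions are essentially isomorphic (a pair of parallel undirected copies versus a pair of antiparallel arcs), and the pull-back argument is identical: edge-disjointness in the doubled structure plus exactly two copies of each original edge gives congestion at most two, and a tree cannot use both copies of an edge.

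The one thing to tighten is the attribution in your algorithmic step. You cite \cite{Gabow95} as ``a matroid-union / arboricity-packing algorithm'' running on the undirected multigraph $G'$, but the result in \cite{Gabow95} is stated for \emph{directed} graphs (packing directionless spanning trees/arborescences). To use it you should either note that $G'$ is converted to a bidirected digraph first (at which point you have reproduced the paper's $D$ exactly), or cite an undirected forest-packing/matroid-union algorithm with the right $\tO(\lambda m)$ running time. As written, the existence argument is an extra, correct step that the paper gets for free from Gabow's lemma; the rest matches. Your instinct that the naive Roskind--Tarjan augmenting-path algorithm would be too slow, and that Gabow's matroid technique is the right tool, is exactly the point.
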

This lemma is well-known and follows, e.g., from Gabow~\cite{Gabow95}; we give a short proof based on Gabow's result in the appendix for completeness.
Note that since $\lambda_H = O(\log^3 n)$, this algorithm runs in $\tO(m)$ time. 

We say that a cut {\em $k$-respects} a tree if there are at most $k$ edges from the cut in the tree. The key property of the spanning tree packing in \Cref{lem:tree-packing} is that every $3.5\lambda$-weak cut will 7-respect at least one tree in the packing:
\begin{lemma}\label{lem:cut-intersect-tree}
Fix any $k\in \{2, 3, \ldots, 7\}$. There exists a large enough constant $\beta$ such that it holds whp that for every cut $C_i$ (in $G$) with $c_i \le \left(k+1-\frac{\beta}{\log n}\right)\cdot \frac{\lambda}{2}$, there exists a tree $T\in\mathcal{T}$ such that $|C_i\cap T|\le k$.
\end{lemma}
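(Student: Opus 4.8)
The plan is to combine the sparsification guarantee (\Cref{lem:sparsify}) with a counting argument over the spanning tree packing (\Cref{lem:tree-packing}). First I would fix a cut $C_i$ in $G$ with value $c_i \le \left(k+1-\frac{\beta}{\log n}\right)\cdot\frac{\lambda}{2}$. By the sparsification guarantee with $\delta = \frac{1}{\log n}$, the value $d_H(C_i)$ of this cut in the sparsifier $H$ satisfies $d_H(C_i) \le (1+\delta)\cdot\alpha\cdot c_i$, where $\alpha$ is the sampling rate and $\alpha\lambda$ is (up to the $(1\pm\delta)$ factor) the min-cut value $\lambda_H$ of $H$. Pushing the numbers through, $d_H(C_i) \le \frac{(1+\delta)}{(1-\delta)}\cdot\left(k+1-\frac{\beta}{\log n}\right)\cdot\frac{\lambda_H}{2}$; choosing $\beta$ large enough relative to the hidden constants makes the right-hand side at most $(k+1)\cdot\frac{\lambda_H}{2} - 1$, so in particular $d_H(C_i) \le (k+1)\cdot\frac{\lambda_H}{2} - 1$, i.e.\ strictly less than $(k+1)\cdot\frac{\lambda_H}{2}$ by at least $1$. (One has to be slightly careful that the approximation only holds for cuts whose $G$-value is $\Omega(\lambda)$, but all cuts here have value $\ge\lambda$, so this is fine; and $\lambda_H = \Theta(\log^2 n)$ so the additive slack of $1$ is cheap.)

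Next I would count edges across the trees. The packing $\mathcal{T}$ consists of $\lambda_H$ spanning trees in which every edge of $H$ appears at most twice. Hence $\sum_{T\in\mathcal{T}} |C_i\cap T| \le 2\, d_H(C_i) < (k+1)\cdot\lambda_H$. By averaging over the $\lambda_H$ trees, there must exist a tree $T\in\mathcal{T}$ with $|C_i\cap T| < k+1$, i.e.\ $|C_i\cap T|\le k$. This is exactly the conclusion. Finally I would take a union bound: \Cref{lem:sparsify} holds whp simultaneously for \emph{all} cuts, so the argument above applies to every qualifying cut $C_i$ at once, giving the "for every cut" statement with high probability. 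The tree packing of \Cref{lem:tree-packing} is deterministic given $H$, so it contributes no additional failure probability.

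The main obstacle I anticipate is pinning down the constants so that the additive "$-1$" survives, i.e.\ verifying that after the multiplicative $(1\pm\delta)$ distortion from sparsification (applied both to $c_i$ and to the relation $\lambda_H \approx \alpha\lambda$), the quantity $2\,d_H(C_i)$ is strictly below $(k+1)\lambda_H$ rather than merely $\le (k+1)\lambda_H$. This is why the lemma is stated with the $\frac{\beta}{\log n}$ slack in the hypothesis: that slack is $\Theta(\lambda/\log n) = \Theta(\lambda_H)$ in $H$'s scale, which comfortably absorbs both the $O(\delta)\cdot c_i = O(\lambda_H)$ rounding error and the additive $1$ needed to turn a non-strict averaging bound into a strict one. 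The only other small point to handle carefully is that $\lambda_H$ in \Cref{lem:tree-packing} should be read as "the min-cut value of $H$, which is $O(\log^3 n)$ by \Cref{lem:sparsify}", so the packing indeed has $\Theta(\log^3 n)$ trees and the running time bound of $\tO(m)$ quoted after \Cref{lem:tree-packing} is consistent.
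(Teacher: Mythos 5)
Your proposal is correct and follows essentially the same route as the paper: apply the sparsification guarantee of \Cref{lem:sparsify} to bound $d_H(C_i)$ in terms of $\lambda_H$, then average over the $\lambda_H$ trees of the congestion-$2$ packing to extract a tree meeting $C_i$ in at most $k$ edges, taking a union bound via the ``for all cuts simultaneously'' form of \Cref{lem:sparsify}. The only cosmetic differences are that the paper makes the constant explicit ($\beta \ge 2(k+1)$, so $\beta = 16$ suffices) and does not bother with the additive ``$-1$'' slack you introduce, since a strict inequality $\sum_T |C_i\cap T| < (k+1)\lambda_H$ already forces some integer $|C_i\cap T| \le k$ (also, $\lambda_H$ here is $\Theta(\log^3 n)$, not $\Theta(\log^2 n)$, but this does not affect the argument).
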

\begin{proof}
Let ${\cal C}_k$ be the set of cuts $C_i$ in $G$ satisfying $c_i \le \left(k+1-\frac{\beta}{\log n}\right)\cdot \frac{\lambda}{2}$. Recall that $\delta=\frac{1}{\log n}$ is the sparsification parameter we used when applying \Cref{lem:sparsify}. Thus, for every cut $C_i$ in ${\cal C}_k$, we have $c_i \le \left(k+1-\beta\delta\right)\cdot \frac{\lambda}{2}$. After sparsification, the value of every cut in ${\cal C}_k$ in the sparsifier $H$ is at most $(1+\delta)(k+1 - \beta\delta)\cdot\alpha\cdot\frac{\lambda}{2}$ whp. Moreover, $\lambda_H \ge (1-\delta)\alpha\lambda$ whp. Therefore, the value of every cut in ${\cal C}_k$ in the sparsifier $H$ is at most:
\begin{equation*}
    (1+\delta)(k+1 - \beta\delta)\cdot\alpha\cdot\frac{\lambda}{2} = \frac{(1+\delta)(k+1 - \beta\delta)}{1-\delta}\cdot\frac{\lambda_H}{2} < (k+1)\frac{\lambda_H}{2}, \quad \text{ for } \beta \ge 2(k+1).
\end{equation*}
Since $k \le 7$, we can set $\beta = 16$  to ensure the inequality above. This implies that after sparsification, each cut in ${\cal C}_k$ has value strictly smaller than $(k+1)\frac{\lambda_H}{2}$ in $H$. Since there are $\lambda_H$ trees produced by \Cref{lem:tree-packing} and every edge can appear at most twice, it follows that every cut in ${\cal C}_k$ has strictly less than $k+1$ edges on average across the trees. Therefore, for every cut in ${\cal C}_k$, there is at least one tree containing at most $k$ edges from the cut.
\end{proof}

As a consequence of \Cref{lem:cut-intersect-tree}, it suffices to calculate the contribution to $\zgp$ of all cuts that 7-respect some tree in the tree packing. We call this latter set ${\cal C}_7$. Note that the set $\Cresp$ includes all $3.5\lambda$-weak cuts in $G$, but might include other cuts as well. 
The remainder of the section will design an unbiased estimator of $\zresp = \sum_{C_i\in \Cresp} p^{c_i}$. \Cref{lem:zresp-approx-ugp} shows that the estimator for $\zresp$ is also a (biased) estimator of $\ugp$ with an overall relative bias of $O\left(\frac{\log n}{\sqrt{n}}\right)$.

\begin{lemma}\label{lem:zresp-approx-ugp}
    Assume $p^{\lambda}<O(n^{-3})$. Then $|\zresp-\ugp|\le O\left(\frac{\log n}{\sqrt{n}}\right)\ugp$.
\end{lemma}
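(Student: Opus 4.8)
The plan is to sandwich $\zresp$ between $\ugp$ and $\zgp$, and then use the existing phase-transition bound to close the gap. First I would observe the trivial inequality $\zresp \le \zgp$, since $\Cresp$ is a subset of the family of all cuts and every summand $p^{c_i}$ is nonnegative. Combined with \Cref{lem:z-approx-u-reliable}, which gives $\ugp \le \zgp$ and $\zgp \le \bigl(1 + O(\log n/\sqrt n)\bigr)\ugp$ (rearranging the stated two-sided bound), this already controls $\zresp$ from above: $\zresp \le \zgp \le (1 + O(\log n/\sqrt n))\ugp$.

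For the lower bound I would use \Cref{lem:cut-intersect-tree} (with $k = 7$): every $3.5\lambda$-weak cut of $G$ lies in $\Cresp$, so $\zresp \ge z^{\le 3.5} = \zgp - z^{\ge 3.5}$. Now invoke \Cref{lem:p-small-truncate}, which under the hypothesis $p^\lambda \le O(n^{-3})$ gives $z^{\ge 3.5} \le O(n^{-0.5})\cdot \zgp$. Hence $\zresp \ge (1 - O(n^{-0.5}))\,\zgp \ge (1 - O(n^{-0.5}))\,\ugp$, using $\zgp \ge \ugp$ from \Cref{lem:z-approx-u-reliable}. Putting the two directions together,
\[
    \bigl(1 - O(n^{-0.5})\bigr)\ugp \;\le\; \zresp \;\le\; \bigl(1 + O(\tfrac{\log n}{\sqrt n})\bigr)\ugp,
\]
which is exactly $|\zresp - \ugp| \le O(\log n/\sqrt n)\,\ugp$ after absorbing the $n^{-0.5}$ term into the (larger) $\log n/\sqrt n$ term.

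This argument is essentially a bookkeeping exercise chaining together three results already proved in the excerpt, so I do not anticipate a serious obstacle. The one point requiring a little care is the high-probability qualifier: \Cref{lem:cut-intersect-tree} and the sparsification step (\Cref{lem:sparsify}) hold only whp, so the containment of all $3.5\lambda$-weak cuts in $\Cresp$ — and hence the lower bound — is a statement that holds whp over the randomness of the tree packing and sparsifier, not deterministically. I would state the lemma's conclusion as holding whp (or note that the construction is conditioned on these high-probability events, which is consistent with how \Cref{lem:interface-importance} is set up). A secondary subtlety is ensuring the definition of $z^{\ge 3.5}$ and $z^{\le 3.5}$ partitions $\zgp$ exactly; since cuts of value exactly $3.5\lambda$ could in principle be double-counted or dropped depending on the $\le$ versus $\ge$ conventions, I would note $z^{\le 3.5} + z^{\ge 3.5} \ge \zgp$ suffices for the lower bound and avoid over-precision there.
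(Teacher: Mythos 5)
Your proof is correct and takes essentially the same route as the paper's: both sandwich $\zresp$ between $z^{\le 3.5}$ and $\zgp$ via \Cref{lem:cut-intersect-tree}, apply \Cref{lem:p-small-truncate} to bound the gap $z^{\ge 3.5}$, and then invoke \Cref{lem:z-approx-u-reliable}; the paper phrases the last step as a triangle inequality $|\zresp-\ugp|\le|\zresp-\zgp|+|\zgp-\ugp|$ while you write explicit two-sided bounds, but this is a cosmetic difference. Your remarks on the whp qualifier and on the boundary case $c_i = 3.5\lambda$ (so that one only needs $z^{\le 3.5}+z^{\ge 3.5}\ge\zgp$, not exact partition) are both correct and are implicitly relied on in the paper's proof as well.
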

\begin{proof}
    \Cref{lem:cut-intersect-tree} gives that $z^{\le 3.5} \le \zresp \le \zgp$. By \Cref{lem:p-small-truncate}, 
    $$|\zresp-\zgp|\le |z^{\le 3.5}-\zgp|\le |z^{\ge 3.5}|\le O(n^{-0.5})\cdot \zgp.$$
    By \Cref{lem:z-approx-u-reliable}, since $p^{\lambda}<O(n^{-3})$, we have $|\zgp-\ugp|\le O\left(\frac{\log n}{\sqrt{n}}\right)\zgp$ and $\zgp =O(\ugp)$. Therefore,
    $$ |\zresp-\ugp|\le |\zresp-\zgp|+ |\zgp-\ugp| \le O\left(\frac{\log n}{\sqrt{n}}\right)\ugp.$$
\end{proof}

\paragraph{Step 3: Unbiased Estimator for $\zresp$ via Importance Sampling.}
Recall that $\zresp$ is a sum of $p^{c_i}$ over the $\tO(n^7)$ cuts $C_i$ in $\Cresp$. This is much fewer than the $O(2^{n})$ cuts overall, but we still cannot afford to directly enumerate all of them. Instead, using importance sampling, we obtain an estimator of $\zresp$ with small variance.

Our estimator $X$ is defined as follows: $X = \frac{p^{c_i}}{q(C_i)}$ with probability $q(C_i)$, for some distribution $q: \Cresp \rightarrow [0, 1]$ that we will define later. This $q$ will be tailored so that we can efficiently sample cuts from $q$, and moreover that given a cut $C_i$, we can efficiently compute $q(C_i)$. In \Cref{lem:importance-ex-var}, we show that $X$ is an unbiased estimator of $\zresp$, and also obtain a bound on the relative variance of $X$ as a function of the distribution $q$:
\begin{lemma}\label{lem:importance-ex-var}
    $X$ is an unbiased estimator of $\zresp$ and its relative variance $\relv[X]$ satisfies whp
    \[\relv[X]\le \frac{1}{\zresp}\cdot \left(\max_{C_i\in \Cresp} \frac{p^{c_i}}{q(C_i)}\right).\]
\end{lemma}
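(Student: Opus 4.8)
The plan is to treat $X$ as a textbook importance-sampling estimator and verify the two moments directly from its definition; the entire content of the lemma lies in two short computations, so there is no real obstacle here beyond bookkeeping. Throughout, we work conditioned on the high-probability event that the sparsifier $H$ of Step~1 and the tree packing $\mathcal{T}$ of Step~2 behave as in \Cref{lem:sparsify,lem:cut-intersect-tree}, and that the distribution $q$ constructed in the next subsection is a genuine sub-distribution on $\Cresp$ with $q(C_i)>0$ for every $C_i\in\Cresp$; this is where the ``whp'' in the statement enters, and everything below is deterministic once this good event is fixed.

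First, unbiasedness: since $X$ takes the value $p^{c_i}/q(C_i)$ with probability exactly $q(C_i)$ and these events partition the sample space (over the randomness of the sampling step, conditioned on $H$ and $\mathcal{T}$), linearity of expectation gives
\[
  \E[X] \;=\; \sum_{C_i \in \Cresp} q(C_i)\cdot \frac{p^{c_i}}{q(C_i)} \;=\; \sum_{C_i \in \Cresp} p^{c_i} \;=\; \zresp .
\]
For the second moment, computing the same way and then pulling out the worst-case weight ratio,
\[
  \E[X^2] \;=\; \sum_{C_i \in \Cresp} q(C_i)\cdot \left(\frac{p^{c_i}}{q(C_i)}\right)^{2}
  \;=\; \sum_{C_i \in \Cresp} p^{c_i}\cdot \frac{p^{c_i}}{q(C_i)}
  \;\le\; \left(\max_{C_i \in \Cresp}\frac{p^{c_i}}{q(C_i)}\right)\sum_{C_i \in \Cresp} p^{c_i}
  \;=\; \left(\max_{C_i \in \Cresp}\frac{p^{c_i}}{q(C_i)}\right)\zresp .
\]

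Finally, converting to relative variance: by unbiasedness $\E[X]=\zresp$, so $\relv[X] = \E[X^2]/\E[X]^2 - 1 \le \E[X^2]/\zresp^{2}$, and dividing the second displayed bound by $\zresp^2$ yields $\relv[X]\le \frac{1}{\zresp}\bigl(\max_{C_i\in\Cresp} p^{c_i}/q(C_i)\bigr)$, as claimed. The genuinely delicate work of the section is not this lemma but rather the choice of $q$: we must design it so that (i) a cut can be drawn from $q$ and its probability $q(C_i)$ evaluated in $\tO(1)$ time, and (ii) the ratio $p^{c_i}/q(C_i)$ is uniformly small enough over $\Cresp$ that the right-hand side above is $\tO(n^{1.5})$ (recall $|\Cresp|$ may be as large as $\approx n^7$, so only $\tO(n^{1.5})$ samples can be afforded). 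Both requirements are handled in the subsequent subsections, the second by contracting well-connected components via a Gomory--Hu tree to boost the sampling probabilities of the near-minimum cuts.
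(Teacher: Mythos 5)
Your proof is correct and is essentially the same computation as in the paper: both establish unbiasedness by linearity of expectation, bound $\E[X^2]$ by factoring out $\max_{C_i}\,p^{c_i}/q(C_i)$ from the sum, and drop the $-1$ in the definition of relative variance. Your additional remark about conditioning on the high-probability event (so that $q$ is a well-defined sub-distribution supported on all of $\Cresp$) correctly pins down where the ``whp'' qualifier in the statement comes from, which the paper leaves implicit.
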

\begin{proof}
The expectation of $X$ is given by
$$\E[X]=\sum_{C_i\in \Cresp} q(C_i)\cdot \frac{p^{c_i}}{q(C_i)} =\sum_{C_i\in \Cresp} p^{c_i} = \zresp.$$
The relative variance of $X$ satisfies
\begin{align*}
\eta[X] \quad
& < \quad \frac{\E[X^2]}{\E[X]^2} 
\quad =  \quad\frac{1}{(\zresp)^2}\sum_{C_i \in \Cresp} q(C_i)\left(\frac{p^{c_i}}{q(C_i)}\right)^2 \\
\quad & \quad \le \quad \frac{1}{\zresp} \left(\max_{C_i\in \Cresp} \frac{p^{c_i}}{q(C_i)}\right)\left(\frac{\sum_{C_i \in \Cresp} p^{c_i}}{\zresp}\right)
\quad = \quad \frac{1}{\zresp} \left(\max_{C_i\in \Cresp} \frac{p^{c_i}}{q(C_i)}\right).\qedhere
\end{align*}
\end{proof}
To minimize the relative variance, we would ideally want $q(C_i) \propto p^{c_i}$ for all $C_i\in \Cresp$. But, this is impossible to ensure exactly because the set $\Cresp$ is unknown and too large to enumerate. Instead, we use a surrogate distribution $q$ to approximate this ideal distribution. The distribution $q$ is defined as the mixture of a set of distributions $q_j$ for $j \in \{1, 2, \ldots, 7\}$, where $j$ can be loosely interpreted as the number of edges in the intersection of the sampled cut $C\in \Cresp$ and some tree $T$ chosen randomly from the packing $\cal T$ given by \Cref{lem:tree-packing}. For $j \ge 2$, the distribution $q_j$ is given by the following process: First, we pick a tree $T$ uniformly at random from the packing $\cal T$ given by \Cref{lem:tree-packing}. Next, pick $j$ edges uniformly at random from $T$ {\em with replacement}. The cut $C$ is then defined as the unique cut in $G$ that intersects $T$ at precisely the chosen edges. (Note that the number of chosen edges might actually be less than $j$ because the sampling is with replacement.) 

We now precisely calculate the values of $q_j(C_i)$ for any cut $C_i$ and any $j\in \{2, 3, \ldots, 7\}$. The following fact is useful for this purpose (we include a proof in the appendix):
\begin{fact}\label{fact:importance-distribution-correct}
    Given a universe $U$ of $N$ elements and a set $A\subseteq U$ of size $\alpha$, if we pick $j$ elements from $U$ uniformly at random with replacement, then the probability that the set of elements picked is precisely $A$ is given by $\frac{\alpha!\, S(j, \alpha)}{N^j}$, where $S(j, \alpha)$ is the Stirling number of the second kind. 
    
    In particular, when $j=2$, the probability is $\frac{\alpha}{N^j}$ for $\alpha\in\{1,2\}$ and 0 otherwise.
\end{fact}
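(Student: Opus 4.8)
The plan is to reduce the statement to the standard combinatorial identity that the number of surjections from a $j$-element set onto an $\alpha$-element set equals $\alpha!\,S(j,\alpha)$. First I would set up the sample space: picking $j$ elements from $U$ uniformly at random with replacement is the same as drawing an ordered tuple $(x_1,\dots,x_j)\in U^j$ with each coordinate independent and uniform, so there are $N^j$ equally likely outcomes. The event ``the set of picked elements is precisely $A$'' is exactly the set of tuples $(x_1,\dots,x_j)$ such that $\{x_1,\dots,x_j\}=A$, i.e. every coordinate lies in $A$ and every element of $A$ appears as some coordinate. Such a tuple is precisely the data of a surjective function $[j]\to A$.

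Next I would count these surjections. The usual argument: any surjection $[j]\to A$ is obtained by first partitioning $[j]$ into $\alpha$ nonempty blocks (the preimages of the $\alpha$ elements of $A$), which can be done in $S(j,\alpha)$ ways by definition of the Stirling number of the second kind, and then assigning these $\alpha$ blocks bijectively to the $\alpha$ elements of $A$, in $\alpha!$ ways. Hence the number of favorable tuples is $\alpha!\,S(j,\alpha)$, and dividing by the $N^j$ total outcomes gives the claimed probability $\frac{\alpha!\,S(j,\alpha)}{N^j}$.

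Finally, for the $j=2$ specialization I would just evaluate: a surjection from a $2$-element set onto $A$ exists only when $|A|\in\{1,2\}$, so the probability is $0$ for $\alpha\notin\{1,2\}$; and since $S(2,1)=S(2,2)=1$, the formula gives $\frac{1!\cdot 1}{N^2}=\frac{1}{N^2}$ when $\alpha=1$ and $\frac{2!\cdot 1}{N^2}=\frac{2}{N^2}$ when $\alpha=2$, which is uniformly written as $\frac{\alpha}{N^2}$ for $\alpha\in\{1,2\}$. There is no real obstacle here; the only point worth stating carefully is the identification of ``set of sampled elements equals $A$'' with surjectivity onto $A$ (in particular that we must include the constraint that \emph{every} element of $A$ is hit, not merely that all samples land in $A$), after which the count is the textbook surjection formula.
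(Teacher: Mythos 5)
Your proof is correct and follows essentially the same route as the paper's: count the $N^j$ equally likely ordered tuples, identify the favorable ones with surjections onto $A$, and count surjections as (partition of $[j]$ into $\alpha$ blocks) $\times$ (bijection to $A$), giving $\alpha!\,S(j,\alpha)$. The $j=2$ specialization is handled identically.
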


Using \Cref{fact:importance-distribution-correct}, we can now infer that for any cut $C_i\in \Cresp$ and any $j\in \{2, 3, \ldots, 7\}$, we have
\begin{equation}\label{eq:qj}
    q_j(C_i)=\frac{1}{|{\cal T}|}\cdot \sum_{T\in\mathcal{T}} \frac{(|C_i\cap T|)!\cdot S(j, |C_i\cap T|)}{(n-1)^j},
\end{equation}
where $S(j, \alpha)$ is the Stirling number of the second kind.

We will show later that in the sampling process for $q_j$ described above, cuts of value at least $1.5\lambda$ contribute at most $\tO(n^{1.5})$ to the relative variance of the estimator. But, the contribution of $1.5\lambda$-weak cuts to the relative variance of the estimator can be as large as $\Omega(n^2)$. So, we cannot simply repeat the sampling to obtain an estimator with constant relative variance. 

Instead, we mix an additional distribution in $q$ that we call $q_1$. This distribution $q_1$ amplifies the weight on the small cuts, so that they contribute less to the relative variance. Alternatively said, $q_1$ dampens the effect on the relative variance of the very large cuts. This distribution is based on a tree packing on a contracted graph. 

To define the contracted graph, 
we construct a Gomory-Hu tree $Y$ (\Cref{def:gomory-hu}) of our input graph $G$. Let $\tau$ be the $\sqrt{n}$-th smallest edge weight in $Y$. We contract all edges of weight at least $\tau$ in $Y$; the sets of vertices that are contracted into single nodes are now contracted in $G$ as well. This results in a graph $G'$ and its Gomory-Hu tree $Y'$, both on $n' \le \sqrt{n}$ vertices. Next, we repeat the sparsification (\Cref{lem:sparsify}) and tree packing (\Cref{lem:tree-packing}) steps on $G'$ to obtain a tree packing $\mathcal{T}'$ on a sparsifier $H'$ of $G'$. By the same argument as for $\mathcal{T}$, we have $|\mathcal{T}'| = O(\log^3 n) = \tO(1)$. Now we define $q_1$ as follows: First, pick a tree $T'$ in $\mathcal{T}'$ uniformly at random. Then, choose 2 edges {\em with replacement} uniformly at random from $T'$. Finally, define the sampled cut $C$ as the unique cut in $G'$ (and therefore in $G$) that intersects $T'$ at precisely the chosen edges. (Note that because of sampling with replacement, the number of edges in the intersection can be either 1 or 2.) 
Using \Cref{fact:importance-distribution-correct}, we can precisely state the probability $q_1(C_i)$ for any cut $C_i\in \Cresp$:
\begin{equation}\label{eq:q1}
    q_1(C_i)=\frac{1}{|{\cal T}'|} \cdot \sum_{T'\in\mathcal{T}':|C_i\cap T'|\in\{1, 2\}}\frac{|C_i\cap T'|}{(n'-1)^2}.
\end{equation}

We have given a set of distributions $q_1, q_2, \ldots, q_7$ for defining the cut $C$. Finally, we combine these distributions uniformly: namely, we choose an index $j\in \{1, 2, \ldots, 7\}$ uniformly at random and then choose $C$ according to distribution $q_j$. Thus, $q(C_i) = \frac{1}{7}\sum_{j=1}^7 q_j(C_i)$ for $C_i\in \Cresp$.

Recall that in \Cref{lem:importance-ex-var}, we bounded the relative variance of the estimator by $\frac{1}{\zresp}\cdot \max_{C_i\in \Cresp} \frac{p^{c_i}}{q(C_i)}$. 
We start by obtaining a bound on $\frac{1}{q(C_i)}$ by individually bounding $\frac{1}{q_j(C_i)}$ for every $j\in \{1, 2, \ldots, 7\}$.

\begin{lemma}\label{lem:sample-wt}
If a cut $C_i$ 2-respects some tree in $\mathcal{T}'$, then $\frac{1}{q(C_i)} \le \tO(n)$. For $j\in \{2, 3, \ldots, 7\}$, if a cut $C_i$ $j$-respects some tree in $\mathcal{T}$, then $\frac{1}{q(C_i)}\le \tO(n^j)$.
\end{lemma}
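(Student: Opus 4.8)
The plan is to lower bound $q(C_i)$ by isolating a single favorable term in the mixture. Since $q(C_i)=\frac17\sum_{j=1}^7 q_j(C_i)\ge \frac17 q_j(C_i)$ for every $j$, it suffices, in each of the two cases, to give a good lower bound on one particular $q_j(C_i)$ using the tree that $C_i$ respects. The one structural fact we need up front is that any nonempty cut crosses any spanning tree at least once: if $C_i$ separates $V$ into $(S,\bar S)$ with both parts nonempty, then connectivity of the tree forces at least one tree edge across $(S,\bar S)$. Consequently, whenever $C_i$ $k$-respects a tree $T$ we have $1\le |C_i\cap T|\le k$, so the corresponding term in the definition of $q_j$ (or $q_1$) is strictly positive.

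For the first statement, suppose $C_i$ $2$-respects some $T'\in\mathcal{T}'$, so $|C_i\cap T'|\in\{1,2\}$ and hence $T'$ is one of the trees appearing in the sum \eqref{eq:q1}. Keeping only this term gives
\[
    q_1(C_i)\ \ge\ \frac{1}{|\mathcal{T}'|}\cdot\frac{|C_i\cap T'|}{(n'-1)^2}\ \ge\ \frac{1}{|\mathcal{T}'|\,(n'-1)^2}.
\]
By construction $n'\le\sqrt{n}$, so $(n'-1)^2\le n$, and $|\mathcal{T}'|=O(\log^3 n)=\tO(1)$; thus $q_1(C_i)\ge \tilde\Omega(1/n)$ and $\tfrac{1}{q(C_i)}\le \tfrac{7}{q_1(C_i)}\le \tO(n)$. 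For the second statement, suppose $C_i$ $j$-respects some $T\in\mathcal{T}$ with $j\in\{2,\dots,7\}$, and set $\alpha=|C_i\cap T|$, so $1\le\alpha\le j$. The Stirling number $S(j,\alpha)$ is a positive integer for $1\le\alpha\le j$, so $\alpha!\,S(j,\alpha)\ge 1$, and keeping only the $T$-term in \eqref{eq:qj} gives
\[
    q_j(C_i)\ \ge\ \frac{1}{|\mathcal{T}|}\cdot\frac{\alpha!\,S(j,\alpha)}{(n-1)^j}\ \ge\ \frac{1}{|\mathcal{T}|\,(n-1)^j}.
\]
Since $|\mathcal{T}|=\lambda_H=O(\log^3 n)=\tO(1)$ by \Cref{lem:tree-packing} applied to the sparsifier $H$, we get $q_j(C_i)\ge \tilde\Omega(n^{-j})$ and hence $\tfrac{1}{q(C_i)}\le \tfrac{7}{q_j(C_i)}\le \tO(n^j)$.

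There is no serious obstacle here: the proof is a direct unwrapping of the definitions of $q_1,\dots,q_7$ together with \Cref{fact:importance-distribution-correct}. The only points requiring a little care are (i) the ``a cut crosses a spanning tree at least once'' observation, which guarantees the chosen term is nonzero; (ii) the positivity of $S(j,\alpha)$ on the range $1\le\alpha\le j$; and (iii) bookkeeping that the packings $\mathcal{T},\mathcal{T}'$ have polylogarithmic size and that $n'\le\sqrt{n}$, so that the denominator in $q_1$ is $(n'-1)^2\le n$ rather than $n^2$ --- this last point is exactly where contracting the Gomory--Hu tree down to $\le\sqrt n$ vertices before building $\mathcal{T}'$ pays off, turning the $\Omega(n^2)$ cost for small cuts into $\tO(n)$.
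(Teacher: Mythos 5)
Your proof is correct and matches the paper's own argument essentially line for line: in both cases you isolate the favorable $q_j$ term (with a factor $\tfrac17$ from the mixture), lower bound it via a single tree in the packing, and plug in $n'\le\sqrt n$, $|\mathcal{T}|,|\mathcal{T}'|=\tO(1)$. The only addition is that you make explicit the observation that $|C_i\cap T|\ge 1$ because a cut must cross any spanning tree, which the paper leaves implicit.
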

\begin{proof}
    First, consider a cut $C_i$ that 2-respects a tree $T'\in \mathcal{T}'$. Since $|C_i\cap T'|\in\{1, 2\}$, we have 
    $$q(C_i) \ge \frac 17 \cdot q_1(C_i) \ge \frac{1}{7\cdot |\mathcal{T}'|\cdot (n'-1)^2}$$
    We have $|\mathcal{T}'| = O(\log^3 n)$ and $n' \le \sqrt{n}$. Hence, $\frac{1}{q(C_i)} \le \tO(n)$.
    
    For $C_i$ that $j$-respects $T\in \mathcal{T}$ for some $j\in \{2, 3, \ldots, 7\}$, we have $1\le |C_i\cap T|\le j$. Note that $S(j, \alpha)\ge 1$ for $1\le\alpha\le j$. Therefore,
    $$q(C_i) \ge \frac 17 \cdot q_j(C_i) \ge \frac{1}{7\cdot |\mathcal{T}|\cdot (n-1)^j}$$
    Since $|\mathcal{T}|=O(\log^3 n)$, we have $\frac{1}{q(C_i)} \le  \tO(n^j)$.
\end{proof}

Next, we bound the expression $\frac{1}{\zresp}\cdot \frac{p^{c_i}}{q(C_i)}$ for every cut $C_i\in \Cresp$, thereby bounding the relative variance using \Cref{lem:importance-ex-var}.

\begin{lemma}\label{lem:importance-relvar}
Assume $p^{\lambda}\le O(n^{-3})$. For every cut $C_i\in \Cresp$, we have  $\frac{1}{\zresp}\cdot \frac{p^{c_i}}{q(C_i)}\le \tO(n^{1.5})$.
\end{lemma}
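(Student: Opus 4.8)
The plan is to combine two kinds of estimates that are already available. First, \Cref{lem:sample-wt} gives per-cut lower bounds on the mixture distribution: $1/q(C_i)\le\tO(n^j)$ whenever $C_i$ $j$-respects some tree in $\mathcal T$, and $1/q(C_i)\le\tO(n)$ whenever $C_i$ $2$-respects some tree in $\mathcal T'$. Second, I would use two lower bounds on $\zresp$: the trivial one $\zresp\ge p^\lambda$ (a minimum cut lies in $\Cresp$ by \Cref{lem:cut-intersect-tree}), and one that activates when the threshold $\tau$ is small, namely $\zresp\ge\sqrt n\cdot p^\tau$. The latter holds because the fundamental cuts of the $\sqrt n$ lowest-weight edges of the Gomory-Hu tree $Y$ are $\sqrt n$ distinct cuts of value at most $\tau$, and whenever $\tau<4\lambda$ (the only regime that arises) \Cref{lem:cut-intersect-tree} places all of them in $\Cresp$. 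With these in hand I would split on $\alpha:=c_i/\lambda\ge 1$, with threshold $\alpha=7/6$. Everything below is conditioned on the high-probability events of the two invocations of \Cref{lem:sparsify}.

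For $\alpha\ge 7/6$, I would bound the quantity using only the packing $\mathcal T$ and $\zresp\ge p^\lambda$. Let $j\in\{2,\dots,7\}$ be minimal with $c_i\le(j+1-\beta/\log n)\tfrac\lambda2$ (set $j=7$ if no such index exists, which is still valid since $C_i\in\Cresp$ $7$-respects a tree). Then
\[
\frac1{\zresp}\cdot\frac{p^{c_i}}{q(C_i)}\;\le\;\tO(n^j)\,(p^\lambda)^{\alpha-1}\;\le\;\tO(n^j)\bigl(O(n^{-3})\bigr)^{\alpha-1},
\]
using $p^\lambda\le O(n^{-3})$. A short case analysis on $j$ finishes this: when $\alpha\le\tfrac32-\tfrac\beta{2\log n}$ one has $j=2$ and the exponent of $n$ is $5-3\alpha\le\tfrac32$; when $j\ge3$, minimality forces $\alpha>\tfrac j2-\tfrac\beta{2\log n}$, so $n^j<n^{2\alpha+\beta/\log n}$ and the exponent is $3-\alpha+O(1/\log n)$, which is at most $\tfrac32+o(1)$ since then $\alpha>\tfrac32-o(1)$; and if $\alpha$ is unbounded (only in the $j=7$ fallback) one keeps $\bigl(O(n^{-3})\bigr)^{\alpha-1}$ intact and uses $O(n^{-3})<1$ to get $\tO(n^{-1/2})$ outright. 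The key point making this robust is that $(p^\lambda)^{\alpha-1}$, a positive power of a quantity at most $O(n^{-3})$, can only be upper-bounded by a positive power of $n^{-3}$, so the additive $O(1/\log n)$ in exponents contributes only $n^{O(1/\log n)}=O(1)$ and no spurious $p^{-\Theta(\lambda)}=n^{\omega(1)}$ factor ever appears.

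For $\alpha<7/6$ — the near-minimum cuts, where $q_2,\dots,q_7$ only guarantee $1/q(C_i)\le\tO(n^2)$, too weak against $\zresp\ge p^\lambda$ — I would use a dichotomy on $c_i$ versus $\tau$. If $c_i<\tau$, then $C_i$ survives the contraction defining $G'$: if it split a contracted component $B$, taking $u,v\in B$ on opposite sides of $C_i$ would give $c_i\ge\text{mincut}_G(u,v)\ge\tau$ by the Gomory-Hu property, a contradiction. Moreover $c_i<\tfrac76\lambda\le(3-\beta/\log n)\tfrac{\lambda_{G'}}2$ for $n$ large (using $\lambda_{G'}\ge\lambda$), so \Cref{lem:cut-intersect-tree} applied to $G'$ shows $C_i$ $2$-respects a tree in $\mathcal T'$; hence $1/q(C_i)\le\tO(n)$ and $\frac1{\zresp}\cdot\frac{p^{c_i}}{q(C_i)}\le\tO(n)\cdot\frac{p^{c_i}}{p^\lambda}\le\tO(n)$. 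If instead $c_i\ge\tau$, then $\tau\le c_i<4\lambda$, so the $\sqrt n$ fundamental Gomory-Hu cuts of value $\le\tau$ all lie in $\Cresp$ and give $\zresp\ge\sqrt n\,p^\tau\ge\sqrt n\,p^{c_i}$; together with $1/q(C_i)\le\tO(n^2)$ this yields $\frac1{\zresp}\cdot\frac{p^{c_i}}{q(C_i)}\le\tO(n^2)/\sqrt n=\tO(n^{1.5})$.

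The main obstacle is precisely this last case. For cuts of value close to $\lambda$ the packing $\mathcal T$ gives too coarse a sampling probability ($\tO(1/n^2)$), which is exactly why the auxiliary distribution $q_1$ on the $\sqrt n$-vertex contracted graph $G'$ is introduced — it samples such a cut with probability $\tO(1/n)$. But $q_1$ only helps for cuts that survive the contraction, so the heart of the argument is the observation that a near-minimum cut that fails to survive must be ``paid for'' by $\zresp$ itself being large: non-survival forces $c_i\ge\tau$, which in turn forces $\sqrt n$ small Gomory-Hu cuts contributing $\ge\sqrt n\,p^{c_i}$ to $\zresp$. The secondary subtlety, handled as noted above, is keeping the $O(1/\log n)$ slack from \Cref{lem:cut-intersect-tree} from ever being exponentiated against $p^{-\lambda}$.
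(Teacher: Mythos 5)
Your proof is correct and follows essentially the same route as the paper. The two key ideas are identical: for cuts not too close to the minimum, combine the $j$-respecting bound $1/q(C_i)\le\tO(n^j)$ from \Cref{lem:sample-wt} with $\zresp\ge p^\lambda$ (the paper does this via \Cref{lem:gamma-z-bound}); for near-minimum cuts, use the $\tau$-dichotomy — $q_1$ handles cuts that survive the contraction to $G'$, and non-surviving cuts force $c_i\ge\tau$, whence $\zresp\ge\sqrt n\,p^\tau\ge\sqrt n\,p^{c_i}$. The only deviations are cosmetic: you split at $\alpha=7/6$ while the paper splits at $\approx 3/2$ (and separately peels off $c_i\ge 3.5\lambda$ with the coarse $\tO(n^7)$ bound), and you phrase the slack from \Cref{lem:cut-intersect-tree} in terms of $n^{\beta/\log n}=O(1)$ rather than tracking $\beta\delta$ explicitly — the observation that the $O(1/\log n)$ additive slack only ever exponentiates against $n$ and not against $p^{-\lambda}$ is exactly what makes both arguments go through. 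Your aside that "$\tau<4\lambda$" suffices for the $\sqrt n$ small Gomory-Hu cuts to be in $\Cresp$ has a slight imprecision (the exact threshold from \Cref{lem:cut-intersect-tree} is $(8-\beta/\log n)\lambda/2$, slightly less than $4\lambda$), but it is harmless since in the regime where it is invoked you actually have $\tau\le c_i<\tfrac76\lambda$.
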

\begin{proof}
By \Cref{lem:sample-wt}, $\frac{1}{q(C_i)} = \tO(n^7)$ for every cut $C_i\in \Cresp$. This is quite a loose bound but it already suffices for large cuts, namely when $c_i \ge 3.5\lambda$. In this case, $p^{c_i-\lambda}\le p^{2.5\lambda} \le O(n^{-7.5})$. Therefore, 
\[
\frac{1}{\zresp}\cdot \frac{p^{c_i}}{q(C_i)} \le \frac{p^{c_i-\lambda}}{q(C_i)}\le O(n^{-7.5})\cdot \tO(n^7) \le O(1).
\]

Next, suppose the value of $C_i$ satisfies
$(j-\beta\delta)\cdot \frac{\lambda}{2} < c_i \le (j+1-\beta\delta)\cdot \frac{\lambda}{2}$ for some $j \in \{3, 4, \ldots, 7\}$, where  $\delta=\frac{1}{\log n}$ and $\beta$ is the constant in \Cref{lem:cut-intersect-tree}. Then, from \Cref{lem:cut-intersect-tree}, we know that $C_i$ $j$-respects some tree $T$ in the packing $\cal T$. Thus, $\frac{1}{q(C_i)} = \tO(n^j)$ by \Cref{lem:sample-wt}. 
Since $p^{\lambda} \le O(n^{-3})$ and $\frac{j-\beta\delta}{2}>1$, we get 
\[
\frac{p^{c_i}}{\zresp} 
\le p^{c_i-\lambda}
\le p^{ (j-\beta\delta)\cdot \frac{\lambda}{2} - \lambda}
= \left(p^{\lambda}\right)^{\frac{j-\beta\delta}{2}-1}
\le \left(O(n^{-3})\right)^{\frac{j-\beta\delta}{2}-1}
\le O(n^{3-1.5j}).
\]
So $\frac{p^{c_i}}{\zresp}\cdot \frac{1}{q(C_i)} \le \tO(n^{3-0.5j})\le \tO(n^{1.5})$ since $j\ge 3$.

Note that the previous case covers all $C_i\in \Cresp$ satisfying $c_i > (3-\beta\delta) \cdot \frac{\lambda}{2}$.
The remaining case is $c_i \le (3-\beta\delta) \cdot \frac{\lambda}{2}$. By \Cref{lem:cut-intersect-tree}, $C_i$ 2-respects some tree $T$ in the packing  $\mathcal{T}$, and hence $\frac{1}{q(C_i)} = \tO(n^2)$ by \Cref{lem:sample-wt}. Now, there are two subcases depending on whether $c_i \ge \tau$ or $c_i < \tau$. (Recall that $\tau$ is the value of $\sqrt{n}$th smallest weight edge in the Gomory-Hu tree $Y$ of $G$.) 

Consider $c_i \ge \tau$. In this case, $Y$ (and therefore $G$) has at least $\sqrt{n}$ cuts of value at most $\tau$. Since $\tau \le c_i < 1.5\lambda$, it follows that all these $\sqrt{n}$ cuts are in $\Cresp$. For each such cut $C$, we have $p^{|C|} \ge p^{\tau}$. Therefore, $\zresp\ge \sqrt{n}\cdot p^\tau$. Hence,
$$\frac{1}{q(C_i)}\cdot \frac{p^{c_i}}{\zresp}\le \tO(n^2) \cdot \frac{p^{\tau}}{\zresp} \le \tO(n^{1.5}).$$

Next, consider $c_i < \tau$. In this case, we claim that the cut $C_i$ is preserved in $G'$. Since the edges that are contracted in $Y$ to form $Y'$ are all of value at least $\tau$, it follows that any pair of vertices in $G$ that are contracted to the same node in $G'$ must be $\tau$-connected. On the other hand, every edge in $C_i$ is in a cut of value strictly less than $\tau$, and therefore, the vertices at the ends of the edge are not $\tau$-connected. In particular, note that the minimum cuts in $G$ are also preserved in $G'$, and therefore, the min-cut value of $G'$ is also $\lambda$. Therefore, by \Cref{lem:cut-intersect-tree}, we can conclude that $C_i$ 2-respects some tree in $\mathcal{T}'$. It follows by \Cref{lem:sample-wt} that $\frac{1}{q(C_i)} = \tO(n)$. We can now bound $\frac{1}{q(C_i)}\cdot \frac{p^{c_i}}{\zresp} \le \frac{1}{q(C_i)} = \tO(n)$.
\end{proof}

Finally, we combine the results to give a proof of \Cref{lem:interface-importance}.

\begin{proof}[Proof of \Cref{lem:interface-importance}]
    We show that the average of multiple samples of our estimator $X$ satisfies the lemma.

    By \Cref{lem:importance-ex-var}, $\E[X]=\zresp$.
    Using \Cref{lem:importance-relvar} and \Cref{lem:importance-ex-var}, we conclude that the relative variance of our estimator is $\tO(n^{1.5})$. Therefore, by repeating the sampling algorithm $\tO(n^{1.5})$ times, and using \Cref{lem:relvar-multiply}, we obtain an unbiased estimator for $\zresp$ with relative variance $O(1)$. By \Cref{lem:zresp-approx-ugp}, this estimator has relative bias $O\left(\frac{\log n}{\sqrt{n}}\right)$ for $\ugp$.
    
    We will show in \Cref{thm:running-time} that the algorithm has preprocessing time $m^{1+o(1)}$ and each sample takes $\tO(1)$ time. We take $\tO(n^{1.5})$ samples, so the running time is $m^{1+o(1)}+\tO(n^{1.5})$.
\end{proof}

\eat{

\subsection{Finer Sampling for 2-respecting Cuts}
By extending the first distribution on contracted Gomory-Hu tree to multiple levels, we can further reduce the likelihood ratio $p^{c_i}/(q(C_i)z)$ for 2-respecting cuts to $n^{1+\kappa}$ for any constant $\kappa>0$. This will work for general $p$. This is not needed in the current algorithm.

In the Gomory-Hu tree, let $\tau_j$ be the $n^{1-2^{-j}}$-th smallest cut value. Let $G_j$ be the graph formed by contracting all Gomory-Hu tree edges of cut value $\ge \tau(j)$ in $G$. The vertex size of contracted graph $G_j$ becomes $n_j\le n^{1-2^{-j}}$. Apply \Cref{lem:sparsify,lem:tree-packing} again to construct tree packing $\mathcal{T}_j$ on a sparsifier of $G_j$, and assume the sparsifier property holds.

We define $\lceil\log\frac{1}{\kappa}\rceil$ distributions. To sample from the $j$-th distribution, $j=1, 2, \ldots, \lceil\log\frac{1}{\kappa}\rceil$, uniformly pick a tree in $\mathcal{T}_j$, then uniformly pick 2 tree edges with replacement to define a cut. Formally,
\[q'_j(C_i)=\frac{\sum_{T\in\mathcal{T}_j:|C_i\cap T|\in\{1, 2\}} |C_i\cap T|}{|\mathcal{T}_j|(n_j-1)^2}\]
The overall distribution $q'$ is the average of $\{q'_j\}_{1\le j\le \lceil\log 1/\kappa\rceil}$ and the 2-respecting distribution $q_2$ in the original graph.

\begin{fact}\label{fact:sample-wt-2resp}
If a cut $C_i$ 2-respects some tree in $\mathcal{T}_j$, then $1/q'(C_i) \le \tO(n^{2^{1-j}-2})$. 
\end{fact}

\begin{lemma}\label{lem:rel-var-2resp}
For any 2-respecting cut $C_i$, $\frac{p^{c_i}}{q'(C_i)\cdot z}\le O(n^{1+\kappa}\log n)$.
\end{lemma}
\begin{proof}
Because we add the 2-respecting distribution $q_2$ in the original graph and $C_i$ is 2-respecting, $1/q'(C_i) = \tO(n^2)$. When $c_i\ge \tau_{\lceil\log\frac{1}{\kappa}\rceil}$,
$$\frac{p^{c_i}}{q'(C_i)\cdot z}\le \frac{p^{\tau_{\lceil\log\frac{1}{\kappa}\rceil}}}{z} \tO(n^2) \le n^{\kappa-1}\tO(n^2) = \tO(n^{1+\kappa})$$

When $\tau_{j-1}\le c_i < \tau_j$, $j=2, \ldots, \lceil\log\frac{1}{\kappa}\rceil$, by the construction of $G_j$, $C_i$ is preserved in $G_j$. Because the min cut is preserved in $G_j$, min cut value of $G_j$ is also $\lambda$. By \Cref{lem:cut-intersect-tree}, $C_i$ 2-respects some tree in $\mathcal{T}_j$, so by \Cref{fact:sample-wt-2resp}, $1/q(C_i)=\tO(n^{2^{1-j}-2})$. $\tau_{j-1}$ is defined to be $n^{1-2^{1-j}}$-th smallest cut value in the Gomory-Hu tree, so $z\ge n^{1-2^{1-j}}p^{c_i}$. 
We have
$$\frac{p^{c_i}}{q'(C_i)\cdot z}\le \frac{p^{c_i}}{z}\cdot\frac{1}{q'(C_i)}\le n^{2^{1-j}-1}\tO(n^{2-2^{1-j}}) = \tO(n)$$

Finally when $c_i\le \tau_1$, $C_i$ is preserved in $G_1$, and $1/q'(C_i)=\tO(n)$ by \Cref{fact:sample-wt-2resp}.
\end{proof}

}

\subsection{Running Time of the Algorithm}
We now show that after $m^{1+o(1)}$ preprocessing time, we can sample the estimator described in the previous section in $\tO(1)$ time. Since the estimation algorithm is repeated $\tO(n^{1.5})$ times, this yields an overall running time of $m^{1+o(1)} + \tO(n^{1.5})$.

\paragraph{Preprocessing of the algorithm.}
Preprocessing involves the following steps: computing a Gomory-Hu tree (takes $m^{1+o(1)}$ time for an unweighted graph by \cite{Abboud22}), sparsification using uniform probabilities (takes $O(m)$ time by \Cref{lem:sparsify}), tree packing (takes $\tO(m\lambda_H) = \tO(m)$ time by \Cref{lem:tree-packing} since $\lambda_H = \tO(1)$), and contractions on the Gomory-Hu tree (takes $O(m)$ time using, e.g., breadth-first search). Therefore, preprocessing takes a total of $m^{1+o(1)}$ time.

\paragraph{Data structure for cut queries.}
We are left to show that we can sample the estimator in $\tO(1)$ time. For this purpose, 
we describe a data structure with respect to any graph $G$ and a tree $T$ defined on the vertices $V$ of $G$. (In general, we do not require $T$ to be a subgraph of $G$.) First, we define a mapping of $G = (V, E)$ into a set of points in $\RR^2$. To define the mapping, we define an order on the vertices $V$ in $G$. We create an Euler tour of $T$ starting at an arbitrary vertex; call the resulting vertex sequence $\euler(T)$. Next, we order vertices by their first appearance in $\euler(T)$; we call this order the {\em preorder} sequence of the vertices and denote it $v_1, v_2, \ldots, v_n$. Next, for every edge $e = (v_i, v_j)$ in $G$ where $i < j$, we map it to the point $(i, j)$ on $\RR^2$. 

Our data structure, which we denote $\bS_{G, T}$, supports 2-dimensional {\em orthogonal range queries} on the set of points corresponding to the edges in $G$. Namely, given a pair of intervals $U$ and $W$ in $\RR$, the data structure reports the number of points in the rectangle $U \times W$; we denote this count $\bS_{G, T}(U, W)$. In particular, suppose $U$ and $W$ correspond to {\em disjoint} sets of vertices that are contiguous in the preorder sequence and $U$ comes before $W$ in the sense that for every $v_i\in U, v_j\in W$, we have $i < j$. Then, $\bS_{G, T}(U, W) = |\{(v_i, v_j)\in E: v_i\in U, v_j\in W\}|$. In other words, it counts the number of edges between the vertex sets $U$ and $W$. This data structure can be implemented using standard results in computational geometry, e.g., \cite{Lueker78}. For a set of $m$ points, the data structure has a preprocessing time of $\tO(m)$ and supports orthogonal range queries (i.e., reports $\bS_{G, T}(U, W)$ for given $U, W$) in $O(\log^2 m)$ time.

On top of this data structure, we need to maintain some additional information. Recall the Euler tour $\euler(T)$ of tree $T$. Note that every edge is traversed twice in $\euler(T)$. We label the traversal of an edge by the number of distinct vertices in $V$ that we have seen so far in the Euler tour. This means that for any $i$, the Euler tour visits all edges that have a label of $i$ between the first occurrences of $v_i$ and $v_{i+1}$ in $\euler(T)$. Conversely, each edge gets two labels $i$ and $j$ and is visited between the first occurrences of $v_i, v_{i+1}$ and $v_j, v_{j+1}$ in $\euler(T)$. We denote the two labels for edge $e$ by the (two-element) set $L(e)$ and collectively, $\bL_T = \{L(e): e\in T\}$. Note that we can compute $\bL_T$ in $O(n)$ preprocessing time by a single traversal of $\euler(T)$.

Let $\chi$ be a set of edges in $T$. There is a unique cut in $T$ such that the cut edges are exactly those in $\chi$. We denote the vertex bipartition of this cut by $C(\chi)$. Since $G$ and $T$ are defined on the same set of vertices, the vertex bipartition $C(\chi)$ also induces a cut in $G$; overloading notation, we call this cut $C(\chi)$ as well. Note that in particular, if $T$ is a subgraph of $G$, then $\chi = T\cap C(\chi)$.

Using our data structures $\bS_{G, T}$ and $\bL_T$, we now show that the following cut query can be answered efficiently: {\em given an edge set $\chi$ in $T$ of constant size, calculate the cut value of the vertex bipartition $C(\chi)$ in $G$.}
\begin{lemma}
\label{lem:range}
    Suppose we have the data structures $\bS_{G, T}$ and $\bL_T$ for a graph $G$ and a spanning tree $T$ of $G$. Now, for any set $\chi$ of a constant number of edges in $T$, the value of the cut $C(\chi)$ in $G$ can be computed using a constant number of orthogonal range queries on $\bS_{G, T}$. Therefore, the total query time for $\chi$ is $\tO(1)$.
\end{lemma}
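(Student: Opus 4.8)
The plan is to reduce the computation of $d_G(C(\chi))$ to a constant number of orthogonal range queries by writing each side of the cut as a small union of preorder-contiguous vertex blocks. First I would root $T$ at the starting vertex $v_1$ of the Euler tour, so that $v_1,\dots,v_n$ is exactly a DFS preorder. Then every tree edge $e\in\chi$ has the form $(\mathrm{parent},\mathrm{child})$, and the subtree of $T$ hanging below $e$ (rooted at the child endpoint) occupies a \emph{contiguous} interval $I_e$ of preorder indices. Crucially, the data structure $\bL_T$ lets us read off $I_e$ in $O(1)$ time for each $e\in\chi$: the two labels of $e$ are recorded precisely during the downward and upward traversals of $e$ in $\euler(T)$, and so pin down the first and last preorder index spanned by that subtree (the exact offset being immediate from the construction of $\bL_T$).

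Next I would observe that $\{I_e: e\in\chi\}$ is a laminar family, since any two subtrees of $T$ are nested or disjoint; as $|\chi|=O(1)$, its nesting forest can be computed in $O(1)$ time by sorting the $2|\chi|$ endpoints. Deleting $\chi$ from $T$ yields at most $|\chi|+1$ connected components, and the component just below an edge $e$ is $I_e$ with its \emph{maximal} proper sub-intervals $I_{e'}\subsetneq I_e$ removed (and the component containing the root is $\{1,\dots,n\}$ with the maximal $I_e$'s removed). Hence each component is a union of $O(|\chi|)$ disjoint preorder intervals, and altogether these $O(|\chi|)$ intervals $J_1,\dots,J_t$ (with $t=O(|\chi|)$) partition $\{1,\dots,n\}$. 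The cut $C(\chi)$ is the unique proper $2$-coloring of the component tree (whose edge set is $\chi$), and since along $\chi$ the nesting depth changes by exactly one, this $2$-coloring is just the parity of the nesting depth of the corresponding interval in the laminar family. Thus I can tag each $J_a$ with the side of $C(\chi)$ it lies on, all in $O(1)$ time.

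Finally, $d_G(C(\chi))$ equals the number of edges of $G$ whose two endpoints lie in intervals on opposite sides. For an ordered pair of disjoint intervals $J_a$ (earlier) and $J_b$ (later) lying on opposite sides, this count is exactly $\bS_{G,T}(J_a,J_b)$: an edge $\{v_i,v_j\}$ with $i<j$ is mapped to the point $(i,j)$, which lies in the rectangle $J_a\times J_b$ iff $v_i\in J_a$ and $v_j\in J_b$, i.e. iff that edge crosses from $J_a$ to $J_b$ — this is the property of $\bS_{G,T}$ already recorded above. Since the $J_a$'s are pairwise disjoint sub-intervals of $\{1,\dots,n\}$, any two distinct ones are comparable in preorder, so for each of the $\binom{t}{2}=O(1)$ unordered pairs on opposite sides I orient it (earlier interval first) and issue one range query; summing these queries counts every crossing edge exactly once and no non-crossing edge. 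Each query takes $O(\log^2 m)$ time, and the bookkeeping above is $O(1)$, so the total is $O(\log^2 m)=\tO(1)$.

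There is no deep obstacle here; the effort is entirely in the bookkeeping of the second step — correctly decomposing each component of $T\setminus\chi$ into $O(|\chi|)$ preorder intervals via the laminar structure and assigning the correct side via nesting parity — and in matching the index convention of $\bL_T$ when translating a label pair into the interval $I_e$. All of this is routine given $|\chi|=O(1)$ and the recorded semantics of $\bS_{G,T}$ and $\bL_T$.
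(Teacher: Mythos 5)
Your proposal is correct, and it lands on the same underlying decomposition as the paper — a partition of the preorder index set into $O(|\chi|)$ intervals whose endpoints are read off from $\bL_T(\chi)$, a two-coloring of those intervals by cut side, and one range query per cross-side pair of intervals — but the derivation and the correctness argument are genuinely different. The paper never mentions subtrees or laminar families: it simply sorts the $2|\chi|$ labels, forms the $2|\chi|+1$ blocks between consecutive (sorted) labels, alternates them by index parity into $V_{\odd}$ and $V_{\even}$, and justifies the coloring in one line by noting that the Euler tour switches sides of $C(\chi)$ each time it traverses a marked edge (with coincident labels producing empty blocks, so the index-parity alternation is still correct when several marked edges share a label). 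You instead root $T$ at $v_1$, read each $L(e)$ as the preorder interval $I_e$ of the subtree hanging below $e$, observe these intervals are laminar, decompose each component of $T\setminus\chi$ as an interval minus its maximal proper sub-intervals, and assign sides by nesting-depth parity, which you correctly identify with the proper $2$-coloring of the component tree. Your route is more explicit about the combinatorial structure (components, laminar nesting, component tree) at the cost of more bookkeeping; the paper's route is shorter and sidesteps the laminar machinery entirely, but its one-line side-switching justification is terser and relies on the reader seeing why coincident labels don't break the alternation. Both arrive at the same $O(|\chi|^2) = O(1)$ range queries and the same $\tO(1)$ bound, so your argument would serve just as well.
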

\begin{proof}
Let $\bL_T(\chi)$ be the multiset of labels $L(e)$ corresponding to the edges $e\in \chi$. Since each edge has two labels and $|\chi| = O(1)$, this multiset has constant size. Next, we sort the labels in $\bL_T(\chi)$; denote this sorted sequence $\ell_1, \ell_2, \ldots, \ell_{2|\chi|}$. We append this list with $\ell_0 = 0$ and $\ell_{2|\chi|+1} = n$. For any pair of consecutive labels $\ell_i, \ell_{i+1}$ for $i = 0, 1, \ldots, 2|\chi|$ in this appended list, we define a vertex set $V_i = \{v_j: \ell_i < j \le \ell_{i+1}\}$. Note that it is possible that $\ell_i = \ell_{i+1}$, in which case $V_i = \emptyset$. Now, define $V_{\odd} = V_1\cup V_3\cup \ldots \cup V_{2|\chi|-1}$ and $V_{\even} = V_0\cup V_2\cup \ldots \cup V_{2|\chi|}$. Note that although the sets $V_{\odd}$ and $V_{\even}$ can be superconstant in size, they can be represented and computed from the labels in $\bL_T(\chi)$ in $O(1)$ time by denoting each interval $V_i$ by its two endpoints $\ell_i+1$ and $\ell_{i+1}$.

Finally, we note that the vertex bipartition $C(\chi)$ is precisely given by the sets $V_{\odd}$ and $V_{\even}$. To see this, consider the Euler tour $\euler(T)$ and mark every edge in $\chi$ on it. Now, the vertices switch from one side of $C(\chi)$ to the other side every time we traverse a marked edge. This corresponds to taking the odd and even indexed sets $V_i$, which is exactly how we defined the sets $V_{\odd}$ and $V_{\even}$. Finally, we take each pair of sets $V_i, V_j$ where $V_i\in V_{\odd}$ and $V_j\in V_{\even}$ and run the query $\bS_{G, T}(V_i, V_j)$ on the data structure $\bS_{G, T}$. Note that this is only $O(1)$ number of queries, and each query takes $O(\log^2 m)$ time by the properties of $\bS_{G, T}$. Finally, we return the sum of the values returned by these queries as the number of edges in cut $C(\chi)$ in graph $G$.
\end{proof}

Specifically, we instantiate the following data structures $\bS$ and $\bL$ for our purpose. 
First, we construct $\bS_{G, T}$ for every tree $T\in {\cal T}\cup {\cal T}'$ (with $G$ as the input graph). 
Note that the data structure requires $T$ to be defined on the same set of vertices as $G$, which does not hold for the trees $T'\in {\cal T}'$. To resolve this, we expand the contracted vertices in $G'$ into connected components in $G$ and connect each component by an arbitrary spanning tree. By doing this, the tree $T'$ is expanded into a tree defined on the vertices $V$. Moreover, this step will not change the intersection of $T'$ with any cut in $G'$. Therefore, we can use the expanded tree to build $\bS_{G, T'}$ and report the correct cut value of $C(\chi)$ for any $\chi$ defined on $T'$.
We also build a data structure $\bS_{T_1, T_2}$ for every pair of trees $T_1, T_2 \in {\cal T}\cup {\cal T}'$. Note that after expansion of the trees in ${\cal T}'$, all the trees $T_1, T_2$ are on the same set of vertices $V$; thus the data structures are well-defined. In addition to these data structures, we also construct the label sets $\bL_T$ for all trees $T\in {\cal T} \cup {\cal T}'$.
Note that the total number of data structures is $\tO(1)$, which implies that the preprocessing time increases only by a $\tO(1)$ factor.

\paragraph{Running time of the sampling algorithm.}
First, we compute the running time for producing a single sample for the estimator $X$ defined in Step 3 of \Cref{sec:sample-algorithm} using the data structures $\bS_{G, T}$ and $\bL_T$ that we defined above.
The first step of the sampling algorithm is to choose a specific distribution $q_j$ from $j\in \{1, 2, \ldots, 7\}$ uniformly at random in $O(1)$ time.
Next, the algorithm chooses a tree $T$ uniformly at random from the packing $\cal T$ (or ${\cal T}'$ for $q_1$) in $\tO(1)$ time.
Third, the algorithm chooses a set $\chi$ of $j$ edges from $T$ uniformly at random with replacement. This takes $\tO(1)$ time since $j$ is a constant.

The only involved step is to calculate the estimator $X$ at this point. Recall that $X$ is defined as $\frac{p^{|C(\chi)|}}{q(C(\chi))}$. First, we compute the cut value $|C(\chi)|$ by querying $\bS_{G, T}$ in $\tO(1)$ time by \Cref{lem:range}. We are left to compute $q(C(\chi))$, which only requires computing  $|C(\chi)\cap T'|$ for every tree $T'\in {\cal T}\cup {\cal T}'$ (where $C(\chi)$ is defined on $G$) by using \Cref{eq:qj,eq:q1}. (Note that the Stirling numbers in \Cref{eq:qj} can be retrieved from a constant-sized table since $j\le 7$.) To compute $|C(\chi)\cap T'|$, we use the data structure $\bS_{T', T}$ to obtain the cut value of $C(\chi)$ in $T'$, which is $|C(\chi)\cap T'|$ where $C(\chi)$ is defined on graph $G$. Overall, this requires $\tO(1)$ queries of the data structures, each of which takes $\tO(1)$ time by \Cref{lem:range}.

We have established the following theorem:
\begin{theorem}\label{thm:running-time}
    There is an algorithm that takes $m^{1+o(1)}$ preprocessing time and $\tO(1)$ query time to produce a sample of the estimator $X$.
\end{theorem}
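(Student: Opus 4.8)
The plan is to prove the theorem by assembling the components already developed, separating the cost into a one-time preprocessing phase and a per-sample query phase, and bounding each. Throughout, the key point is that there are only $\tO(1)$ trees across $\cal T \cup \cal T'$ (since the sparsification parameter is $\delta = 1/\log n$, so $\lambda_H, \lambda_{H'} = O(\log^3 n) = \tO(1)$, and \Cref{lem:tree-packing} produces one tree per unit of min-cut value), so any construction that is near-linear \emph{per tree} or \emph{per pair of trees} stays near-linear overall.

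For preprocessing I would run the four structural subroutines in sequence. First compute a Gomory--Hu tree $Y$ of $G$ using \cite{Abboud22} in $m^{1+o(1)}$ time; this dominates. Then obtain $G'$ and $Y'$ by contracting the heavy edges of $Y$ and carrying the contractions over to $G$ via a single traversal in $O(m)$ time, apply uniform sparsification (\Cref{lem:sparsify}) to $G$ and to $G'$ in $O(m)$ time, and construct the tree packings $\cal T, \cal T'$ via \Cref{lem:tree-packing}, each in $\tO(m)$ time. Next, expand every $T' \in \cal T'$ into a tree on the full vertex set $V$ by replacing each contracted supernode with an arbitrary spanning tree of the corresponding component of $G$; such edges cross no cut of $G'$, so this does not change $|C \cap T'|$ for any cut $C$ of $G'$. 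Finally, for each tree $T \in \cal T \cup \cal T'$ build the orthogonal range data structure $\bS_{G,T}$ (via \cite{Lueker78}, $\tO(m)$ time) and the label set $\bL_T$ (one Euler tour, $O(n)$ time), and for each ordered pair $T_1, T_2 \in \cal T \cup \cal T'$ build $\bS_{T_1,T_2}$. Since there are $\tO(1)$ trees there are $\tO(1)$ such data structures, so total preprocessing is $m^{1+o(1)}$.

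For a single query I would produce a sample of $X$ as follows: draw $j \in \{1,\dots,7\}$ uniformly, draw $T$ uniformly from $\cal T$ (or from $\cal T'$ when $j=1$), and draw $j$ edges of $T$ uniformly with replacement to form $\chi$; since $j$ is constant this is $\tO(1)$ work, and by construction the resulting cut $C(\chi)$ is distributed exactly as $q$. By \Cref{lem:range} applied to $\bS_{G,T}$ and $\bL_T$, the cut value $|C(\chi)|$ in $G$ is computed with $O(1)$ range queries, i.e. in $\tO(1)$ time. It remains to evaluate $q(C(\chi)) = \tfrac17 \sum_{j'} q_{j'}(C(\chi))$ from \eqref{eq:qj} and \eqref{eq:q1}: these expressions depend only on $n$, $n'$, the constant-sized (table-lookup) Stirling numbers $S(j',\alpha)$, and the quantities $|C(\chi) \cap T'|$ over the $\tO(1)$ trees $T' \in \cal T \cup \cal T'$. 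Each $|C(\chi) \cap T'|$ equals the value of the cut $C(\chi)$ inside $T'$, obtained via \Cref{lem:range} applied to $\bS_{T',T}$ and $\bL_{T'}$ in $\tO(1)$ time. Summing the $\tO(1)$ terms yields $q(C(\chi))$, hence $X = p^{|C(\chi)|}/q(C(\chi))$, in $\tO(1)$ total query time.

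The main obstacle is conceptual rather than computational, and is already resolved by the machinery preceding the theorem: the cuts we sample are specified as edge subsets $\chi$ of a tree $T$, yet we must evaluate both their value in $G$ and their intersection with a \emph{different} tree $T'$, where neither $T$ nor $T'$ is a subgraph of the graph whose edges we are counting. \Cref{lem:range} handles this by mapping the ambient graph's edges to points in $\RR^2$ under the preorder induced by an Euler tour of $T$, so that $C(\chi)$ becomes a constant union of contiguous vertex intervals — the odd/even blocks $V_{\odd}, V_{\even}$ read off from $\bL_T(\chi)$ — and hence $O(1)$ orthogonal range queries suffice. The only extra care needed is the expansion step for trees of $\cal T'$, which must preserve cut intersections with $G'$; using an arbitrary spanning tree within each contracted component achieves exactly this. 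Granting \Cref{lem:range} and the component lemmas, the theorem then follows by adding the $\tO(1)$-time operations per sample on top of the $m^{1+o(1)}$ preprocessing.
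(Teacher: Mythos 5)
Your proposal is correct and mirrors the paper's own proof in structure: the same $m^{1+o(1)}$ preprocessing pipeline (Gomory--Hu tree, contraction, sparsification, tree packing, tree expansion, $\tO(1)$ orthogonal-range structures), and the same $\tO(1)$-per-sample query using $\bS_{G,T}$ for $|C(\chi)|$ and $\bS_{T',T}$ for $|C(\chi)\cap T'|$. One small notational slip: when invoking \Cref{lem:range} to compute $|C(\chi)\cap T'|$, the label set paired with $\bS_{T',T}$ should be $\bL_T$ (for the tree $T$ from which $\chi$ was drawn), not $\bL_{T'}$.
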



\section{Recursive Contraction}
\label{sec:contraction}


In this section, we design a recursive contraction algorithm that solves the moderately reliable case, that is, $4n^{-3} \le p^\lambda \le n^{-0.5}$. 
Our goal is to show \Cref{lem:interface-contract}, which we restate below:

\contract*

\eat{
We will show the following lemma, which implies \Cref{thm:main}.
\begin{lemma}\label{lem:interface-contract}
    The algorithm outputs an estimator of $\ugp$ with bias at most $0.1\eps\cdot \ugp$ and second moment at most $\log^{O(1)}n\cdot (\ugp)^2$. The running time is $m^{1+o(1)}+\tO(n^{1.5}\eps^{-1})$.
\end{lemma}
\begin{proof}
    The lemma follows \Cref{lem:contract-bias,lem:contract-var,lem:contraction-runtime}.
\end{proof}

\begin{proof}[Proof of \Cref{thm:main}]
Let $X$ be the estimator output by the algorithm.
From \Cref{lem:interface-contract}, we have $$\eta[X]<\frac{\E[X^2]}{\E[X]^2}\le \frac{\log^{O(1)}n\cdot (\ugp)^2}{(1-0.1\eps)^2(\ugp)^2}\le \tO(1).$$

By \Cref{lem:mc-sample}, we can run the algorithm $\tO\left(\frac{\eta[X]}{\eps^{2}}\right) =\tO(\eps^{-2})$ times to get a $(1+\frac{\eps}{2})$-approximation of $\E[X]$ whp.
Because $\E[X]$ is a $(1+0.1\eps)$-approximation of $\ugp$, the aggregated estimator is a $(1+\eps)$-approximation of $\ugp$.
Each run takes $m^{1+o(1)}+\tO(n^{1.5}\eps^{-1})$ time, so the overall running time is $m^{1+o(1)}\eps^{-1}+\tO(n^{1.5}\eps^{-3})$.
\end{proof}
}

\eat{
Our goal is to show the following theorem: 
\begin{theorem}\label{thm:running-time-with-lambda}
There is an algorithm that outputs a $(1+\epsilon)$-approximation of $u_G(p)$ in $\tilde O(m^{1+o(1)}+n^{1.5}\lambda^{1+o(1)})$ time.
\end{theorem}
Combining it with the $\lambda=\tilde O(1)$ bound after sparsification (\Cref{sec:sparsify}), we obtain the desired bound of $\tilde O(m^{1+o(1)}+n^{1.5})$ and establish \Cref{thm:main}.
}

\subsection{Description of the Algorithm}\label{sec:contraction-algorithm}

We use a parameter $\gamma$ to measure how many near-minimum cuts contribute significantly to $\zgp$. Intuitively, if there are very few such cuts, then the average degree is much larger than $\lambda$. In this case, random contraction will significantly decrease the size of the graph. On the other hand, if there are many near-minimum cuts, then the variance of the estimator $u_H(p/q)$ will be small. Our definition of $\gamma$ allows us to smoothly interpolate between these two extreme cases.

We choose $\gamma$ as follows. First, compute the Gomory-Hu tree $Y$ of $G$ (\Cref{def:gomory-hu}). Let $S_k$ be the sum of the $k$ smallest weights among the tree edges. Then,
\[
    \gamma = \min\left\{\lambda,\, \frac 34 \cdot \frac{S_k}{k}\right\} \text{ for } k=2^{-2/3}\cdot n.
\]    

Set $q$ so that $q^{\gamma} = \nicefrac 12$. The algorithm samples two graphs independently $H_1,H_2\sim G(q)$ and returns the average of $u_{H_1}(p/q)$ and $u_{H_2}(p/q)$, which are computed recursively. Note that $G(q)$ represents (the distribution of) the random graph generated by contracting each edge in $G$ independently with probability $1-q$. 

It will be important to ensure that $m \le n^{1.5-\xi}$ for some small enough constant $\xi > 0$ whenever we apply the recursive contraction step. If this condition is violated, we perform two algorithmic steps that restore the condition in expectation. The first step is to reduce the minimum cut of the graph to $\tilde{\lambda} = O(\log^3 n)$ if it is larger. For this purpose, we use the (standard) {\em sparsification} algorithm  in \Cref{sec:sparsify} with sparsification parameter $\delta = \frac{1}{\log n}$. To control the additional variance caused by this sparsification step, we also branch whenever we sparsify. Namely, instead of creating a single sparsifier, we create two independent sparsifier graphs $\tG_1$ and $\tG_2$ from graph $G$ using the same sparsification parameter. We now recurse on these graphs. In particular, if any of the graphs satisfies one of the base cases given below, then we run the corresponding base case algorithm. If not, then we must perform recursive contraction. This means that we draw two independent random samples each from $\tG(q)$ for any such graph $\tG$ produced after sparsification. Let $\tilde{m}$ be the number of edges in any graph drawn from $\tG(q)$. The important property that we we will establish (using \Cref{lem:contraction-size-bound}) is that $\E[\tilde{m}] = O(n\tilde{\lambda}) = \tO(n)$. Now, the algorithm recurses on this graph. In other words, if this new graph has $\tilde{n}$ vertices, then we check if $\tilde{m} > \tilde{n}^{1.5-\xi}$ and continue the algorithm.

There are three base cases in the recursion:
\begin{enumerate}
    \item When $n$ is less than some sufficiently large constant (polynomial in $\eps^{-1}$, will be decided in \Cref{lem:contract-bias}), we use Karger's algorithm (\Cref{thm:karger-estimator}) 
    to compute an unbiased estimator of $u_G(p)$ with relative variance $O(1)$. 
    \item When $p\ge \theta$ or $p^\lambda >  n^{-0.5}$, we use Monte Carlo sampling (\Cref{sec:mc}) to get an unbiased estimator of $u_G(p)$ with relative variance $O(1)$ (\Cref{lem:mc,lem:mc-2step}).
    \item When $p^\lambda < 4n^{-3}$, we use importance sampling on a spanning tree packing (\Cref{sec:importance}) to get an estimate of $z_7(p)$, which is a biased estimator for $\ugp$ with bias $O\left(\frac{\log n}{\sqrt{n}}\right)\cdot \ugp$ and relative variance $O(1)$ (\Cref{lem:interface-importance}).
\end{enumerate}


\eat{
\begin{lemma}\label{lem:contract-final-output}
    Assume $|\E[X]-u|\le 0.1\eps^2 u$ and $\E[X^2] \le \log^K n \cdot u^2$ for constants $\eps \in (0, 1)$ and $K>0$. By first taking the average of $O\left(\frac{\log^K n}{\eps^2}\right)$ independent samples, and then taking the median over $O(\log n)$ such averages, we get a $(1\pm \eps)$-approximation of $u$ whp.
\end{lemma}
\begin{proof}
    Let $\tX$ be the average of $k=\frac{20\log^K n}{\eps^2}$ i.i.d.\ samples of $X$. Note that $\E[\tX] = \E[X]$.
    \begin{align*}
        \E[\tX^2] 
        &= \frac{1}{k}\cdot \E[X^2] + \frac{k-1}{k}\cdot \E^2[X]
        \le \frac{\eps^2 u^2}{20} + (1+0.1\eps^2)^2 \cdot u^2
        \le u^2 + 0.26\eps^2 u^2.\\
        \E[(\tX-u)^2] 
        &= \E[\tX^2] + u^2 -2 u \cdot \E[X]
        \le u^2 + 0.26 \eps^2 u^2 + u^2 - 2(1-0.1\eps^2) u^2
        \le 0.5\eps^2 u^2.
    \end{align*}
    By Markov's inequality,
    \[\Pr[|\tX-u|\ge \eps u] = \Pr[(\tX-u)^2 \ge \eps^2 u^2] \le \frac{\E[(\tX-u)^2]}{\eps^2u^2} \le \frac 12.\]
    If the median falls outside $(1\pm \eps)u$, then half of the samples of $\tX$ fall outside. If we take $2d\log n$ samples, then the failure probability is at most $(\frac 12)^{d\log n} = n^{-d}$.
\end{proof}
}

\subsection{Properties of the Parameter $\gamma$}
The results in this section characterize important properties of $\gamma$ that allow us to simultaneously relate $\gamma$ to the shrinkage of the graph under random contractions and the variance of the recursive estimator for unreliability. 
\begin{fact}\label{fact:gamma-lambda-bound}
$\frac 34 \lambda \le \gamma \le \lambda$.
\end{fact}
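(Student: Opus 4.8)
The plan is to unpack the definition $\gamma = \min\{\lambda,\, \tfrac34 \cdot \tfrac{S_k}{k}\}$ with $k = 2^{-2/3}n$ and argue the two inequalities separately. The upper bound $\gamma \le \lambda$ is immediate, since $\gamma$ is defined as a minimum whose first argument is $\lambda$. So the only content is the lower bound $\gamma \ge \tfrac34\lambda$, and for this it suffices to show $\tfrac34 \cdot \tfrac{S_k}{k} \ge \tfrac34\lambda$, i.e. that the \emph{average} of the $k$ smallest tree-edge weights in the Gomory–Hu tree $Y$ is at least $\lambda$.

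The key step is the observation that every edge weight in a Gomory–Hu tree $Y$ of $G$ is at least $\lambda$, the global minimum cut value of $G$. Indeed, by \Cref{def:gomory-hu}, each tree edge $e$ of $Y$ corresponds to a minimum $(s,t)$-cut in $G$ for its two endpoints $s,t$ (more precisely, for the endpoints of the corresponding path), and its weight equals the value of that cut in $G$; any $(s,t)$-cut is in particular a cut of $G$, hence has value at least $\lambda$. Therefore each of the $k$ smallest tree-edge weights is $\ge \lambda$, so $S_k \ge k\lambda$ and $\tfrac{S_k}{k} \ge \lambda$. Multiplying by $\tfrac34$ gives $\tfrac34\cdot\tfrac{S_k}{k} \ge \tfrac34\lambda$, and combining with $\lambda \ge \tfrac34\lambda$ yields $\gamma = \min\{\lambda,\, \tfrac34\cdot\tfrac{S_k}{k}\} \ge \tfrac34\lambda$, as required.

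There is essentially no obstacle here: the proof is a one-line consequence of the Gomory–Hu tree property, and the only mild point to be careful about is that $k = 2^{-2/3}n$ should be read as (say) $\lfloor 2^{-2/3}n\rfloor$ so that $S_k$ is well-defined, and that $Y$ has $n-1$ edges so $k \le n-1$ for $n$ large enough — both are harmless. I would state the Gomory–Hu observation as a displayed inequality $\min_{e \in Y} w(e) \ge \lambda$, note $S_k \ge k\lambda$, and conclude in one more line.
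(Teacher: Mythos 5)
Your proof is correct and matches the paper's argument: both reduce to observing that every Gomory--Hu tree edge weight is at least $\lambda$ (since it is the value of some cut in $G$), so $S_k \ge k\lambda$ and hence $\tfrac34\cdot\tfrac{S_k}{k} \ge \tfrac34\lambda$, with the upper bound $\gamma\le\lambda$ immediate from the definition as a minimum.
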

\begin{proof}
Recall that $S_k$ is the sum of the $k$ edges with minimum weight in the Gomory-Hu tree and $\gamma=\min\left\{\lambda, \frac 34\cdot \frac{S_k}{k}\right\}$ for $k=2^{-2/3}n$. If $\gamma=\lambda$, the statement is trivial. Next, assume $\gamma = \frac34\cdot \frac{S_k}{k}\le \lambda$. Because each Gomory-Hu tree edge represents a cut in $G$ whose value is at least that of the minimum cut $\lambda$, we have $S_k \ge k\lambda$. Hence, $\gamma = \frac34\cdot \frac{S_k}{k} \ge \frac 34 \lambda$.
\end{proof}

\begin{lemma}\label{lem:gamma-z-bound}
    Assume $p^\lambda \ge 4n^{-3}$. Then $\zgp\ge p^\gamma$.
\end{lemma}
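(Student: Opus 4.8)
The plan is to split on the two cases in the definition $\gamma=\min\{\lambda,\ \tfrac34\cdot \tfrac{S_k}{k}\}$ with $k=2^{-2/3}n$. If $\gamma=\lambda$, the claim is immediate: $G$ has at least one cut of value exactly $\lambda$, so since $0<p<1$ and all cut values are at least $\lambda$, we get $\zgp=\sum_{C_i}p^{c_i}\ge p^\lambda=p^\gamma$.

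The substantive case is $\gamma=\tfrac34\cdot\tfrac{S_k}{k}<\lambda$. Here I would use the Gomory--Hu tree $Y$ as a source of cuts: its $k$ lightest edges correspond to $k$ distinct cuts of $G$ with values $w_1\le\cdots\le w_k$ whose sum is exactly $S_k$. Restricting the sum defining $\zgp$ to just these $k$ cuts and applying convexity of $x\mapsto p^x$ (Jensen's inequality), I obtain
\[
\zgp \ \ge\ \sum_{i=1}^{k} p^{w_i}\ \ge\ k\,p^{S_k/k}\ =\ k\,p^{4\gamma/3}\ =\ 2^{-2/3}n\cdot p^{4\gamma/3}.
\]
Factoring out $p^\gamma$, it suffices to check that $2^{-2/3}n\cdot p^{\gamma/3}\ge 1$, i.e.\ that $p^\gamma\ge 4n^{-3}$. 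This is exactly where the hypothesis $p^\lambda\ge 4n^{-3}$ enters: by \Cref{fact:gamma-lambda-bound} we have $\gamma\le\lambda$, and since $p<1$ this gives $p^\gamma\ge p^\lambda\ge 4n^{-3}$, which closes the inequality.

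I do not expect a real obstacle; the one thing that must be handled carefully is the particular choice $k=2^{-2/3}n$, which is calibrated precisely so that the step $k\,p^{4\gamma/3}\ge p^\gamma$ goes through under $p^\lambda\ge 4n^{-3}$ (and if $k$ must be an integer, replacing it by $\lfloor 2^{-2/3}n\rfloor$ is harmless). A minor point to record is that the $k$ lightest Gomory--Hu edges really do give $k$ genuinely distinct cuts of $G$: $Y$ is a tree on $n$ vertices, so it has $n-1>k$ edges, and removing distinct tree edges induces distinct vertex bipartitions, each of which is a cut of $G$ of the corresponding weight by \Cref{def:gomory-hu}.
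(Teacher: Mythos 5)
Your proof is correct and follows essentially the same route as the paper: split on which term achieves the minimum in the definition of $\gamma$, restrict $\zgp$ to the $k=2^{-2/3}n$ cuts represented by the lightest Gomory--Hu tree edges, apply Jensen's inequality to get $\zgp\ge k\,p^{S_k/k}=2^{-2/3}n\cdot p^{4\gamma/3}$, and close with $\gamma\le\lambda$ and the hypothesis $p^\lambda\ge 4n^{-3}$. The only cosmetic difference is how the final arithmetic is organized (you isolate the condition $p^\gamma\ge 4n^{-3}$, while the paper rewrites $2^{-2/3}n\ge p^{-\lambda/3}$ and absorbs it into the exponent); both are equivalent.
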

\begin{proof}
If $\gamma=\lambda$, then $\zgp \ge p^\lambda$ is trivial. Next assume $\gamma=\frac 34\cdot \frac{S_k}{k}$.
Consider the partial sum of the $k=2^{-2/3}\cdot n$ smallest cuts represented by edges in the Gomory-Hu tree. Let their cut values be $(a_1, \ldots, a_k)$. 
\begin{equation}\label{eq:z}
\zgp \ge \sum_{i=1}^k p^{a_i}
= k \left(\frac 1k  \sum_{i=1}^k p^{a_i}\right)
\ge k\cdot p^{\frac 1k \sum_{i=1}^k a_i} \text{(by convexity)}
= kp^{S_k/k}
= 2^{-2/3}\cdot n\cdot p^{\frac 43 \gamma}.
\end{equation}
Since $p^\lambda \ge 4n^{-3}$, we have $p^{-\lambda/3} \le 2^{-2/3} \cdot n$. 
Therefore, from \Cref{eq:z}, we get
\[\zgp\ge p^{\frac 43 \gamma -\frac 13 \lambda} \ge p^{\frac 43 \gamma -\frac 13 \gamma} \text{(since $\gamma \le \lambda$)} = p^\gamma. \qedhere \]
\end{proof}

The next two claims are crucial in our analysis of the shrinkage of the graph under random contractions.
\Cref{lem:gamma-deg-bound} gives a lower bound on the number of edges in the graph relative to the number of vertices as we randomly contract edges.
This can be compared to the bound of $\frac{|E(H)|}{|V(H)|}\ge \frac{1}{2}\lambda$ used in \cite{Karger20}. The latter bound can be derived from the fact that the value of the minimum cut in $H$, obtained by contracting edges in $G$, is at least $\lambda$, which is the value of the minimum cut in $G$.  Therefore, the degree of every vertex in $H$ is at least $\lambda$. In comparison, our use of the parameter $\gamma$ allows us to derive a bound of $\frac{|E(H)|}{|V(H)|-1}\ge \frac{2}{3}\gamma$ on the shrinkage of the graph. To compare the two bounds, note that $\frac 23 \gamma \ge \frac 12 \lambda$ since $\gamma \ge \frac 34 \lambda$ by \Cref{fact:gamma-lambda-bound}. Thus, our bound is always (asymptotically) better, which will be crucial in obtaining the better running time bound.
\eat{
\begin{fact}\label{fact:gamma-deg-bound}
    For any $k'\ge 2^{-2/3}\cdot n$, we have $\gamma \le \frac 34 \frac{S_{k'}}{k'}$.
\end{fact}
\begin{proof}
    Because $S_i$ is the prefix sum of a nondecreasing sequence, for $k'\ge k$ we have $\frac{S_{k'}}{k'}\ge \frac{S_{k}}{k} \ge \frac 43 \gamma$.
\end{proof}
}

\begin{lemma}\label{lem:gamma-deg-bound}
    Suppose $H$ is formed by contraction from $G$, and $|V(H)| > 2^{-2/3}\cdot n$. Then 
    \[\frac{|E(H)|}{|V(H)|-1} \ge \frac 23 \gamma.\]
\end{lemma}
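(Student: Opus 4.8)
The plan is to relate the number of edges of $H$ to a sum of Gomory–Hu cut values, and then invoke the definition of $\gamma$. First I would recall that $H$ is obtained from $G$ by contracting a subset of edges, so the vertices of $H$ correspond to a partition of $V(G)$ into $|V(H)|$ connected ``supernodes''. For each supernode, its degree in $H$ (counting parallel edges, which are exactly the $G$-edges crossing between supernodes) equals the value in $G$ of the cut separating that supernode from the rest. The key point is that this cut is in particular a cut separating some pair of vertices $s,t$, so its value is at least the minimum $(s,t)$-cut value, which by the defining property of the Gomory–Hu tree $Y$ is the minimum edge weight on the $s$–$t$ path in $Y$. More usefully, I would use the standard fact that the $|V(H)|$ supernode-cuts, when ordered and compared against the Gomory–Hu tree, have total value at least the sum of the $|V(H)|-1$ smallest edge weights of $Y$ — intuitively, contracting down to $|V(H)|$ supernodes ``uses up'' at least $|V(H)|-1$ of the tree edges, and each supernode degree dominates the tree edge(s) it corresponds to.

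Concretely, the cleanest route: $2|E(H)| = \sum_{\text{supernodes } U} d_G(U)$, where $d_G(U)$ is the value in $G$ of the cut $(U, V\setminus U)$. I would argue $\sum_U d_G(U) \ge 2 S_{|V(H)|-1}'$ where I want this to connect to $S_k$. The honest statement is that each $d_G(U) \ge$ (the minimum weight Gomory–Hu tree edge incident to the supernode $U$ after contracting $Y$ accordingly), and summing over a spanning-tree-like structure of $|V(H)|-1$ tree edges, each counted at most twice, gives $\sum_U d_G(U) \ge 2\cdot(\text{sum of some } |V(H)|-1 \text{ tree edge weights}) \ge 2 S_{|V(H)|-1}$, since $S_{|V(H)|-1}$ is the sum of the \emph{smallest} $|V(H)|-1$ weights. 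Hence $|E(H)| \ge S_{|V(H)|-1}$, so
\[
\frac{|E(H)|}{|V(H)|-1} \ge \frac{S_{|V(H)|-1}}{|V(H)|-1}.
\]
Because the Gomory–Hu edge weights, sorted nondecreasingly, have nondecreasing prefix averages, and because $|V(H)|-1 \ge 2^{-2/3} n - 1 \ge k$ (taking $k = 2^{-2/3} n$, modulo rounding, and using $|V(H)| > 2^{-2/3} n$), we get $\frac{S_{|V(H)|-1}}{|V(H)|-1} \ge \frac{S_k}{k} \ge \frac{4}{3}\gamma$ by the definition $\gamma = \min\{\lambda, \frac34 \cdot \frac{S_k}{k}\}$. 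Combining, $\frac{|E(H)|}{|V(H)|-1} \ge \frac{4}{3}\gamma \ge \frac{2}{3}\gamma$; in fact one gets the stronger $\frac43\gamma$, but $\frac23\gamma$ is what is needed. (If instead $\gamma = \lambda$, then simply $\frac{|E(H)|}{|V(H)|-1} \ge \frac{|E(H)|}{|V(H)|} \ge \frac{\lambda}{2} \ge \frac23\gamma$ using that every supernode has degree $\ge \lambda$.)

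The main obstacle I anticipate is making rigorous the claim that $|E(H)| \ge S_{|V(H)|-1}$, i.e., that contracting $G$ down to $r := |V(H)|$ supernodes leaves at least as many edges as the sum of the $r-1$ smallest Gomory–Hu tree edge weights. The clean way is to argue on the Gomory–Hu tree directly: contract $Y$ according to the same supernode partition to get a tree $Y'$ on $r$ nodes with $r-1$ edges, each of which is a (contracted) original tree edge; each such edge $(U,U')$ of $Y'$ corresponds to a Gomory–Hu edge of $G$ of some weight $w$, and the cut it represents in $G'$ has value $\ge w$ (it still separates the relevant $s,t$ pair). Summing $2|E(H)| = \sum_U d_G(U) = \sum_{\text{edges of }Y'} 2 d_G(\text{that cut}) \ge 2\sum w \ge 2 S_{r-1}$, where the first rewriting uses that $Y'$ is a tree so each supernode cut decomposes across tree edges — wait, that's not quite an identity, so I would instead lower bound $d_G(U)$ for each supernode by the min-weight $Y'$-edge at $U$ and count each $Y'$-edge at most twice, directly yielding $|E(H)| \ge \sum_{\text{edges of }Y'} w \ge S_{r-1}$. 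Pinning down this counting argument carefully (and handling the rounding in $k = 2^{-2/3} n$ versus $r - 1$) is where the real work lies; everything else is the convexity/monotonicity observation about prefix averages plus the two-case split on whether $\gamma = \lambda$.
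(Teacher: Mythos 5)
Your plan is structurally the same as the paper's: relate $2|E(H)|=\sum_U \deg_H(U)$ to a sum of Gomory--Hu edge weights, pick out $|V(H)|-1$ distinct tree edges whose weights are each dominated by a supernode degree, then use the nondecreasing-prefix-average property of the sorted GH weights together with the definition of $\gamma$. However, two things go wrong in the execution.

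First, the intermediate claim $|E(H)|\ge S_{|V(H)|-1}$ (equivalently $\sum_U d_G(U)\ge 2S_{|V(H)|-1}$), and the consequent ``one actually gets $\tfrac43\gamma$,'' are false. Take $G=H$ to be the $n$-cycle: then $|E(H)|=n$, but the Gomory--Hu tree is a path with all weights $2$, so $S_{n-1}=2(n-1)>n$ for $n\ge 3$. The correct inequality is $\sum_U \deg_H(U)\ge S_{|V(H)|-1}$, i.e.\ $2|E(H)|\ge S_{|V(H)|-1}$, with no extra factor of $2$, and this is precisely what the paper proves and what yields the stated $\tfrac23\gamma$ bound. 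The factor of $2$ in $2|E(H)|=\sum_U \deg_H(U)$ does not get to be reused: each selected tree edge is charged to exactly one supernode-degree, not to two. The observation that you seemed to get a stronger constant ``for free'' should have been a warning sign.

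Second, the counting argument itself is not tight enough to be made rigorous as written. Lower-bounding each $d_G(U)$ by the \emph{minimum-weight} incident edge in the contracted tree $Y'$ and ``counting each edge at most twice'' gives at best $\sum_U d_G(U)\ge \sum_U w(e_{\min}(U))$, and this latter sum need not dominate $\sum_{e\in Y'} w(e)$ (e.g.\ on a weighted path where the middle weight is large, the middle edge is never selected as anyone's minimum). Moreover, the claim that contracting $Y$ by the supernode partition yields a tree $Y'$ on $r$ nodes with exactly $r-1$ edges tacitly assumes each supernode induces a connected subtree of $Y$; the supernodes are connected in $G$, not necessarily in $Y$. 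The paper sidesteps both issues by rooting $Y$, and for each non-root supernode $W$ taking the \emph{first} tree edge $(u_W,v_W)$ on the root-to-$W$ path that crosses into $W$: this produces $|V(H)|-1$ pairwise distinct $Y$-edges (distinctness because $\phi(v_W)=W$), each satisfying $\mathrm{wt}_Y(u_W,v_W)\le \deg_H(W)$, and summing gives exactly $\sum_{e\in T'}\mathrm{wt}_Y(e)\le 2|E(H)|$. You should adopt that construction; once you have it, the rest of your prefix-average argument (and the trivial $\gamma=\lambda$ case) goes through.
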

\begin{proof}
Root the Gomory-Hu tree $Y$ at an arbitrary vertex $r$. Let $\text{wt}_Y(\cdot)$ denote edge weights in $Y$. Let $k'=|V(H)| > k = 2^{-2/3}\cdot n$. Let $\phi:V(G)\to V(H)$ be the contraction map that takes each vertex of $G$ to its contracted node in $H$. Note that the nodes in $H$ correspond to a partition of the vertices in $G$. Overloading notation, we also use $\phi$ to map a vertex in $V(G)$ to the subset of this partition that the vertex belongs to. 

For any $W\in V(H)\setminus\{\phi(r)\}$, pick an arbitrary vertex $w \in V(G)$ such that $\phi(w)=W$. Since $\phi(w)\ne \phi(r)$, the tree path from $r$ to $w$ in $Y$ contains at least one tree edge $(u_W, v_W)$ (where $u_W$ is the parent of $v_W$ in $T$) such that $\phi(u_W)\ne W$ and $\phi(v_W)=W$. Let $T'$ be the collection of such edges, i.e., $T' = \{(u_W, v_W): W\in V(H)\setminus \{\phi(r)\}\}$. Note that for $W \not= W'$, we have $(u_W, v_W) \not= (u_{W'}, v_{W'})$ since $\phi(v_W) = W$ and $\phi(v_{W'}) = W'$. Thus, $|T'|= k'-1$.

For any $W\in V(H)\setminus \{\phi(r)\}$, the edge $(u_W, v_W)$ represents a minimum $(u_W, v_W)$ cut in $G$. Note also that the degree cut of $W$ in $H$ (denote its value $\deg_H(W)$) is a $(u_W, v_W)$ cut in $G$. Thus, $\text{wt}_Y(u_W, v_W)\le \deg_H(W)$. Taking the sum over edges in $T'$ gives
\begin{equation}\label{eq:gh1}
\sum_{e\in T'}\text{wt}_Y(e)
    \le \sum_{W\in V(H)\setminus\{\phi(r)\}}\deg_H(W)
    \le \sum_{W\in V(H)}\deg_H(W).
\end{equation}

Recall that $S_{k'-1}$ denotes the the sum of the smallest $k'-1$ weights of edges in $Y$. Hence, we have $\sum_{e\in T'}\text{wt}_Y(e) \ge S_{k'-1}$. Now, note that $S_{k'-1}$ is the prefix sum of a nondecreasing sequence, where $k'-1 \ge k$. Thus, $\frac{S_{k'-1}}{k'-1}\ge \frac{S_{k}}{k} \ge \frac 43 \gamma$ by the definition of $\gamma$. Putting these together, we get 
\begin{equation}\label{eq:gh2}
\frac 43 \gamma \le \frac{S_{k'-1}}{k'-1} \le \frac{1}{k'-1}\cdot \sum_{e\in T'}\text{wt}_Y(e).
\end{equation}


Combining \Cref{eq:gh1,eq:gh2}, we get
\[
    \frac 23 \gamma
    \stackrel{(\ref{eq:gh2})}{\le} \frac{\nicefrac 12}{k'-1}\cdot\sum_{e\in T'}   \text{wt}_Y(e)
    \stackrel{(\ref{eq:gh1})}{\le} \frac{\nicefrac 12}{k'-1}\cdot \sum_{W\in V(H)}\deg_H(W)
    = \frac{|E(H)|}{k'-1} \quad\text{(since $\sum_{w\in V(H)}\deg_H(w) = 2|E(H)|$)}.  \qedhere
\]
\end{proof}

An important consequence of the previous lemma (\Cref{lem:gamma-deg-bound}) is that the number of vertices in the graph after a contraction step can be bounded as follows:

\begin{lemma} \label{lem:contract-size-dec}
$H\sim G(q)$ has vertex size at most $(2^{-2/3} +n^{-0.1})\cdot n$ whp.
\end{lemma}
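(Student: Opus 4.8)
The plan is to combine the density bound of \Cref{lem:gamma-deg-bound}, which controls the \emph{expected} shrinkage, with a concentration inequality that turns this into a high-probability statement. Write $\pi := 1-q$; then $H \sim G(q)$ is obtained by placing each edge of $G$ into a random set $F$ independently with probability $\pi$ and contracting $F$, so $|V(H)|$ equals the number of connected components $c(F)$ of $(V,F)$, and $n - |V(H)| = \operatorname{rank}(F)$ is the size of a spanning forest of $F$. Since $q^{2\gamma/3} = (q^\gamma)^{2/3} = 2^{-2/3}$, the target is $|V(H)| \le q^{2\gamma/3} n + n^{0.9}$. I would prove (i) $\E[|V(H)|] \le 2^{-2/3} n + o(n^{0.9})$ and (ii) $|V(H)| \le \E[|V(H)|] + o(n^{0.9})$ whp; together these give the claim. (\Cref{lem:contraction-size-bound} alone does not suffice: it bounds $\E[|E(H)|]$, and combining it with \Cref{lem:gamma-deg-bound} via Markov's inequality gives only $\Pr[|V(H)| > 2^{-2/3}n] \le C$ for some constant $C > 1$ — the constants are just barely on the wrong side, which is exactly why the $n^{0.9}$ slack and a concentration step are unavoidable.)

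For (i), reveal the edges of $F$ one at a time in uniformly random order $f_1, f_2, \dots$; let $H^{(j)}$ be obtained from $G$ by contracting $f_1, \dots, f_j$, and $c^{(j)} := |V(H^{(j)})|$, so $c^{(0)} = n$ and $|V(H)| = c^{(|F|)}$. The crucial observation is that, conditioned on $f_1, \dots, f_{j-1}$ (and on $|F| \ge j$), the next edge $f_j$ is uniform over \emph{all} $m - j + 1$ not-yet-revealed edges of $G$: conditioned on the first $j-1$ revealed edges, the remaining edges of $F$ form a uniformly random subset of $E(G) \setminus \{f_1, \dots, f_{j-1}\}$, and a uniformly random element of a uniformly random subset of a set is uniform over the whole set. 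Since $f_1, \dots, f_{j-1}$ all lie inside components of $H^{(j-1)}$, the edges of $G$ crossing $H^{(j-1)}$ are exactly the not-yet-revealed crossing edges, so $\Pr[f_j \text{ joins two components} \mid f_1, \dots, f_{j-1}] = |E(H^{(j-1)})| / (m - j + 1)$. As long as $c^{(j-1)} > 2^{-2/3} n$, \Cref{lem:gamma-deg-bound} gives $|E(H^{(j-1)})| \ge \tfrac23 \gamma (c^{(j-1)} - 1)$, so this probability is at least $\tfrac23\gamma \cdot \tfrac{c^{(j-1)} - 1}{m - j + 1}$, hence $\E[c^{(j)} - 1 \mid f_1, \dots, f_{j-1}] \le (c^{(j-1)} - 1)\bigl(1 - \tfrac23\gamma \cdot \tfrac{1}{m-j+1}\bigr)$. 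Stopping the recursion the first time $c^{(\cdot)}$ falls to $2^{-2/3}n$ (after which it stays at most that), iterating over $j \le |F| \approx (1-q)m$, and using $\prod_{j \le (1-q)m}\bigl(1 - \tfrac23\gamma \cdot \tfrac1{m-j+1}\bigr) \le \exp\bigl(-\tfrac23\gamma \sum_{j \le (1-q)m} \tfrac1{m-j+1}\bigr) \le \exp\bigl(-\tfrac23\gamma \ln \tfrac1q\bigr) = q^{2\gamma/3} = 2^{-2/3}$, gives $\E[|V(H)|] \le 2^{-2/3} n + o(n^{0.9})$. The randomness of $|F| \sim \operatorname{Bin}(m, \pi)$ only perturbs the exponent by a lower-order amount since $m \ge \lambda n / 2 \ge \gamma n/2$, and it is absorbed via an optional-stopping / Wald argument together with concentration of $|F|$ around $(1-q)m$.

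For (ii), $\operatorname{rank}(F)$ is a function of the $m$ independent indicators $\{\mathbbm{1}[e \in F]\}_{e \in E(G)}$ that changes by at most $1$ under any single flip (adding or deleting one edge changes the rank of the graphic matroid by at most $1$), so McDiarmid's bounded-differences inequality gives $\operatorname{rank}(F) \ge \E[\operatorname{rank}(F)] - O(\sqrt{m \log n})$ whp, that is $|V(H)| \le \E[|V(H)|] + O(\sqrt{m \log n})$ whp. Since $m \le n^{1.5-\xi}$ in every invocation of this lemma (guaranteed by the interleaved sparsification steps), this slack is $o(n^{0.9})$, and combined with (i) we obtain $|V(H)| \le 2^{-2/3} n + o(n^{0.9}) \le (2^{-2/3} + n^{-0.1}) n$ whp for $n$ above the base-case threshold. (The dependence on $m$ can be removed: $\operatorname{rank}(F)$ is in fact a \emph{self-bounding} function of these indicators, which yields slack $O(\sqrt{n \log n})$ for arbitrary $m$.)

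The step I expect to be the main obstacle is (i): \Cref{lem:gamma-deg-bound} applies only while more than $2^{-2/3}n$ components remain, and $|F|$ is itself random, so some care with the stopping time and with the conditioning is needed to make the multiplicative decay telescope cleanly to the factor $q^{2\gamma/3} = 2^{-2/3}$. The remaining ingredients — the distributional fact about $f_j$, McDiarmid's inequality, and the arithmetic — are routine.
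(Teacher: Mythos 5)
Your decomposition into (i) an expectation bound and (ii) concentration is structurally reasonable, and ingredient (ii) is fine (McDiarmid with $m\le n^{1.5-\xi}$, or the self-bounding refinement for general $m$, both give $o(n^{0.9})$ slack). The distributional fact about $f_j$ — that conditioned on $f_1,\dots,f_{j-1}$ and $|F|\ge j$ the next revealed edge is uniform over the $m-j+1$ remaining edges — is also correct (it follows from exchangeability of the i.i.d.\ edge indicators). However, there is a genuine gap in (i), and it is larger than the ``care with the stopping time'' you flag. The one-step recurrence
$\E[c^{(j)}-1\mid\mathcal F_{j-1}]\le(c^{(j-1)}-1)(1-\alpha_j)$ with $\alpha_j=\tfrac23\gamma/(m-j+1)$ holds \emph{only} on the event $c^{(j-1)}>2^{-2/3}n$, where \Cref{lem:gamma-deg-bound} applies; off that event you only have the trivial monotonicity $c^{(j)}\le c^{(j-1)}$. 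The cleanest summary of what the recurrence gives is $\E[c^{(j)}-1\mid\mathcal F_{j-1}]\le\max\{(c^{(j-1)}-1)(1-\alpha_j),\,c_0-1\}$ with $c_0\approx 2^{-2/3}n$, and this does \emph{not} telescope to $(n-1)\prod_j(1-\alpha_j)+1\approx 2^{-2/3}n$: the map $x\mapsto\max\{x(1-\alpha_j),c_0-1\}$ is convex, so Jensen runs in the wrong direction, and the ``max of the two outcomes'' heuristic is not a valid operation on expectations. Concretely, if you instead track $D_j=\max(c^{(j)}-c_0,0)$ — which is what the stopping-time formulation supports cleanly — you get $\E[D_{|F|}]\lesssim (1-2^{-2/3})n\cdot 2^{-2/3}=\Theta(n)$, hence $\E[|V(H)|]\lesssim 2^{-2/3}n+\Theta(n)\approx 0.86n$, which is useless. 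Bounding $\E[c^{(|F|)}\,;\,T>|F|]$ by $c_0+o(n^{0.9})$ \emph{does} hold, but proving it already requires a concentration argument (the conditioning on $\{T>|F|\}$ biases the process upward, and you must show the process can't ``escape'' far above $c_0$ before running out of reveals) — so your (i)/(ii) split is in a sense circular.

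The paper sidesteps this entirely by working in continuous time and coupling with an auxiliary process that keeps decaying \emph{past} the threshold: edges arrive at i.i.d.\ $\mathrm{Exp}(1)$ times, the inter-drop time $\Delta_k$ from $k$ to $k-1$ components is $\mathrm{Exp}(m_{t(k)})$ with $m_{t(k)}\ge\tfrac23\gamma(k-1)$ for $k>2^{-2/3}n$, and these are stochastically dominated by the corresponding times in the contraction process on a star $R$ (where $m^R_t=n^R_t-1$ exactly). This yields the \emph{deterministic} (under the coupling) inequality $n_t\le\max\{2^{-2/3}n,\;n^R_\tau\}$ at $\tau=\tfrac{2\gamma}{3}t$, and then Hoeffding applied to $n^R_\tau=1+\mathrm{Bin}(n-1,\,2^{-2/3})$ gives the whp bound directly, with no expectation step and no convexity issue. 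To repair your discrete-time version you would need an analogous worst-case coupled process that doesn't stall at $c_0$ (a discrete star-like process); a Bernoulli coupling with success probability $\alpha_j\cdot(\text{current component count}-1)$ is the natural candidate, but those probabilities can exceed $1$ in the relevant range, which is exactly the annoyance the exponential-arrival model is designed to avoid. So this is not a routine technicality but the actual content of the lemma; I'd recommend switching to the paper's coupling.
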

\begin{proof}
The random graph $G(q)$ can be modeled by the following continuous-time random process. Initially, the graph is empty. Edges $e\in E$ arrive at times $t_e$, which are i.i.d.\ exponential variables with rate 1. 

Let us define a graph $G_t$ by contracting all edges in $G$ that arrive before time $t$. Notice that $G_{-\ln q}$ is exactly $G(q)$. Let $n_t$ and $m_t$ be the number of vertices and edges in $G_t$ respectively. 
We consider how the vertex size $n_t$ evolves with $t$. Let $t(k)$ be the first time that $n_t\le k$. Initially $n_0=n$ and $t(n)=0$. Now, $n_t$ decreases by 1 whenever an uncontracted edge arrives. Since there are $m_t$ uncontracted edges at time $t$, the earliest arriving time follows an exponential distribution of rate $m_t$. Let $\Delta_k = t(k-1)-t(k)$; note that $\Delta_k$ follows the exponential distribution of rate $m_{t(k)}$.

Because of the memoryless property of the exponential distribution and independence of edges, $\Delta_k$ for different values of $k$ are independent. By \Cref{lem:gamma-deg-bound}, we have $m_{t(k)} \ge \frac 23 \gamma (k-1)$ for any $k>2^{-2/3}\cdot n$. We couple each $\Delta_k$ with a new random variable $\Delta'_k$, where
$\Delta'_k$ follows an exponential distribution of rate $\frac 23\gamma (k-1)$. Thus, $\Delta_k$ is stochastically dominated by $\Delta'_k$ for any $k>2^{-2/3}\cdot n$.

The random variables $\Delta'_k$ can be coupled with the aforementioned contraction process on a star graph. Let $R$ be a star on $n$ vertices with $n-1$ edges. We use the same notation as above for graph $R$, but with a superscript of $R$. Note that during the contraction process, we always have $m_t^R = n_t^R-1$. 
Therefore, $\Delta^R_k = t^R(k-1)-t^R(k)$ follows an independent exponential distribution of rate $k-1$ by the same argument. We can re-define $\Delta'_k =  \Delta^R_k / (\frac 23 \gamma)$. The coupling now gives that for all $k>2^{-2/3}\cdot n$, we have
\[t(k) = t(k) - t(n) = \sum_{i=k+1}^n \Delta_i \le \sum_{i=k+1}^n \frac{3}{2\gamma}\cdot \Delta^R_i = \frac{3}{2\gamma}\cdot \left(t^R(k)-t^R(n)\right) = \frac{3}{2\gamma}\cdot t^R(k).\]
This implies that either $n_t\le 2^{-2/3} \cdot n$, or $n_t \le n^R_\tau$ where $\tau = \frac{2\gamma}{3}\cdot t$.

Recall that when $t=-\ln q$, $n_t$ is the vertex size of $G(q)$. This corresponds to $\tau = -\frac{2\gamma}{3}\cdot \ln q$ in the contraction process defined for $R$. That is each edge in $R$ fails with probability $e^{-\tau} = q^{\frac{2\gamma}{3}} = 2^{-2/3}$ since $q^{-\gamma} = 2$. Let $X$ be the number of uncontracted edges in $R$ at time $\tau$. Then by Hoeffding's inequality,
\[\Pr[X \ge (1+n^{-0.1})\cdot 2^{-2/3}\cdot (n-1)] \le \exp(-\frac{1}{3}\cdot (1+n^{-0.1})^2 \cdot 2^{-2/3} \cdot (n-1)) = \exp(-\Omega(n^{0.8})).\]
So $n^R_\tau = X+1 \le (2^{-2/3}+n^{-0.1})\cdot n$ whp. By coupling, $|V(H)| = n_{-\ln q}\le (2^{-2/3}+n^{-0.1})\cdot n$ whp.
\end{proof}

\subsection{Bias and Variance of the Estimator}\label{sec:variance-bound}
First, we bound the bias of the estimator. We start by observing that the contraction steps themselves are unbiased. 

\begin{lemma}\label{lem:contract-unbiased}
Assume $H$ is generated by $H\sim G(q)$ for some $q<p$. Then, $u_{H}(p/q)$ is an unbiased estimator of $u_G(p)$.
\end{lemma}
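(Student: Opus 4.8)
The plan is to show that the random contraction process is nothing more than a reinterpretation of independent edge failures, so that sampling $H \sim G(q)$ and then running the $p/q$-failure process on $H$ is equivalent to running the $p$-failure process on $G$ directly. Concretely, I would argue via a two-stage coupling of the edge failures. Imagine the standard experiment on $G$: each edge fails independently with probability $p$. Decompose the failure of an edge $e$ into two independent coin flips: first an "inner" coin that is heads with probability $p/q$, and second an "outer" coin that is heads with probability $q$; edge $e$ fails if and only if \emph{both} come up heads, which happens with probability $q \cdot (p/q) = p$, as required. The outer coins determine exactly which edges survive the first stage, and contracting all edges whose outer coin is heads is precisely the distribution $G(q)$ (each edge is contracted independently with probability $q$ — wait, here we contract edges that survive, i.e. those \emph{not} removed in the outer stage, so the contraction probability is $q$ and failure-in-outer-stage probability is $1-q$; I will match this to the definition $G(q)$ used in the paper, where $G(q)$ contracts each edge independently with probability $1-q$, so I will set the outer coin to be "remove" with probability $1-q$ and "keep/contract" with probability $q$ — then failure of $e$ in $G$ corresponds to $e$ being kept in the outer stage \emph{and} failing its inner coin, again total probability $q\cdot(p/q)=p$). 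Let me restate cleanly below.

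\begin{proof}
Recall that $H \sim G(q)$ is obtained by contracting each edge of $G$ independently with probability $1-q$; the edges that are contracted are merged into their endpoints' supernodes, and the remaining (uncontracted) edges form the edge set of $H$. We couple the $p$-failure process on $G$ with the two-stage process ``sample $H\sim G(q)$, then run the $(p/q)$-failure process on $H$'' as follows. For each edge $e\in E$, independently draw an \emph{outer} bit $A_e$ which is $1$ with probability $q$ (``$e$ is kept'') and $0$ with probability $1-q$ (``$e$ is contracted''), and independently an \emph{inner} bit $B_e$ which is $1$ with probability $p/q$ (``$e$ fails given kept'') and $0$ otherwise. (This is well-defined since $q<p$ gives $p/q\in(0,1)$.) Declare edge $e$ \emph{failed} if $A_e = 1$ and $B_e = 1$; since $A_e, B_e$ are independent, $\Pr[e \text{ failed}] = q\cdot (p/q) = p$, and the failure events across edges are mutually independent. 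Hence the set of failed edges has exactly the distribution of the $p$-failure process on $G$, and $G$ is disconnected by these failures with probability $\ugp$.

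On the other hand, the bits $\{A_e\}$ alone determine $H$: the edges with $A_e=0$ are exactly the contracted edges, so the resulting contracted graph has the law $G(q)$, i.e. it equals $H$ in distribution. Conditioned on the $A_e$'s (equivalently, on $H$), the uncontracted edges of $H$ are precisely $\{e : A_e = 1\}$, and each such $e$ is failed iff $B_e = 1$, which happens independently with probability $p/q$. Thus, conditioned on $H$, the failed-edge set restricted to $E(H)$ is exactly a $(p/q)$-failure process on $H$. Moreover, contracting an edge of $G$ never disconnects $G$, and it also never changes the connectivity of the rest of the graph; so $G$ with its failed edges removed is connected if and only if $H$ with its failed edges removed is connected. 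Therefore, conditioned on $H$, the probability that $G$ disconnects under the coupled failures equals $u_H(p/q)$. Taking expectation over $H\sim G(q)$ and using the law of total probability,
\[
\E_{H\sim G(q)}\big[u_H(p/q)\big] \;=\; \Pr[G \text{ disconnects under the $p$-failure process}] \;=\; \ugp,
\]
which shows that $u_H(p/q)$ is an unbiased estimator of $\ugp$.
\end{proof}

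The routine parts here are just bookkeeping, so I would keep the writeup short. The one point that genuinely needs a sentence of care — and the only place where something could go wrong — is the claim that contracting edges does not affect disconnection: this is where we use that the contracted edges are all \emph{surviving} (non-failed) edges, so merging their endpoints can only ever help connectivity, and in fact preserves it exactly, since a surviving edge already connects its two endpoints. I would make sure the coupling is set up so that ``contracted'' and ``failed'' are disjoint events for each edge (which holds automatically: $e$ contracted means $A_e=0$, $e$ failed requires $A_e=1$), so that no edge is ever both contracted and counted as failed. With that observation in place, the identification of the two connectivity events is immediate and the lemma follows.
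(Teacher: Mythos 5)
Your proof is correct and takes essentially the same route as the paper, which records the same two-stage decomposition (sample candidate edges with probability $q$, then remove candidates with probability $p/q$) in a single sentence; your writeup simply spells out the coupling and the ``contraction of surviving edges preserves connectivity'' step more carefully. Note one internal slip: your parenthetical asserts that $q<p$ gives $p/q\in(0,1)$, but $q<p$ actually gives $p/q>1$; the hypothesis should read $q>p$ (the ``$q<p$'' in the lemma statement is a typo, as elsewhere the paper always takes $q\ge p$).
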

\begin{proof}
The lemma follows from the observation that deleting each edge with probability $p$ from $G$ is equivalent to first creating a set of candidate edges by sampling each edge with probability $q$, and then removing each candidate edge with probability $p/q$.
\end{proof}

We prove a similar property of unbiasedness of sparsification steps (\Cref{lem:sparsify-unbiased} in \Cref{sec:sparsify}). Given these lemmas,  the only source of bias are the base cases. We can now bound overall bias of the estimator as follows:
\begin{lemma}\label{lem:contract-bias}
The algorithm outputs an estimate of $\ugp$ with bias at most 
$0.1\eps\cdot u_G(p)$. 
\end{lemma}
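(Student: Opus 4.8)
The plan is to show that the recursive estimator inherits essentially all of its bias from the importance-sampling base case, and that this bias is controlled by choosing the first base-case threshold large enough (this lemma is where that threshold gets pinned down).

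The key structural fact is a \emph{bias-composition} step: if $Y$ is an unbiased estimator of a quantity $x$ over some randomness $R$, and for \emph{every} realization of $R$ the estimator $Z$ has relative bias at most $\delta$ with respect to $Y$, then $Z$ has relative bias at most $\delta$ with respect to $x$. This is immediate: writing $\E[Z\mid R]=(1+\xi(R))\,Y$ with $|\xi(R)|\le\delta$ and using $Y\ge0$, we get $|\E[Z]-x|=|\E[\xi(R)\,Y]|\le\delta\,\E[Y]=\delta x$; and the average of two estimators, each of relative bias $\le\delta$ for the same $x$, again has relative bias $\le\delta$ for $x$. Applying this at a contraction node, with $R$ the random contraction and $Y=u_H(p/q)$ (unbiased by \Cref{lem:contract-unbiased}), and at a sparsification node, with $R$ the random sparsification and $Y$ the estimate on the sparsifier (unbiased by \Cref{lem:sparsify-unbiased}), shows that the estimator produced at any internal node of the recursion has relative bias at most the larger of the relative biases of its two recursive calls. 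Hence, by induction over the recursion tree — well-founded because each contraction step strictly decreases the vertex count whp by \Cref{lem:contract-size-dec}, and vertex counts never drop below the first base-case threshold — the relative bias of the root estimator is at most the maximum relative bias over all leaves.

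It remains to bound the leaves. The first base case (\Cref{thm:karger-estimator}) and both Monte Carlo base cases (\Cref{lem:mc}, \Cref{lem:mc-2step}) return exactly unbiased estimators. The importance-sampling base case (\Cref{lem:interface-importance}) returns an estimator of relative bias $O(\log n'/\sqrt{n'})$, where $n'$ is the vertex count at that leaf, and it is entered only when $n'$ exceeds the first base-case threshold $n_0$. I would therefore set $n_0$ to be the sufficiently large polynomial in $1/\eps$ — concretely $\Theta(\eps^{-2}\polylog(1/\eps))$ — for which $\log n'/\sqrt{n'}\le 0.1\eps/O(1)$ whenever $n'\ge n_0$ (using that $\log n'/\sqrt{n'}$ is eventually decreasing). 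Then every leaf has relative bias at most $0.1\eps$, so the root estimator does too, i.e., its bias is at most $0.1\eps\,u_G(p)$. I do not expect a substantial obstacle here; the only care needed is (i) phrasing the bias-composition step so that the $\delta$ bound holds uniformly over all realizations of the intermediate contracted/sparsified graphs, allowing it to be pulled through the expectations, and (ii) absorbing the $n^{-\Omega(1)}$-probability events on which \Cref{lem:contract-size-dec} (or the whp guarantees inside the base-case lemmas) fail — harmless because every estimator in play is nonnegative and polynomially bounded while $u_G(p)\ge p^\lambda$ is only polynomially small in the relevant regime, so such events perturb the bias by a lower-order term absorbed into the constant.
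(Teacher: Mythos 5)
Your proof is correct and takes essentially the same route as the paper: what you package as a ``bias-composition'' step is exactly the paper's inductive computation $|\E[X_i]-u_G(p)|\le\E_{H_i}\bigl[|\E[X_i\mid H_i]-u_{H_i}(p/q)|\bigr]\le 0.1\eps\cdot\E_{H_i}[u_{H_i}(p/q)]=0.1\eps\cdot u_G(p)$, and the rest (unbiasedness of contraction and sparsification, exactness of the Monte Carlo and Karger base cases, and tuning the first-base-case threshold to $\Theta(\eps^{-2}\polylog(1/\eps))$ so the importance-sampling bias $O(\log n/\sqrt n)$ drops below $0.1\eps$) matches the paper's proof line by line. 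The one place you go slightly beyond the paper — explicitly flagging the low-probability failure events inside \Cref{lem:contract-size-dec} and the base-case whp guarantees — is a legitimate bit of extra care that the paper elides; it does not change the structure of the argument.
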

\begin{proof}
Let $X$ be the output of the algorithm.
Consider the base cases of the recursion. Karger's algorithm (\Cref{thm:karger-estimator}) 
for a small instance and the Monte Carlo sampling algorithms have no bias. 

In the last base case of importance sampling, the bias is $O\left(\frac{\log n}{\sqrt{n}}\right) \cdot \ugp$ by \Cref{lem:interface-importance}. We can set the constant threshold for the first base case to be $O(\eps^{-2}\log^2\frac{1}{\eps})$, so that for $n$ larger than the threshold we have $O\left(\frac{\log n}{\sqrt{n}}\right)\le 0.1\eps$.

We use induction on recursion depth to prove that $(1-0.1\eps)\cdot u_G(p) \le \E[X] \le (1+0.1\eps)\cdot u_G(p)$ holds for the recursive case as well.

Consider an inductive step where the algorithm takes the average of two recursive calls. Let $X_1$ and $X_2$ be the output of the two recursive calls, so that $X=\frac{X_1+X_2}{2}$. This can happen for two reasons: recursive contraction or sparsification. First, we consider the case of contraction. Later, we extend the same analysis to sparsification.

Let $H_1$ and $H_2$ be the two recursively generated graphs by contraction. By the inductive hypothesis,
\[|\E[X_i|H_i]-u_{H_i}(p/q)| \le 0.1\eps\cdot u_{H_i}(p/q).\] 

By \Cref{lem:contract-unbiased}, $\E[u_{H_i}(p/q)] = \ugp$. Therefore, 
\begin{align*}
    |\E[X_i]- u_G(p)|
    &= |\E_{H_i}\left[\E[X_i|H_i]\right] - \E_{H_i}[u_{H_i}(p/q)]|
    \le \E_{H_i}[|\E[X_i|H_i]-u_{H_i}(p/q)|]\\
    &\le \E_{H_i}[0.1\eps \cdot u_{H_i}(p/q)] \quad\text{(by inductive hypothesis)}
    = 0.1\eps \cdot \E_{H_i}[u_{H_i}(p/q)]
    = 0.1\eps\cdot u_G(p).
\end{align*}
Thus,
$|\E[X]-u_G(p)| \le \frac 12 \left(|\E[X_1]-u_G(p)|+|\E[X_2]-u_G(p)|\right) \le 0.1\eps \cdot u_G(p)$, using the previous inequality.

In a sparsification step, the algorithm also takes the average of two recursive calls $X=\frac{X_1+X_2}{2}$ on the sparsifier graphs $\tG_1, \tG_2$. To replicate the proof above, we need to show that $u_{\tG_1}(q)$ and $u_{\tG_2}(q)$ are unbiased estimators of $\ugp$. We show this in \Cref{lem:sparsify-unbiased} in \Cref{sec:sparsify}.
\end{proof}

We now need to bound the relative variance of the estimator. Again, there are two cases depending on whether contraction or sparsification is being used in the current computation node. We first prove the bound on the relative variance due to contraction steps below. The bound for sparsification is very similar, and is established formally in \Cref{sec:sparsify}.

Instead of bounding the relative variance of $u_H(p/q)$, we will bound the relative variance of $z_H(p/q)$ and invoke the following lemma:

\begin{lemma}\label{lem:u-from-z}
Assume $p<\theta$. If $z_{H}(q)$ is an unbiased estimator of $\zgp$ with relative variance $\eta$, and $u_{H}(q)$ is an unbiased estimator of $\ugp$, then $u_{H}(q)$ has relative variance at most $\left(1+O\left(\frac{1}{\log n}\right)\right)\eta+O\left(\frac{1}{\log n}\right)$.
\end{lemma}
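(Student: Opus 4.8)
The plan is to relate the relative variance of $u_H(q)$ to that of $z_H(q)$ by exploiting the fact that, below the phase transition threshold $\theta$, the quantities $u$ and $z$ differ only by a $(1\pm O(1/\log n))$ factor (\Cref{lem:z-approx-u}), and that this comparison is "stable" under the random graph $H$ (in the sense that $p/q < p < \theta$ still holds, so the same lemma applies to $H$ with failure probability $p/q$). Concretely, I would start from the defining identity $\relv[u_H(q)] + 1 = \E[u_H(q)^2]/\E[u_H(q)]^2$, and similarly for $z_H(q)$, and try to sandwich $\E[u_H(q)^2]$ between constant multiples of $\E[z_H(q)^2]$ and the expectations between constant multiples of each other.

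The key steps, in order, would be: (1) Observe that $q < p < \theta$, and that $H$ is (a contraction of) a graph on which the phase transition lemma applies with parameter $p/q$; since $p/q < p < \theta$ as well — wait, actually $p/q > p$, so I need to be careful here. The right observation is that contraction only removes cuts, so $z_H(p/q) \le z_G$-type bounds don't immediately transfer; instead I should use that $H$ itself has min-cut at least $\lambda$, and more importantly invoke the \emph{pointwise} statement of \Cref{lem:z-approx-u} applied to the graph $H$ with its own failure probability $p/q$. We need $p/q < \theta_H$, the threshold for $H$; this requires an argument (likely: since $p < \theta_G$ and contraction only shrinks the cut structure, reliability only improves, so the threshold does not decrease — i.e. $\theta_H \ge \theta_G$ in an appropriate sense, giving $p/q$... no, $p/q$ could exceed $p$). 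I suspect the cleaner route used in the paper is: condition on $H$, and note that whenever $H$ gets disconnected under failure probability $p/q$, with probability $\ge 1 - O(1/\log n)$ exactly one cut fails, so $z_H(p/q) \in [u_H(p/q), (1+O(1/\log n)) u_H(p/q)]$ pointwise \emph{as functions of the realized failure pattern}, hence $\E[z_H(p/q) \mid H] $ and $\E[u_H(p/q)\mid H]$ are within the same factor, and similarly $z_H(p/q)^2 \le (1 + O(1/\log n)) u_H(p/q) \cdot z_H(p/q)$, etc. (2) Push these pointwise inequalities through expectation over $H$: writing $U = u_H(q)$ (the random variable, a function of $H$) and $Z = z_H(q)$, I'd establish $\E[Z] \in (1\pm O(1/\log n))\E[U]$ and $\E[U^2] \le \E[Z^2] + (\text{lower order})$, or more precisely bound $\E[U^2]$ in terms of $\E[Z^2]$ and $\E[Z]^2$. (3) Combine: $\relv[U]+1 = \E[U^2]/\E[U]^2 \le \big(1 + O(1/\log n)\big)\E[Z^2]/\E[Z]^2 + O(1/\log n) = \big(1+O(1/\log n)\big)(\eta+1) + O(1/\log n)$, which rearranges to the claimed bound $\relv[U] \le (1+O(1/\log n))\eta + O(1/\log n)$.

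The main obstacle I anticipate is handling the \emph{second moment} cleanly: the pointwise relation between $z_H(p/q)$ and $u_H(p/q)$ is a relation between a count (number of failed cuts) and an indicator (whether $\ge 1$ cut failed), so $z^2$ is not simply comparable to $u^2$ — one has $z = u \cdot (\text{number of failed cuts given} \ge 1)$, and the "number of failed cuts given disconnection" has its own tail. The trick is presumably that $\E[z^2 - z \mid \text{realization}]$-type quantities correspond to $x_H(p/q)$ (failed cut \emph{pairs}), and \Cref{lem:z-approx-u} bounds $x/z \le 1/\log n$; so $\E[Z^2 \mid H] = \E[z_H(p/q) \text{'s second moment}]$... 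I'd need to unwind $\E[Z^2]$ carefully. The likely clean statement is: for any graph $F$ with failure prob $r < \theta_F$, $z_F(r) \le u_F(r) + x_F(r) \le (1 + 1/\log n) u_F(r)$ as the relevant inequality at the level of expectations, and then $u_F(r) \le z_F(r)$ trivially; squaring and using that both are $[0,1]$-ish after normalization gives the moment comparison. I would also need to double-check that the hypothesis "$u_H(q)$ is an unbiased estimator of $\ugp$" (given) together with $\E[z_H(q)] = (1\pm O(1/\log n))\ugp$ (from the pointwise comparison plus unbiasedness of $z$) is consistent and suffices — it does, since it pins down $\E[U] = \ugp$ and $\E[Z] \in (1\pm O(1/\log n))\ugp$, making the denominators comparable. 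Modulo these moment-bookkeeping details, the proof is a short computation.
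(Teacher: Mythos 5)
Your final computation in step (3) has the right shape, and you do note the trivial pointwise inequality $u_F(r) \le z_F(r)$. But your proposal is burdened by a misconception about what needs to be shown, which leads you to anticipate obstacles (the ``main obstacle'' paragraph, the worrying about $\theta_H$, $p/q$ versus $\theta$, comparing $z_H$ and $u_H$ two-sidedly for the realized $H$, and so on) that simply do not arise.

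The paper's proof uses exactly two facts, neither of which requires invoking the phase-transition lemma for the random graph $H$:
(i) For \emph{any} realized graph $H$ and \emph{any} failure probability $q$, the probability that at least one cut fails is at most the expected number of failed cuts, so $u_H(q) \le z_H(q)$ pointwise in $H$. Squaring and taking expectation over $H$ gives $\E[u_H(q)^2] \le \E[z_H(q)^2]$ directly --- with no condition, no threshold comparison, and no involvement of $x_H$ or cut pairs.
(ii) The phase-transition lemma is applied only to the \emph{original} graph $G$ at failure probability $p < \theta$, giving $u_G(p) \ge \bigl(1-\tfrac{1}{\log n}\bigr) z_G(p)$. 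Combined with the unbiasedness hypotheses $\E[u_H(q)] = u_G(p)$ and $\E[z_H(q)] = z_G(p)$, this relates the two denominators. Then
\[
\frac{\E[u_H(q)^2]}{\E[u_H(q)]^2} \;\le\; \frac{\E[z_H(q)^2]}{\bigl(1-\tfrac{1}{\log n}\bigr)^2\,\E[z_H(q)]^2} \;=\; \frac{\eta+1}{\bigl(1-\tfrac{1}{\log n}\bigr)^2},
\]
and subtracting $1$ gives the claim. The upper bound $z_H(q) \le (1+o(1))\,u_H(q)$ for the realized $H$ --- which is what drags in all the concerns about $\theta_H$ and ``the number of failed cuts given disconnection has its own tail'' --- is never used. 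Only the one-sided trivial inequality in (i) is needed, precisely because the comparison of expectations in (ii) happens at the level of $G$, where the hypothesis $p < \theta$ lives.

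So the gap is conceptual rather than computational: you have conflated the graph to which the phase-transition lemma must be applied ($G$) with the graph on which you have a pointwise inequality ($H$). Once that is untangled, all the ``moment-bookkeeping details'' you flagged as unresolved evaporate, and the proof is the three-line computation you essentially wrote in step (3).
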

\begin{proof}
     Since $p<\theta$, we have $u_G(p)\ge \left(1-\frac{1}{\log n}\right) \cdot z_G(p)$ by \Cref{lem:z-approx-u}. We also have $u_{H}(q) \le z_{H}(q)$ for any $q\in [0,1]$. Since $\E[u_{H}(q)] = \ugp$, the relative variance of $u_{H}(q)$ can be bounded as follows:
    \begin{align*}
    \frac{\E[u_{H}(q)^2]}{(\ugp)^2}-1 
    &\le 
    \frac{1}{\left(1-\frac{1}{\log n}\right)^2}\cdot \frac{\E[z_{H}(q)^2]}{(\zgp)^2}-1
    \le \left(1+O\left(\frac{1}{\log n}\right)\right) \eta + O\left(\frac1{\log n}\right) .\qedhere
    \end{align*}
\end{proof}

\eat{

We now bound the relative variance of $z_H(p/q)$:
\begin{lemma}\label{lem:relvar-1step}
Assume that $4n^{-3} \le p^\lambda \le n^{-0.5}$ and $p<\theta$, where $\theta$ is given in \Cref{lem:z-approx-u}. Then the relative second moment of $z_H(p/q)$ is  $2+O\left(\frac{1}{\sqrt{\log n}}\right)$ where $H \sim G(q)$ and $q^\gamma = \frac 12$.
\end{lemma}
\begin{proof}
There are two cases depending on the relative values of $\gamma$ and $\lambda$. 

First, suppose $\left(1+\frac{1}{\sqrt{\log n}}\right)\gamma \ge \lambda$. Then,
\[q^{-\lambda} \le q^{-\gamma(1+ 1/\sqrt{\log n})} =  2^{1+1/\sqrt{\log n}} \le 2\left(1+O\left(\frac1{\sqrt{\log n}}\right)\right) = 2 + O\left(\frac{1}{\sqrt{\log n}}\right).\]
By \Cref{lem:karger-relvar},  when $p^\lambda \le n^{-0.5}$, the relative second moment of $z_H(p/q)$ is upper bounded by $q^{-\lambda}\left(1+O\left(\frac{1}{\log n}\right)\right)$, which is at most $2+O\left(\frac{1}{\sqrt{\log n}}\right)$ by our bound on $q^{-\lambda}$ above.

In the rest of the proof, we assume $\left(1+\frac{1}{\sqrt{\log n}}\right)\gamma < \lambda$. This proof now follows the same template as in Karger's analysis~\cite{Karger20}. The main difference is that we use the parameter $\gamma$ whereas Karger uses $\lambda$. 

The relative second moment is bounded as:
\begin{align*}
\frac{\E[z_H(p/q)^2]}{(\zgp)^2}
&= \frac{1}{(\zgp)^2}\cdot \E\bigg[\sum_{C_i, C_j \in \mathcal{C}(H)}\left(\frac{p}{q}\right)^{c_i+c_j}\bigg] 
= \frac{1}{(\zgp)^2}\sum_{C_i, C_j\in \mathcal{C}(G)} q^{|C_i\cup C_j|}\cdot \left(\frac{p}{q}\right)^{c_i+c_j}\\
&= \frac{1}{(\zgp)^2}\left(\sum_{C_i}\frac{p^{2c_i}}{q^{c_i}} + \sum_{C_i\ne C_j} \frac{p^{c_i+c_j}}{q^{|C_i\cap C_j|}}\right).
\end{align*}

The first sum can be bounded by
\begin{align}
V_1 
&= \sum_{C_i}p^{c_i}\cdot \left(\frac{p}{q}\right)^{c_i}\le \sum_{C_i}p^{c_i}\left(\frac{p}{q}\right)^{\lambda} \text{(since $p < q$ and $c_i \ge \lambda$)} = \left(\frac{p}{q}\right)^{\lambda}\cdot \zgp \nonumber\\
&\le p^\gamma\cdot q^{-\gamma}\cdot \left(\frac{p}{q}\right)^{\lambda - \gamma} \cdot \zgp
\le \zgp \cdot 2\cdot \left(\frac{p}{q}\right)^{\gamma/\sqrt{\log n}}\cdot \zgp,\label{eq:v1}
\end{align}
where the last inequality uses the following observations: 
(1) $\zgp \ge p^\gamma$ by \Cref{lem:gamma-z-bound},
(2) $q^{-\gamma} = 2$ by the definition of $q$, and 
(3) $\lambda-\gamma > \frac{\gamma}{\sqrt{\log n}}$ and $p < q$.
 
By \Cref{fact:gamma-lambda-bound}, we have $\gamma \ge \frac 34 \lambda$. Thus, $p^\gamma \le p^{\frac 34 \lambda} \le (n^{-0.5})^{\frac 34} = n^{-0.375}$. Then, from \Cref{eq:v1}, we get
\[\frac{V_1}{(\zgp)^2} 
\le 2\cdot q^{-\gamma/\sqrt{\log n}} \cdot p^{\gamma/\sqrt{\log n}} 
\le 2\cdot q^{-\gamma} \cdot n^{-0.375/\sqrt{\log n}} 
\le 4 e^{-O(\sqrt{\log n})}
\le  O\left(\frac{1}{\sqrt{\log n}}\right) \left(\text{since  $e^{-x} < \frac{1}{x}$}\right).\]

We divide the terms in the second sum into two parts, depending on the size of intersection of the two cuts determining the term. 
\[V_2 
= \sum_{C_i\ne C_j, |C_i\cap C_j|\le \gamma} \frac{p^{c_i+c_j}}{q^{|C_i\cap C_j|}}
\le q^{-\gamma}\cdot \sum_{C_i, C_j} p^{c_i+c_j}
= 2(\zgp)^2.\]
\[V_3 
= \sum_{C_i\ne C_j, |C_i\cap C_j|>\gamma} p^{|C_i\cup C_j|}\left(\frac{p}{q}\right)^{|C_i\cap C_j|}
\le \sum_{C_i\ne C_j} p^{|C_i\cup C_j|}\left(\frac{p}{q}\right)^{\gamma}
= q^{-\gamma} \cdot x_G(p)\cdot p^{\gamma} 
\le 2\cdot x_G(p) \cdot \zgp.\]

Since $p< \theta$, \Cref{lem:z-approx-u} gives $x_G(p) \le \frac{1}{\log n}\cdot \zgp$. Therefore,
\[V_3 \le  2\cdot x_G(p) \cdot \zgp \le \frac{2}{\log n}\cdot (\zgp)^2.\]

Putting the bounds on $V_1, V_2$ and $V_3$ together, we get
\[
\frac{\E[z_H(p/q)^2]}{(\zgp)^2} 
\le \frac{V_1+V_2+V_3}{(\zgp)^2}  
= 2 + O\left(\frac{1}{\sqrt{\log n}}\right).\qedhere
\]
\end{proof}

}

We now bound the relative variance of $z_H(p/q)$. For the base case, we will use the following known bound:
\begin{lemma}[Lemma 3.1 of \cite{Karger20}]\label{lem:karger-relvar}
Assume $p < \theta$. Sample $H\sim G(q)$ for $q^{-\lambda} = \Theta(1)$. Then, $u_H(p/q)$ is an unbiased estimator of $\ugp$, and $z_H(p/q)$ is an unbiased estimator of $z_G(p)$. Moreover, both estimators have relative second moment upper bounded by
\[q^{-\lambda} \cdot \left(1+O\left(\frac{1}{\log p^{-\lambda}}\right) + O\left(\frac{1}{\log n}\right)\right).\]
\end{lemma}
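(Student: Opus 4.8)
The plan is to verify the two unbiasedness statements by first-moment calculations and to bound the relative second moment of $z_H(p/q)$ by a pair-of-cuts expansion; the bound for $u_H(p/q)$ then follows from the one for $z_H(p/q)$. Unbiasedness of $u_H(p/q)$ is exactly \Cref{lem:contract-unbiased}: deleting each edge of $G$ with probability $p$ is the same as keeping each edge independently with probability $q$ (contracting the rest to form $H\sim G(q)$) and then deleting each kept edge with probability $p/q$. For $z_H(p/q)$, a cut $C_i$ of $G$ is a cut of $H$ exactly when none of its $c_i$ edges is contracted, which has probability $q^{c_i}$, and in that event $C_i$ contributes $(p/q)^{c_i}$ to $z_H(p/q)$; by linearity, $\E[z_H(p/q)]=\sum_{C_i}q^{c_i}(p/q)^{c_i}=\sum_{C_i}p^{c_i}=\zgp$.

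For the second moment, the same survival reasoning applied to ordered pairs of cuts gives
\[
\E\bigl[z_H(p/q)^2\bigr]=\sum_{C_i,C_j}q^{|C_i\cup C_j|}\Bigl(\tfrac pq\Bigr)^{c_i+c_j}=\sum_{C_i,C_j}\Bigl(\tfrac pq\Bigr)^{|C_i\cap C_j|}p^{|C_i\cup C_j|},
\]
using $|C_i\cup C_j|+|C_i\cap C_j|=c_i+c_j$. I would split this sum into its diagonal ($C_i=C_j$) and off-diagonal ($C_i\ne C_j$) parts. The diagonal equals $\sum_{C_i}(p^2/q)^{c_i}$, and since $p<q$ and $c_i\ge\lambda$ one may pull out $(p/q)^\lambda$ to bound it by $(p/q)^\lambda\zgp=q^{-\lambda}p^\lambda\zgp\le q^{-\lambda}\zgp^2$, using $\zgp\ge p^\lambda$; this yields the leading $q^{-\lambda}$ factor. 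For the off-diagonal part the key structural fact is that $C_i\ne C_j$ forces $C_i\triangle C_j$ to be a nonempty cut, so $|C_i\triangle C_j|\ge\lambda$ and hence $|C_i\cap C_j|=\tfrac12\bigl(c_i+c_j-|C_i\triangle C_j|\bigr)\le\tfrac12(c_i+c_j)-\tfrac\lambda2$. Stratifying the off-diagonal sum by the overlap $\ell=|C_i\cap C_j|\ge1$ (equivalently by the value of the cut $C_i\triangle C_j$): pairs of small overlap are handled using $(p/q)^\ell\le1$ together with the phase-transition estimate $x_G(p)\le\zgp/\log n$ of \Cref{lem:z-approx-u}, contributing an $O(1/\log n)$ relative term; for pairs of large overlap the factor $(p/q)^\ell$ is geometrically small while cut counting (\Cref{lem:cut-counting}) controls the number of cuts of each value, so the resulting geometric series in $\ell$ converges --- because $p^\lambda$ is small, as guaranteed by $p<\theta$ --- and sums to the $O(1/\log p^{-\lambda})$ relative correction. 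Together the two parts give $\E[z_H(p/q)^2]\le q^{-\lambda}\bigl(1+O(1/\log p^{-\lambda})+O(1/\log n)\bigr)\zgp^2$.

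The main obstacle is this off-diagonal estimate. The crude inequality $q^{-|C_i\cap C_j|}p^{c_i+c_j}\le q^{-\lambda}p^{|C_i\cup C_j|}$ (valid for every pair) loses too much when the overlap is large, and $\zgp\ge p^\lambda$ alone is far too weak in the regime of many near-minimum cuts (e.g., a long cycle), so one must stratify cuts by value and carefully trade the exponential decay of $(p/q)^\ell$ against the polynomial growth $n^{\lfloor 2k/\lambda\rfloor}$ of the count of cuts of value $k$; this is precisely where the smallness of $p^\lambda$ is used and where the $1/\log p^{-\lambda}$ term originates. Finally, the bound for $u_H(p/q)$ follows from that for $z_H(p/q)$ by the argument of \Cref{lem:u-from-z}: $u_H(p/q)\le z_H(p/q)$ holds in every realization of $H$ by a union bound over failed cuts, and $\ugp\ge(1-1/\log n)\zgp$ by \Cref{lem:z-approx-u} since $p<\theta$, so $\E[u_H(p/q)^2]/\ugp^2\le(1-1/\log n)^{-2}\,\E[z_H(p/q)^2]/\zgp^2$, and the extra factor $(1-1/\log n)^{-2}=1+O(1/\log n)$ is absorbed into the $O(1/\log n)$ correction.
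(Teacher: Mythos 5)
Your first-moment arguments and the expansion $\E[z_H(p/q)^2]=\sum_{C_i,C_j}(p/q)^{|C_i\cap C_j|}p^{|C_i\cup C_j|}$ are correct, and the three-way split into diagonal, small-overlap off-diagonal, and large-overlap off-diagonal is the right skeleton; it is precisely the decomposition $V_1,V_2,V_3$ the paper itself uses in \Cref{lem:relvar-1step}. But two of your three estimates are misassigned, and together they would give about $2q^{-\lambda}$ rather than $q^{-\lambda}(1+o(1))$.

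The diagonal bound $\sum_{C_i}(p^2/q)^{c_i}\le(p/q)^\lambda\zgp\le q^{-\lambda}\zgp^2$ spends the whole $q^{-\lambda}$ budget on the diagonal, but the small-overlap off-diagonal is not a small correction. Your plan for it --- bound $(p/q)^\ell\le1$ and invoke $\xgp\le\zgp/\log n$ --- gives an absolute contribution at most $\zgp/\log n$; divided by $\zgp^2$ this is $1/(\zgp\log n)$, not $O(1/\log n)$, and since $p<\theta$ forces $\zgp\ll1$ this term is large. The phase-transition inequality helps only when the summand already carries an extra factor of order $p^\lambda\le\zgp$, and that factor appears exactly in the \emph{large}-overlap regime via $(p/q)^{|C_i\cap C_j|}\le(p/q)^\lambda=q^{-\lambda}p^\lambda$ --- the opposite of what you wrote. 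For small overlap the best available bound is $q^{-|C_i\cap C_j|}\le q^{-\lambda}$, which yields a contribution $\le q^{-\lambda}(\zgp^2-\sum_{C_i}p^{2c_i})$, a main-order amount. The budget then closes only because the diagonal can be bounded sharply as $q^{-\lambda}\sum_{C_i}p^{2c_i}$ plus a residue, and that residue $q^{-\lambda}\sum_{C_i}p^{2c_i}(q^{-(c_i-\lambda)}-1)$ is where $O(1/\log p^{-\lambda})$ actually comes from: with $t=(c_i-\lambda)/\lambda$ and $u=p^\lambda$, the function $u^t((q^{-\lambda})^t-1)$ is maximized near $t\approx 1/\ln(1/u)$ at value $O(1/\log p^{-\lambda})$ (compare \Cref{fact:dfdt}); it is not a geometric series over large overlaps controlled by cut counting. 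So the structure is right, but you miss the cancellation between the diagonal and the small-overlap part, and misidentify the source of the $1/\log p^{-\lambda}$ correction.
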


\begin{lemma}\label{lem:relvar-1step}
Assume that $4n^{-3} \le p^\lambda \le n^{-0.5}$ and $p<\theta$, where $\theta$ is given in \Cref{lem:z-approx-u}. Then the relative second moment of $z_H(p/q)$ is  $2+O\left(\frac{1}{\log n}\right)$ where $H \sim G(q)$ and $q^\gamma = \nicefrac 12$.
\end{lemma}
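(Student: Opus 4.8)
The plan is to follow Karger's template from \cite{Karger20} (as encapsulated in \Cref{lem:karger-relvar}) but systematically replace the role of $\lambda$ with the parameter $\gamma$, exploiting the two structural properties we have already established: $\zgp \ge p^\gamma$ (\Cref{lem:gamma-z-bound}, valid since $p^\lambda\ge 4n^{-3}$) and $\tfrac34\lambda\le\gamma\le\lambda$ (\Cref{fact:gamma-lambda-bound}). I would split into two regimes depending on how close $\gamma$ is to $\lambda$.

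\emph{Easy regime: $\gamma$ close to $\lambda$.} Suppose $(1+\tfrac{1}{\log n})\gamma\ge\lambda$. Then $q^{-\lambda}\le q^{-\gamma(1+1/\log n)} = 2^{1+1/\log n} = 2+O(1/\log n)$. Since $q^{-\lambda}=\Theta(1)$, \Cref{lem:karger-relvar} applies directly and gives relative second moment at most $q^{-\lambda}(1+O(1/\log p^{-\lambda})+O(1/\log n))$. Here $p^{-\lambda}\ge n^{0.5}$ by hypothesis, so $\tfrac{1}{\log p^{-\lambda}} = O(1/\log n)$, and the whole bound is $2+O(1/\log n)$, as desired.

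\emph{Main regime: $(1+\tfrac{1}{\log n})\gamma < \lambda$.} Here I would expand the second moment exactly as in Karger's proof:
\[
\frac{\E[z_H(p/q)^2]}{(\zgp)^2}
= \frac{1}{(\zgp)^2}\left(\sum_{C_i}\frac{p^{2c_i}}{q^{c_i}} + \sum_{C_i\ne C_j}\frac{p^{c_i+c_j}}{q^{|C_i\cap C_j|}}\right)
= \frac{V_1+V_2+V_3}{(\zgp)^2},
\]
where $V_1$ is the diagonal term, $V_2$ collects off-diagonal pairs with $|C_i\cap C_j|\le\gamma$, and $V_3$ collects pairs with $|C_i\cap C_j|>\gamma$. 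For $V_2$, bounding $q^{-|C_i\cap C_j|}\le q^{-\gamma}=2$ immediately gives $V_2\le 2(\zgp)^2$. For $V_3$, bounding $(p/q)^{|C_i\cap C_j|}\le (p/q)^\gamma$ (using $p<q$ and $|C_i\cap C_j|>\gamma$) gives $V_3\le q^{-\gamma}p^\gamma x_G(p) = 2p^\gamma x_G(p)\le 2 x_G(p)\zgp$; since $p<\theta$, \Cref{lem:z-approx-u} gives $x_G(p)\le\zgp/\log n$, so $V_3\le (2/\log n)(\zgp)^2$. For $V_1$: using $c_i\ge\lambda$ and $p<q$, $V_1\le (p/q)^\lambda\zgp$; factor out $(p/q)^\gamma = p^\gamma q^{-\gamma} = 2p^\gamma$ to write $V_1\le 2p^\gamma (p/q)^{\lambda-\gamma}\zgp$, and then use $\zgp\ge p^\gamma$ to get $\tfrac{V_1}{(\zgp)^2}\le 2(p/q)^{\lambda-\gamma}$. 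Finally, since $\lambda-\gamma>\gamma/\log n$ and $p^\gamma\le p^{3\lambda/4}\le n^{-3/8}$ (using $p^\lambda\le n^{-0.5}$ and $\gamma\ge\tfrac34\lambda$) while $q^{-\gamma}=2$, we get $(p/q)^{\lambda-\gamma}\le (p/q)^{\gamma/\log n} = (2p^\gamma)^{1/\log n}\le 2\cdot n^{-3/(8\log n)} = O(1/\log n)$ (crudely, $n^{-c/\log n}$ is a constant, but the slack from $p^\gamma\le n^{-3/8}$ against the fixed exponent $1/\log n$ yields a $1/\mathrm{polylog}$ factor; I would keep track of this carefully — bounding $2^{1/\log n}\cdot p^{\gamma/\log n}\le 2\cdot 2^{-(3/8)\log n/\log n}$ is not quite right, so the correct estimate is $p^{\gamma/\log n} = 2^{(\log p^\gamma)/\log n}\le 2^{-(3/8)\cdot(\log n)/\log n}\cdot(\ldots)$; the honest computation uses $\log p^{-\gamma}\ge (3/8)\log n$, hence $p^{\gamma/\log n}\le 2^{-(3/8)} = O(1)$, and then one needs the extra factor to come from choosing the threshold in the first regime as $\gamma(1+1/\sqrt{\log n})$ rather than $(1+1/\log n)$, giving $(p/q)^{\gamma/\sqrt{\log n}}\le e^{-\Omega(\sqrt{\log n})} = O(1/\log n)$). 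Summing, $\frac{V_1+V_2+V_3}{(\zgp)^2}\le 2+O(1/\log n)$.

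\emph{Main obstacle.} The delicate point is the $V_1$ bound: making the ``tail'' factor $(p/q)^{\lambda-\gamma}$ genuinely $o(1)$ rather than merely $O(1)$. This forces the split between regimes to be at $\gamma(1+1/\sqrt{\log n})$ versus $\lambda$ (not $1/\log n$), so that in the main regime $\lambda-\gamma>\gamma/\sqrt{\log n}$ is large enough that $(p/q)^{\gamma/\sqrt{\log n}}\le 2^{1/\sqrt{\log n}}\cdot n^{-\Theta(1/\sqrt{\log n})} = e^{-\Theta(\sqrt{\log n})}$, which is $O(1/\log n)$. Correspondingly, in the easy regime $q^{-\lambda}\le 2^{1+1/\sqrt{\log n}} = 2+O(1/\sqrt{\log n})$, so the final bound is really $2+O(1/\sqrt{\log n})$; if the lemma statement insists on $O(1/\log n)$ one would instead split at $\gamma(1+1/\log^{2/3}n)$ or simply absorb the discrepancy, noting $O(1/\sqrt{\log n})$ suffices downstream. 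Everything else is routine manipulation of the cut sums and direct appeals to \Cref{lem:z-approx-u,lem:gamma-z-bound,fact:gamma-lambda-bound}.
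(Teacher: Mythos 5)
Your decomposition into the easy regime and the main regime, and into $V_1,V_2,V_3$, is exactly the paper's, and your bounds on $V_2$ and $V_3$ are essentially right. But your $V_1$ analysis is genuinely cruder than the paper's, and the consequence you correctly observe — that you are forced to split at $\gamma(1+1/\sqrt{\log n})$ and can only obtain $2+O(1/\sqrt{\log n})$ — is not a discrepancy you can ``simply absorb.'' In \Cref{lem:contract-var} the inductive potential is $\log^K n$, and the inductive step requires $(\log n - c)^K\bigl(1+O(1/\log n)\bigr)+1\le\log^K n$; if the error factor is only $1+O(1/\sqrt{\log n})$, this fails for every fixed $K$, and one would have to retreat to a potential like $e^{O(\sqrt{\log n})}=n^{o(1)}$, giving $n^{1.5+o(1)}$ rather than $\tO(n^{1.5})$ running time. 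So the lemma as stated is not established by your argument.

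The ingredient you are missing is a cancellation between $V_1$ and $V_2$. The paper bounds $V_2$ as $q^{-\gamma}\sum_{C_i,C_j}p^{c_i+c_j}-q^{-\gamma}\sum_{C_i}p^{2c_i}=2(\zgp)^2-2\sum_{C_i}p^{2c_i}$, keeping the subtracted diagonal, and then bounds $V_1 - 2\sum_{C_i}p^{2c_i}$ rather than $V_1$ itself. Writing $u=p^\gamma$ and $t_i=(c_i-\gamma)/\gamma$, one has
\[
V_1 - 2\sum_{C_i}p^{2c_i} = 2p^\gamma\sum_{C_i}p^{c_i}\,f(t_i),\qquad f(t)=(2u)^t-u^t .
\]
The point is that $f(0)=0$: for cuts with $c_i$ near $\gamma$ the contribution of $V_1$ is exactly what is removed from $V_2$, so one never has to make the $(p/q)^{\lambda-\gamma}$ factor itself small. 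By \Cref{fact:dfdt} the function $f$ is decreasing for $t\ge 5/\log n$, so in the main regime (with $\lambda>\gamma(1+5/\log n)$, hence every $t_i\ge 5/\log n$) one gets $f(t_i)\le f(5/\log n)=u^{5/\log n}(2^{5/\log n}-1)$, and the factor $2^{5/\log n}-1=O(1/\log n)$ delivers the sharp bound. This lets the regime threshold be $1+5/\log n$ instead of $1+1/\sqrt{\log n}$, so the easy regime also gives $2+O(1/\log n)$, and the two regimes match. Your plan would reconstruct the weaker $n^{o(1)}$-second-moment version but not the lemma as stated.
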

\begin{proof}
There are two cases depending on the relative values of $\gamma$ and $\lambda$. 

First, suppose $\left(1+\frac{5}{\log n}\right)\gamma \ge \lambda$. 
Then,
\[q^{-\lambda} \le q^{-\gamma(1+ 5/\log n)} =  2^{1+5/\log n} \le 2\left(1+O\left(\frac{1}{\log n}\right)\right) = 2 + O\left(\frac{1}{\log n}\right).\]
By \Cref{lem:karger-relvar},  when $p^\lambda \le n^{-0.5}$, the relative second moment of $z_H(p/q)$ is upper bounded by $q^{-\lambda}\left(1+O\left(\frac{1}{\log n}\right)\right)$, which is at most $2+O\left(\frac{1}{\log n}\right)$ by our bound on $q^{-\lambda}$ above.

In the rest of the proof, we assume $\left(1+\frac{5}{\log n}\right)\gamma < \lambda$. This proof now follows the same template as in Karger's analysis~\cite{Karger20}. The main difference is that we use the parameter $\gamma$ whereas Karger uses $\lambda$. 

The relative second moment is bounded as:
\begin{align*}
\frac{\E[z_H(p/q)^2]}{(\zgp)^2}
&= \frac{1}{(\zgp)^2}\cdot \E\bigg[\sum_{C_i, C_j \in \mathcal{C}(H)}\left(\frac{p}{q}\right)^{c_i+c_j}\bigg] 
= \frac{1}{(\zgp)^2}\sum_{C_i, C_j\in \mathcal{C}(G)} q^{|C_i\cup C_j|}\cdot \left(\frac{p}{q}\right)^{c_i+c_j}\\
&= \frac{1}{(\zgp)^2}\left(\sum_{C_i}\frac{p^{2c_i}}{q^{c_i}} + \sum_{C_i\ne C_j: |C_i\cap C_j|\le \gamma} \frac{p^{c_i+c_j}}{q^{|C_i\cap C_j|}} + \sum_{C_i\ne C_j: |C_i\cap C_j| > \gamma} \frac{p^{c_i+c_j}}{q^{|C_i\cap C_j|}}\right).
\end{align*}
In the above expression, we distinguished between the cases $C_i = C_j$ and $C_i \ne C_j$, and the second case is further split into $|C_i \cap C_j| \le \gamma$ and $|C_i\cap C_j| > \gamma$. We define:
\begin{align*}
    V_1 &= \sum_{C_i}\frac{p^{2c_i}}{q^{c_i}}\\
    V_2 &= \sum_{C_i\ne C_j: |C_i\cap C_j|\le \gamma} \frac{p^{c_i+c_j}}{q^{|C_i\cap C_j|}}\\
    V_3 &= \sum_{C_i\ne C_j: |C_i\cap C_j| > \gamma} \frac{p^{c_i+c_j}}{q^{|C_i\cap C_j|}}.
\end{align*}

We first bound $V_1$:

\[
V_1 - 2\sum_{C_i} p^{2c_i}
= \sum_{C_i} p^{2c_i}\left(\frac{1}{q^{c_i}} - \frac{1}{q^\gamma}\right)
= \frac{p^\gamma}{q^\gamma} \sum_{C_i} p^{c_i} \cdot p^{c_i-\gamma}(q^{-(c_i-\gamma)}-1)
= 2\cdot p^\gamma \sum_{C_i} p^{c_i} \cdot p^{c_i-\gamma}(q^{-(c_i-\gamma)}-1)
\]

Let us also denote $t_i=\frac{c_i-\gamma}{\gamma}$ and $f(t_i) = (2u)^{t_i} - u^{t_i} = p^{c_i-\gamma}(q^{-(c_i-\gamma)}-1)$ where $u=p^\gamma$. Therefore, 
\[
    V_1 - 2\sum_{C_i} p^{2c_i} = 2 p^{\gamma} \sum_{C_i} p^{c_i}\cdot f(t_i).
\]
Note that since $p^{\lambda}\in [n^{-3}, n^{-0.5}]$ and $\gamma\in[\frac 34 \lambda, \lambda]$, we have $u=p^{\gamma}\in [n^{-3}, n^{-0.375}]$. We now use the following fact, which we prove in the appendix:

\begin{fact}\label{fact:dfdt}
For function $f(t) = (2u)^t - u^t$, suppose $u \le n^{-0.375}$ and $t\ge\frac{5}{\log n}$. Then, for $n$ larger than some constant, we have $\frac{\diff f}{\diff t} < 0$.
\end{fact}

Thus, $f$ is monotone decreasing when $t\ge\frac{5}{\log n}$. Since $t_i \ge \frac{\lambda-\gamma}{\gamma}\ge \frac{5}{\log n}$, we can upper bound $f(t_i)\le f\left(\frac{5}{\log n}\right) = u^{5/\log n}\cdot (2^{5/\log n}-1)$. Combined with $p^\gamma \le \zgp$ by \Cref{lem:gamma-z-bound}, we have
\begin{align*}
    V_1 - 2\sum_{C_i}p^{2c_i}
    &\le 2 \cdot p^\gamma \cdot\zgp\cdot f\left(\frac{5}{\log n}\right)
    \le  2\cdot (\zgp)^2 \cdot u^{5/\log n} \cdot (2^{5 / \log n}-1)\\
    &\le 2(\zgp)^2 \left(n^{-0.375}\right)^{5/\log n} \cdot (2^{5/\log n}-1)
    \le O\left(\frac{1}{\log n}\right)\cdot (\zgp)^2.
\end{align*}
Thus,
\begin{align}\label{eq:v1}
    V_1 \le O\left(\frac{1}{\log n}\right)\cdot (\zgp)^2 + 2\sum_{C_i}p^{2c_i}.
\end{align}

Next, we bound $V_2$ as follows:
\begin{align}\label{eq:v2}
V_2 
= \sum_{C_i\ne C_j, |C_i\cap C_j|\le \gamma} \frac{p^{c_i+c_j}}{q^{|C_i\cap C_j|}}
\le q^{-\gamma}\cdot \sum_{C_i, C_j} p^{c_i+c_j} - q^{-\gamma}\cdot \sum_{C_i} p^{2c_i}
= 2(\zgp)^2 - 2\cdot \sum_{C_i} p^{2c_i}.
\end{align}

Finally, we bound $V_3$ as follows:
\[
V_3 
= \sum_{C_i\ne C_j, |C_i\cap C_j|>\gamma} p^{|C_i\cup C_j|}\left(\frac{p}{q}\right)^{|C_i\cap C_j|}
\le \sum_{C_i\ne C_j} p^{|C_i\cup C_j|}\left(\frac{p}{q}\right)^{\gamma}
= q^{-\gamma} \cdot x_G(p)\cdot p^{\gamma} 
\le 2\cdot x_G(p) \cdot \zgp.
\]
Since $p< \theta$, \Cref{lem:z-approx-u} gives $x_G(p) \le \frac{1}{\log n}\cdot \zgp$. Therefore,
\begin{align}\label{eq:v3}
    V_3 \le  2\cdot x_G(p) \cdot \zgp \le \frac{2}{\log n}\cdot (\zgp)^2.
\end{align}    

Putting the bounds on $V_1, V_2$ and $V_3$ given by \Cref{eq:v1,eq:v2,eq:v3} together, we get
\[
\frac{\E[z_H(p/q)^2]}{(\zgp)^2} 
\le \frac{V_1+V_2+V_3}{(\zgp)^2}  
= 2 + O\left(\frac{1}{\log n}\right).\qedhere
\]
\end{proof}

Combining \Cref{lem:relvar-1step,lem:karger-relvar,lem:u-from-z}, we obtain the following:
\begin{corollary}\label{cor:contract-u-var}
$u_H(p/q)$ is an unbiased estimator of $u_G(p)$ with relative second moment at most $2+O\left(\frac{1}{\log n}\right)$.
\end{corollary}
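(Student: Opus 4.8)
The plan is to assemble the statement from three pieces already in hand: unbiasedness of the contraction step, the relative--second--moment bound for the $z$-estimator from \Cref{lem:relvar-1step}, and the transfer lemma from $z$ to $u$ (\Cref{lem:u-from-z}). First, for unbiasedness: in the moderately reliable regime we have $q>p$, since $q^{\gamma}=\nicefrac12$ with $\gamma\le\lambda$ keeps $q$ bounded away from $0$ while $p^{\lambda}\le n^{-0.5}$ forces $p$ small, so $p/q\in(0,1)$ is a legitimate deletion probability; then \Cref{lem:contract-unbiased} gives $\E[u_H(p/q)]=u_G(p)$ directly. Applying the same two-stage sampling decomposition cut by cut --- a cut $C$ of $G$ survives the contraction $H\sim G(q)$ with probability $q^{|C|}$ --- also yields $\E[z_H(p/q)]=\sum_{C\in\mathcal{C}(G)}q^{|C|}(p/q)^{|C|}=z_G(p)$, so $z_H(p/q)$ is an unbiased estimator of $z_G(p)$. (This is also asserted by \Cref{lem:karger-relvar}, whose hypothesis $q^{-\lambda}=\Theta(1)$ holds here because $\gamma\in[\tfrac34\lambda,\lambda]$ by \Cref{fact:gamma-lambda-bound} gives $q^{-\lambda}=2^{\lambda/\gamma}\in[2,2^{4/3}]$.)

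Next, I would invoke \Cref{lem:relvar-1step}: under the standing assumptions $4n^{-3}\le p^\lambda\le n^{-0.5}$ and $p<\theta$, and with $q^\gamma=\nicefrac12$, the relative second moment of $z_H(p/q)$ is at most $2+O(\nicefrac{1}{\log n})$, equivalently $z_H(p/q)$ is unbiased for $z_G(p)$ with relative variance $\eta\le 1+O(\nicefrac{1}{\log n})$. Then I would push this through \Cref{lem:u-from-z}, whose parameter ``$q$'' is instantiated as $p/q$: since $p<\theta$, $z_H(p/q)$ is unbiased for $z_G(p)$ with relative variance $\eta$, and $u_H(p/q)$ is unbiased for $u_G(p)$, the lemma gives
\[
\eta[u_H(p/q)] \;\le\; \left(1+O\!\left(\tfrac{1}{\log n}\right)\right)\eta + O\!\left(\tfrac{1}{\log n}\right) \;=\; 1+O\!\left(\tfrac{1}{\log n}\right),
\]
whence the relative second moment of $u_H(p/q)$ equals $\eta[u_H(p/q)]+1 = 2+O(\nicefrac{1}{\log n})$, which is exactly the claim.

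As for the main obstacle: there is essentially none in the corollary itself --- all the substance lives in \Cref{lem:relvar-1step} (the cut-pair expansion bounding $V_1,V_2,V_3$) and in \Cref{lem:u-from-z}, both of which are already established. The only points requiring a line of verification are that $q>p$, so that $p/q$ is a valid deletion probability and \Cref{lem:contract-unbiased} applies, and that $q^{-\lambda}=\Theta(1)$, needed to invoke \Cref{lem:karger-relvar} inside \Cref{lem:relvar-1step}; both follow immediately from $\gamma\in[\tfrac34\lambda,\lambda]$ together with the regime assumption $p^\lambda\le n^{-0.5}$.
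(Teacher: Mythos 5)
Your proposal is correct and follows exactly the paper's route: unbiasedness from \Cref{lem:contract-unbiased}, the relative second moment bound on $z_H(p/q)$ from \Cref{lem:relvar-1step}, and the transfer to $u_H(p/q)$ via \Cref{lem:u-from-z} with its generic parameter instantiated as $p/q$. Your additional sanity checks ($q>p$ via $q^\lambda = 2^{-\lambda/\gamma} \ge 2^{-4/3} > n^{-0.5} \ge p^\lambda$, and $q^{-\lambda} \in [2, 2^{4/3}] = \Theta(1)$) are exactly the implicit hypotheses the paper leaves unstated when it writes ``Combining \Cref{lem:relvar-1step,lem:karger-relvar,lem:u-from-z}.''
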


We now use induction to bound the second moment of the overall estimator. This requires our bounds on the base cases as well as that established for a recursive contraction step in \Cref{lem:relvar-1step} and the corresponding bound for a sparsification step that we establish in \Cref{lem:sparsify-before-contract} in \Cref{sec:sparsify}.

\begin{lemma}\label{lem:contract-var}
    The second moment of the estimator given by the overall algorithm is at most $\log^{O(1)} n \cdot (\ugp)^2$ whp. 
\end{lemma}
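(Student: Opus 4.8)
The plan is to prove the bound by induction on the recursion tree, tracking at each node $v$ the estimator $X_v$ produced there together with the quantity $\E[X_v^2]/u_v^2$, where $u_v$ is the \emph{true} unreliability of the subinstance at $v$ (not the mean $\E[X_v]$ of the estimator). Normalizing by $u_v$ rather than by $\E[X_v]$ is the crucial choice. The algorithm's only source of bias is the importance‑sampling base case, whose relative bias is $O(\log n/\sqrt n)\le 0.1\eps$ by \Cref{lem:interface-importance}, and \Cref{lem:contract-bias} shows this bias never compounds, so $\E[X_v]\in(1\pm 0.1\eps)\,u_v$ at every node. If we normalized by $\E[X_v]$ instead, a multiplicative factor $((1+0.1\eps)/(1-0.1\eps))^2$ would appear at every level and, compounded over the $\Theta(\log n)$ levels of recursion, would blow the bound up to $n^{\Omega(1)}$; normalizing by $u_v$ confines the bias to a harmless additive constant per level.

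For the base cases, \Cref{thm:karger-estimator}, \Cref{lem:mc}, and \Cref{lem:mc-2step} give unbiased estimators with $O(1)$ relative variance, while the importance‑sampling base case gives relative variance $O(1)$ for $\zresp$, which lies within $1\pm o(1)$ of $\ugp$ by \Cref{lem:zresp-approx-ugp}; in every case $\E[X_v^2]/u_v^2 = O(1)$. For the inductive step at a contraction node, write $X_v=\tfrac12(X_{H_1}+X_{H_2})$ with $H_1,H_2\sim G(q)$ drawn i.i.d.\ and $X_{H_1},X_{H_2}$ the resulting independent, identically distributed recursive estimators. Then $\E[X_v^2]=\tfrac12\E[X_{H_1}^2]+\tfrac12\E[X_v]^2$ since the two branches are independent with equal means. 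Applying the inductive hypothesis conditionally on $H_1$ (with target $u_{H_1}(p/q)$) and then averaging over $H_1$ gives $\E[X_{H_1}^2]\le \sigma^{\star}\cdot\E_{H_1\sim G(q)}[u_{H_1}(p/q)^2]$, where $\sigma^{\star}$ is the bound one level down; by \Cref{cor:contract-u-var} and \Cref{lem:contract-unbiased}, $u_{H_1}(p/q)$ is an unbiased estimator of $\ugp$ with relative second moment at most $2+O(1/\log n)$, so $\E_{H_1}[u_{H_1}(p/q)^2]\le(2+O(1/\log n))(\ugp)^2$. Combining this with $\E[X_v]^2\le(1+0.1\eps)^2(\ugp)^2=O((\ugp)^2)$ (using the bias bound from \Cref{lem:contract-bias}) yields $\E[X_v^2]/(\ugp)^2\le (1+O(1/\log n))\,\sigma^{\star}+O(1)$. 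The point to stress is that the factor $2$ in \Cref{cor:contract-u-var} exactly cancels the factor $\tfrac12$ gained from averaging two independent branches — this is why the algorithm sets $q^{-\gamma}=2$ and branches precisely twice — leaving a multiplicative factor of only $1+O(1/\log n)$. At a sparsification node we run the identical argument, replacing \Cref{cor:contract-u-var} by \Cref{lem:sparsify-before-contract} (and \Cref{lem:contract-unbiased} by \Cref{lem:sparsify-unbiased}), obtaining again $\E[X_v^2]/(\ugp)^2\le(1+O(1/\log n))\,\sigma^{\star}+O(1)$.

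It remains to bound the recursion depth and solve the recurrence. The depth is $D=O(\log n)$: the vertex count is nonincreasing along any root‑to‑leaf path, each contraction step multiplies it by at most $2^{-2/3}+n^{-0.1}<1$ by \Cref{lem:contract-size-dec}, and every sparsification step is immediately followed by a contraction step or a base case, so the total number of steps is $O(\log n)$. Letting $\sigma_d$ denote the uniform bound over nodes whose subtree has depth at most $d$, we have $\sigma_0=O(1)$ and $\sigma_d\le(1+O(1/\log n))\sigma_{d-1}+O(1)$, hence $\sigma_D\le (1+O(1/\log n))^{D}\sigma_0+O(1)\sum_{i=0}^{D-1}(1+O(1/\log n))^i$. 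Since $(1+O(1/\log n))^{O(\log n)}=O(1)$, the first term is $O(1)$ and the geometric sum is $O(\log n)$, so $\sigma_{\mathrm{root}}=O(\log n)$ and therefore $\E[X^2]\le O(\log n)\,(\ugp)^2=\log^{O(1)}n\cdot(\ugp)^2$. The ``whp'' in the statement is interpreted as: with high probability over the structural randomness of the recursion, the resulting estimator obeys this bound — this holds because the analysis above is deterministic once we condition on the structural good events (cut preservation via \Cref{lem:sparsify} at every sparsification and size decrease via \Cref{lem:contract-size-dec} at every contraction), and these hold simultaneously whp by a union bound over the $\mathrm{poly}(n)$ recursion nodes.

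The main obstacle is exactly the interplay between the two quantitative features used above: the normalization must be against $\ugp$ so the bias is only additive, \emph{and} the per‑level multiplicative factor must be genuinely $1+O(1/\log n)$ rather than a constant strictly greater than $1$, since only then does unrolling over $D=O(\log n)$ levels stay polylogarithmic. This in turn relies on the relative‑second‑moment bound of \Cref{cor:contract-u-var} (and its sparsification analogue \Cref{lem:sparsify-before-contract}) being tight at $2+o(1)$, so that it is cancelled by the variance reduction from two‑way branching; a looser bound, or branching the wrong number of times, would make the recursion blow up to $n^{\Omega(1)}$. The sparsification case is the most delicate place to make this cancellation come out right, which is why it is isolated into \Cref{sec:sparsify}.
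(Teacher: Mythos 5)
Your decomposition (normalize by $u_v$, use \Cref{cor:contract-u-var} and \Cref{lem:contract-bias} to get the per-step inequality $\E[X_v^2]/u_v^2 \le (1+O(1/\log n_v))\,\sigma^{\star}+O(1)$) matches the paper's, and the per-step derivation is correct. However, the way you solve the recurrence contains a genuine gap, and the paper takes a different (and essential) route here.

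The flaw is in treating the per-level multiplicative factor as uniformly $1+O(1/\log n)$ with $n$ the \emph{original} instance size, so that $\prod_{\ell}\bigl(1+O(1/\log n)\bigr)=\bigl(1+O(1/\log n)\bigr)^{O(\log n)}=O(1)$. But the factor at a node $v$ comes from \Cref{cor:contract-u-var} (or \Cref{cor:sparsify-u-var}) applied to the subinstance at $v$, so it is $1+O(1/\log n_v)$ with $n_v$ the \emph{local} vertex count, which shrinks geometrically down the tree. The correct product is
\[
\prod_{\ell=0}^{D-1}\Bigl(1+\tfrac{c_1}{\log n_\ell}\Bigr)
\;=\;\exp\Bigl(\Theta\bigl(\textstyle\sum_\ell \tfrac{1}{\log n_\ell}\bigr)\Bigr),
\]
and since $\log n_\ell \approx \log n - \tfrac{2}{3}\ell$, the sum is a harmonic series over $\{\log n, \log n - 2/3, \ldots\}$, giving $\Theta(\log\log n)$ in the exponent and hence $(\log n)^{\Theta(1)}$, not $O(1)$. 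Your claimed bound $\sigma_{\mathrm{root}}=O(\log n)$ is therefore over-optimistic (it need not hold unless the hidden constant $c_1$ in $O(1/\log n_v)$ is small enough); only the weaker $\log^{O(1)} n$ is guaranteed, which happens to still meet the lemma's statement, but by luck rather than by argument.

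The paper sidesteps this issue by choosing the induction hypothesis $\E[X^2]\le(\log^K n)\cdot u^2$ with $K$ a large constant, \emph{indexed by the local $n$} rather than by subtree depth. The work of controlling the per-level growth is then done by the size decrease $n_i\le(2^{-2/3}+n^{-0.1})n$ from \Cref{lem:contract-size-dec}: it gives $\log n_i\le\log n - c$ for a constant $c>0$, and the inductive step boils down to
\[
(\log n - c)^K\Bigl(1+O\bigl(\tfrac{1}{\log n}\bigr)\Bigr)+1 \;\le\; \log^K n,
\]
which holds once $K$ is chosen large enough that the $-Kc/\log n$ loss in $(1-c/\log n)^K$ dominates the $+O(1/\log n)$ gain. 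This trade-off between the shrinkage of $n$ and the per-level second-moment inflation is the crux of the argument; your depth-indexed $\sigma_d$ discards exactly the information ($n_v$ decreasing) that makes the balance work. To repair your proof you would either need to re-index the induction by $n_v$ as the paper does, or unroll the recurrence carefully keeping track of the $\log n_\ell$'s and accept an exponent $>1$ on the $\log$.

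A second, smaller remark: your sparsification case inherits the same issue more acutely, because a sparsifier $\tilde G_i$ has the \emph{same} vertex count as its parent, so there is no size decrease at that level to trade against the $1+O(1/\log n)$ factor; the paper implicitly relies on the fact that a sparsification node is always immediately followed by a contraction node (which does decrease $n$), so the two levels must be analyzed jointly or the induction hypothesis must tolerate equality at the sparsification level while catching up at the following contraction.
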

\begin{proof}
Let $X$ be the estimator given by the overall algorithm.
We use induction on recursion depth to prove $\E[X^2] \le (\log^K n) \cdot (u_G(p))^2$ for some constant $K\ge 1$.

Consider the base cases of the recursion.
In the first case (Karger's algorithm (\Cref{thm:karger-estimator}) 
the relative variance is $O(1)$.
In the second case (Monte Carlo sampling), we get an unbiased estimator of $u_G(p)$ with relative variance $O(1)$ by \Cref{lem:mc,lem:mc-2step}.
In the last case (importance sampling on spanning tree packing), we get an estimator of $\ugp$ with relative bias $0.1\eps$ (by \Cref{lem:contract-bias}) and relative variance $O(1)$ (by \Cref{lem:interface-importance}).
For such an estimator $X'$, we have $\E[X']\le (1+0.1\eps)u$ and $\E[X'^2]\le O(1)\cdot \E[X']^2\le O(1)\cdot (1+0.1\eps)^2(\ugp)^2= O(1)\cdot (\ugp)^2$.

Therefore, the statement of the lemma holds for all the base cases.

Next we consider the inductive step where the algorithm takes average of two recursive calls. Let $X_1$ and $X_2$ be the outputs of the two recursive calls, so that $X=\frac{X_1+X_2}{2}$. Again, we have two cases depending on whether we are in a contraction step or a sparsification step. First, we consider a contraction step. The sparsification step is similar and handled at the end. 

Let $H_1$ and $H_2$ be the two random contracted graphs. 
%
%
Now, we have
\[\E[X^2] = \E\left[\left(\frac{X_1+X_2}{2}\right)^2\right] = \E_{H_1,H_2}\left[\frac 14 (\E[X_1^2|H_1] + \E[X_2^2|H_2]) + \frac 12 \cdot \E[X_1|H_1] \cdot \E[X_2|H_2]\right],\]
since $X_1, X_2$ are respectively independent of $H_2, H_1$.

By the inductive hypothesis, $\E[X_i^2|H_i] \le (\log^K n_i) \cdot (u_{H_i}(p/q))^2$ for $i=1,2$. By applying \Cref{lem:contract-bias} on the recursive calls, we have 
\[
\E[X_i|H_i] \le \left(1+0.1\eps\right)\cdot u_{H_i}(p/q).
\]
We can now bound the second moment by
\begin{align}
\E[X^2]
&\le  \frac 14\cdot \E_{H_1, H_2}\left[ (\log^K n_1)\cdot (u_{H_1}(p/q))^2 + (\log^K n_2) \cdot (u_{H_2}(p/q))^2\right] \nonumber\\
& \qquad \qquad + \frac 12 \left(1+0.1\eps\right)^2 \cdot \E_{H_1, H_2}[u_{H_1}(p/q)\cdot u_{H_2}(p/q)]\nonumber\\
&\le  \frac 14 \cdot (\log^K n_1) \cdot \E_{H_1}[(u_{H_1}(p/q))^2] + \frac 14 \cdot (\log^K n_2) \cdot \E_{H_2}[(u_{H_2}(p/q))^2] \nonumber\\
& \qquad \qquad +  \E_{H_1}[u_{H_1}(p/q)]\cdot \E_{H_2}[u_{H_2}(p/q)],\label{eq:xsq}
\end{align}
by independence of $H_1, H_2$ and since $\eps<1$. 

To bound the first term in \Cref{eq:xsq}, note that by \Cref{cor:contract-u-var}, we have 
\[
    \E[(u_{H_i}(p/q))^2] \le \left(2+O\left(\frac{1}{\log n}\right)\right)(u_G(p))^2 \text{ for }i=1, 2.
\]

To bound the second term in \Cref{eq:xsq}, note that $\E[u_{H_i}(p/q)] = u_G(p)$ by \Cref{lem:contract-unbiased}. Thus, 
\[
    E_{H_1}[u_{H_1}(p/q)]\cdot \E_{H_2}[u_{H_2}(p/q)]
    = (\ugp)^2.
\]


Putting the two terms together, we get
\[\E[X^2]\le \frac 14 (\log^K n_1 + \log^K n_2)\left(2+O\left(\frac{1}{\log n}\right)\right) (u_G(p))^2
+  (u_G(p))^2.\]

By \Cref{lem:contract-size-dec}, $n_i\le (2^{-2/3}+n^{-0.1})\cdot n$ whp. Then, for a fixed constant $c \in (0, 1)$, we have $\log^K n_i\le (\log n-c)^K$ for $i=1,2$. Continuing the bound on $\E[X^2]$,
\begin{align*}
\E[X^2] 
&\le \frac 12\cdot (\log n-c)^K\left(2+O\left(\frac{1}{\log n}\right)\right) (u_G(p))^2 + (u_G(p))^2\\
\frac{\E[X^2] }{(\ugp)^2}
&\le (\log n-c)^K \left(1+O\left(\frac{1}{\log n}\right)\right) + 1
\le \log^K n \cdot \left(1-\frac{c}{\log n}\right)^K \cdot \left(1+\frac{c}{\log n}\right)^{O(1)}\le \log^K n,
\end{align*}
for a large enough $K$ and $n$ larger than some constant.

For a sparsification step, we can repeat the same argument after replacing $H$ by $\tG$ and $u_H(p/q)$ by $u_{\tG}(q)$. By \Cref{lem:sparsify-unbiased}, $u_{\tG}(q)$ is also an unbiased estimator of $\ugp$. The bound for relative second moment of $u_{\tG}(q)$ is $2+O\left(\frac{1}{\log n}\right)$ by \Cref{cor:sparsify-u-var}, which is as good as \Cref{cor:contract-u-var}.
\end{proof}

\eat{

\begin{lemma}\label{lem:contract-var}
    The second moment of the estimator given by the overall algorithm is at most $n^{o(1)} \cdot (u_G(p))^2$ whp.
\end{lemma}
\begin{proof}
Let $X$ be the estimator given by the overall algorithm.
We use induction on recursion depth to prove $\E[X^2] = n^{o(1)} \cdot (u_G(p))^2$.

Consider the base cases of the recursion. In the first case (brute force on a constant-sized instance), there is no variance. In the second case (Monte Carlo sampling), we get an unbiased estimator of $u_G(p)$ with relative variance at most 1. In the last case (importance sampling on spanning tree packing), we get an estimator of $\ugp$ with relative bias $O\left(\frac{1}{\log n}\right)$ and relative variance at most 1. Therefore, the statement of the lemma holds for all the base cases.

Next we consider the inductive step where the algorithm takes average of two recursive calls. Let $X_1$ and $X_2$ be the output of the two recursive calls, so that $X=\frac{X_1+X_2}{2}$. Again, we have two cases depending on whether we are in a contraction step of a sparsification step. First, we consider a contraction step. The sparsification step is similar and handled at the end. 

Let $H_1$ and $H_2$ be the two random contracted graphs. 
%
%
Now, we have
\[\E[X^2] = \E\left[\left(\frac{X_1+X_2}{2}\right)^2\right] = \E_{H_1,H_2}\left[\frac 14 (\E[X_1^2|H_1] + \E[X_2^2|H_2]) + \frac 12 \cdot \E[X_1|H_1] \cdot \E[X_2|H_2]\right],\]
since $X_1, X_2$ are respectively independent of $H_2, H_1$.

By the inductive hypothesis, $\E[X_i^2|H_i] \le f(n) \cdot (u_{H_i}(p/q))^2$ for $i=1,2$, where the function $f(n) = n^{o(1)}$ will be fixed later. By applying \Cref{lem:contract-bias} on the recursive calls, we have 
\[
\E[X_i|H_i] \le \left(1+O\left(\frac{1}{\log n}\right)\right)\cdot u_{H_i}(p/q). 
\]
We can now bound the second moment by
\begin{align}
\E[X^2]
&\le  \frac 14\cdot \E_{H_1, H_2}\left[ f(n_1)\cdot (u_{H_1}(p/q))^2 + f(n_2) \cdot (u_{H_2}(p/q))^2\right] \nonumber\\
& \qquad \qquad + \frac 12 \left(1+O\left(\frac{1}{\log n}\right)\right)^2 \cdot \E_{H_1, H_2}[u_{H_1}(p/q)\cdot u_{H_2}(p/q)]\nonumber\\
&=  \frac{f(n_1)}{4} \cdot \E_{H_1}[(u_{H_1}(p/q))^2] + \frac{f(n_2)}{4} \cdot \E_{H_2}[(u_{H_2}(p/q))^2] \nonumber\\
& \qquad \qquad + \frac 12 \left(1+O\left(\frac{1}{\log n}\right)\right) \cdot \E_{H_1}[u_{H_1}(p/q)]\cdot \E_{H_2}[u_{H_2}(p/q)],\label{eq:xsq}
\end{align}
by independence of $H_1, H_2$ and since $\left(1+O\left(\frac{1}{\log n}\right)\right)^2 = 1+O\left(\frac{1}{\log n}\right)$.

To bound the first term in \Cref{eq:xsq}, note that by \Cref{cor:contract-u-var}, we have 
\[
    \E[(u_{H_i}(p/q))^2] \le \left(2+O\left(\frac{1}{\sqrt{\log n}}\right)\right)(u_G(p))^2 \text{ for }i=1, 2.
\]

To bound the second term in \Cref{eq:xsq}, note that $\E[u_{H_i}(p/q)] = u_G(p)$ by \Cref{lem:contract-unbiased}. Thus, 
\[
    E_{H_1}[u_{H_1}(p/q)]\cdot \E_{H_2}[u_{H_2}(p/q)]
    = (\ugp)^2.
\]


Putting the two terms together, we get
\[\E[X^2]\le \frac 14 (f(n_1)+f(n_2))\left(2+O\left(\frac{1}{\sqrt{\log n}}\right)\right) (u_G(p))^2
+ \frac 12 \left(1+O\left(\frac{1}{\log n}\right)\right) (u_G(p))^2.\]

By \Cref{lem:contract-size-dec}, $n_i\le cn$ whp, where $c\in (0, 1)$ is a sufficiently large constant bounded away from 1 that satisfies $cn > (2^{-2/3}+n^{-0.1})\cdot n$ for large enough $n$. Therefore, 
\begin{align*}
\E[X^2]
&\le \frac {f(cn)}{2}\left(2+O\left(\frac{1}{\sqrt{\log n}}\right)\right) (u_G(p))^2
+ \frac 12 \left(1+O\left(\frac{1}{\log n}\right)\right) (u_G(p))^2 \\
&\le f(cn) \left(1 + O\left(\frac{1}{\sqrt{\log n}}\right)\right) (u_G(p))^2 \quad \text{(for any $f(n)$ satisfying when $f(cn) = \Omega(\sqrt{\log n})$)}\\ 
&\le f(n) \quad \text{by fixing $f(n) = e^{O(\sqrt{\log n})}$ which is $n^{o(1)}$.}
\end{align*}

For a sparsification step, we give a similar proof after replacing $H$ by $\tG$ and $u_H(p/q)$ by $u_{\tG}(q)$. $u_{\tG}(q)$ is also an unbiased estimator of $\ugp$ by \Cref{lem:sparsify-unbiased}. The bound for relative second moment of $z_{\tG}(q)$ is $2+O\left(\frac{1}{\log n}\right)$ by \Cref{lem:sparsify-before-contract}, which is sharper than \Cref{lem:relvar-1step}, and therefore .
\end{proof}

}

\subsection{Running Time of the Algorithm}

\eat{
There are two types of recursion. The first is normal recursive contraction, that is we generate $H_1, H_2\sim G(q)$ and recursively estimate $u_{H_1}(p/q), u_{H_2}(p/q)$. The second is contraction {\em with sparsification}. In this case, we first check if the value of the minimum cut $\lambda$ exceeds $O(\log^3 n)$. If yes, we run the sparsification algorithm in \Cref{sec:sparsify} twice independently. If not, then we simply copy the graph $G$ twice. This generates two graphs $\tG_1, \tG_2$ on the same set of vertices as $G$. Next, we run recursive contraction on both $\tG_1$ and $\tG_2$. Note that this creates four graphs, two each drawn independently from the distributions $\tG_1(q)$ and $\tG_2(q)$. Then, we recursively compute $u_H(p/q)$ in each of these four graphs. 
}


We first establish the following property enforced by sparsification steps.

\begin{lemma}\label{lem:runtime-sparsify-step}
Assume $m > n^{1.5-\xi}$ and the algorithm executes a sparsification step followed by a contraction step. Let $m'$ be the number of edges in a resulting graph. Then, the following bounds hold in expectation: $m'= \tO(n)$ and $\log^2 m' \le 2^{-2/3} \cdot \log^2 m$.
\end{lemma}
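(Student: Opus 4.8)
The plan is to combine the Karger--Klein--Tarjan bound on the number of uncontracted edges (\Cref{lem:contraction-size-bound}) with the fact that the sparsification step forces the contraction parameter to be polylogarithmic. Recall that the sparsification step (run with parameter $\delta = 1/\log n$) produces a graph $\tG$ on the \emph{same} vertex set, so $|V(\tG)| = n$, with $\lambda_{\tG} \le \tilde\lambda = O(\log^3 n)$ by \Cref{lem:sparsify} (and trivially so in the degenerate case where the minimum cut is already at most $\tilde\lambda$ and $\tG$ is just a copy of $G$). The following contraction step draws a sample from $\tG(q)$, i.e.\ it contracts each edge of $\tG$ independently with probability $1-q$, where $q$ is chosen so that $q^{\gamma} = 1/2$ with $\gamma \le \lambda_{\tG}$ by \Cref{fact:gamma-lambda-bound}. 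Hence $\gamma = O(\log^3 n)$, and since $q = 2^{-1/\gamma} = e^{-(\ln 2)/\gamma}$ and $1 - e^{-x} \ge x/2$ for $x\in(0,1)$, we get
\[
    1-q \;\ge\; \frac{\ln 2}{2\gamma} \;=\; \Omega\!\left(\frac{1}{\log^3 n}\right).
\]

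The first bound is then immediate: applying \Cref{lem:contraction-size-bound} to $\tG$ with contraction probability $\pi = 1-q$ gives $\E[m'] \le |V(\tG)|/(1-q) = n/(1-q) = O(n\log^3 n) = \tO(n)$ (this is also where the earlier assertion $\E[\tilde m] = O(n\tilde\lambda)$ is discharged).

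For the second bound I would use a truncation argument. Fix a constant $C$ with $\E[m'] \le Cn\log^3 n$ and a small constant $\eps_0 \in (0,1)$ to be chosen, and set $T := Cn\log^3 n/\eps_0$, so $\Pr[m' > T] \le \eps_0$ by Markov's inequality. Since a contraction step never creates edges we always have $m' \le m$, and when $1 \le m' \le T$ we have $\log^2 m' \le \log^2 T$ (the degenerate case $m' \le 1$ only occurs at a base case and contributes $0$). Hence
\[
    \E\!\left[\log^2 m'\right] \;\le\; \log^2 T \;+\; \Pr[m' > T]\cdot \log^2 m \;\le\; \log^2 T + \eps_0\,\log^2 m .
\]
As $C$ and $\eps_0$ are constants, $\log T = \log n + 3\log\log n + O(1) = (1+o(1))\log n$, while the hypothesis $m > n^{1.5-\xi}$ gives $\log m > (1.5-\xi)\log n$; therefore $\log^2 T \le \frac{1+o(1)}{(1.5-\xi)^2}\log^2 m$ and
\[
    \E\!\left[\log^2 m'\right] \;\le\; \left(\frac{1+o(1)}{(1.5-\xi)^2} + \eps_0\right)\log^2 m .
\]
Since $2^{-2/3} \approx 0.630$ whereas $1/(1.5-\xi)^2 \to 1/2.25 \approx 0.444$ as $\xi\to 0$, one can choose $\xi$ small enough (this retroactively fixes the constant $\xi$ used throughout \Cref{sec:contraction}), then $\eps_0$ small enough, and $n$ above the constant threshold of the first base case, so that the parenthesized factor is at most $2^{-2/3}$; this gives $\E[\log^2 m'] \le 2^{-2/3}\log^2 m$.

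The one place requiring care is this final numerical estimate: one must check that $1/(1.5-\xi)^2$ stays comfortably below $2^{-2/3}$ with enough room to absorb both the $o(1)$ coming from the $\log\log n$ terms in $\log T$ and the Markov loss $\eps_0$ --- this is exactly the constraint that pins down how small $\xi$ must be chosen. Everything else is a direct application of \Cref{lem:sparsify}, \Cref{fact:gamma-lambda-bound}, and \Cref{lem:contraction-size-bound}, together with the fact that contraction only deletes edges.
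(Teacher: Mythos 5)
Your first bound (expected $m' = \tO(n)$) follows the same path as the paper's proof: both use \Cref{fact:gamma-lambda-bound} to get $\gamma \le \lambda_{\tG} = O(\log^3 n)$, deduce $1/(1-q) = O(\tilde\lambda)$, and invoke \Cref{lem:contraction-size-bound}.

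For the second bound the two proofs diverge. The paper's argument is a one-liner: $\log^2(\cdot)$ is concave on $(e,\infty)$, so by Jensen's inequality $\E[\log^2 m'] \le \log^2 \E[m'] \le \log^2\bigl(\tO(m^{2/3+\xi})\bigr) \le \bigl((\tfrac23+\xi+o(1))\log m\bigr)^2$, and $(\tfrac23+\xi+o(1))^2 \le 2^{-2/3}$ for $\xi<0.1$ and $n$ large. You instead use a truncation argument: split at a Markov threshold $T$, use $m'\le m$ to control the tail $\Pr[m'>T]$, and bound the bulk by $\log^2 T$. Both are correct. The Jensen route is shorter and avoids introducing $\eps_0$; your route is more elementary (no convexity) and, as a side benefit, is explicit about the support of $m'$, whereas the paper's use of Jensen for a function concave only on $(e,\infty)$ tacitly ignores the measure-zero event that $m'$ is tiny (which, as you note, can only happen at a base case, so both proofs are fine). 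The constant $\xi$ is pinned down in both proofs by essentially the same numerical requirement: the paper needs $(\tfrac23+\xi)^2 < 2^{-2/3}$ and you need $1/(1.5-\xi)^2 < 2^{-2/3}$ with slack for $\eps_0$ and the $o(1)$; these constraints coincide up to rearrangement.
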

\begin{proof}
Sparsification generates a graph $\tG$ with min-cut value $\tilde{\lambda}=\tO(1)$. Then we perform random contraction on $\tG$ with $q^\gamma = \nicefrac 12$, where $\gamma \le \tilde{\lambda}$. Then $q^{\tilde{\lambda}}\le q^{\gamma} = \nicefrac 12$ and $\frac{1}{1-q} = O(\tilde{\lambda})$. By \Cref{lem:contraction-size-bound}, the expected edge size of the graph after contraction (which follows distribution $\tG(q)$) is at most $\frac{n}{1-q}= O(n\tilde{\lambda}) = \tO(n)$. (Note that $n$ is the vertex size before sparsification.)

The assumption gives $m > n^{1.5-\xi}$. After contraction we have $\E[m'] \le \tO(n) \le \tO(m^{2/3+\xi})$. Notice that $\log^2(x)$ is concave when $x>e$ and we can apply Jensen's inequality. For $\xi < 0.1$,
\[
\E[\log^2 m']
\le \log^2 \E[m']
\le \log^2 \tO(m^{2/3+\xi})
\le \left(\left(\frac 23 + \xi + o(1)\right)\log m\right)^2
\le 2^{-2/3}\cdot \log^2 m. \qedhere
\]
\end{proof}

\begin{lemma}\label{lem:contraction-runtime}
The algorithm runs in $m^{1+o(1)}+\tO(n^{1.5}\eps^{-1})$ time in expectation.
\end{lemma}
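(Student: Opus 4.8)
The plan is to write a recurrence for the expected running time $T(n,m)$ of the algorithm on an instance with $n$ vertices and $m$ edges in the recursive‑contraction regime ($4n^{-3}\le p^\lambda\le n^{-0.5}$ and $p<\theta$), and to solve it by unrolling the recursion tree. First I would account for the non‑recursive work at a single recursive node: computing $\lambda$ costs $\tO(m)$, computing the Gomory--Hu tree $Y$ to obtain $\gamma$ costs $m^{1+o(1)}$ by \cite{Abboud22}, and drawing $H_1,H_2\sim G(q)$ costs $O(m)$; a sparsification step costs an extra $O(m)$ by \Cref{lem:sparsify}. So the local cost at a node is $m^{1+o(1)}$ with $m$ its own edge count, and the recurrence reads
\[T(n,m)\le m^{1+o(1)}+2\cdot\E_{H\sim G(q)}\!\left[T(n_H,m_H)\right]\]
in a contraction step (with a $4$‑way branching variant that also absorbs the preceding sparsification when $m>n^{1.5-\xi}$), with base cases $T\le\tO(\eps^{-4})$ when $n\le\tO(\eps^{-2})$ by \Cref{thm:karger-estimator} and $T\le m^{1+o(1)}+\tO(n^{1.5})$ otherwise by \Cref{lem:mc,lem:mc-2step,lem:interface-importance}.

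Next I would pin down the shape of the recursion tree. By \Cref{lem:contract-size-dec}, each contraction child has at most $(2^{-2/3}+n^{-0.1})n$ vertices whp; and since $q^{-\gamma}=2$ with $\gamma\le\tilde\lambda=\tO(1)$ (after sparsification) forces $1/(1-q)=\tO(1)$, \Cref{lem:contraction-size-bound} bounds the expected child edge count by $n/(1-q)=\tO(n)$. A sparsification step is triggered only when $m>n^{1.5-\xi}$ and, followed by contraction, yields $4$ grandchildren with $\tO(n)$ expected edges and $\log^2 m'\le 2^{-2/3}\log^2 m$ by \Cref{lem:runtime-sparsify-step}. The crucial structural fact is that sparsification drives the min‑cut down to $\tilde\lambda=\tO(1)$ (\Cref{lem:sparsify}), after which \Cref{lem:contraction-size-bound} keeps every descendant sparse ($m=\tO(n)\le n^{1.5-\xi}$); hence any root‑to‑leaf path meets sparsification only $O(1)$ times, so it inflates the branching by only an $O(1)$ factor, and --- more robustly, using the $\log^2$ bound --- the edge counts $m^{(0)}\ge m^{(1)}\ge\cdots$ at successive sparsification ancestors satisfy $\log m^{(i)}\le 2^{-i/3}\log m$, so $\sum_i(m^{(i)})^{1+o(1)}=m^{1+o(1)}$. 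Conditioning on the whp events of \Cref{lem:contract-size-dec,lem:sparsify} at every node (which holds with probability $1-1/\poly(n)$ by a union bound over the $\tO(n^{1.5})$ nodes), the tree has depth $O(\log n)$ and at most $\tO(2^d)$ nodes at level $d$ (counting contraction steps), each on $\le C\cdot 2^{-2d/3}n$ vertices with $\tO(2^{-2d/3}n)$ edges. On the complementary $1/\poly(n)$ event the recursion still terminates (the vertex count is a nonincreasing positive integer and every base case is finite), and a crude worst‑case bound shows it adds only a lower‑order term to the expectation.

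It then remains to sum the work over the tree. The $m^{1+o(1)}$ local cost bites only at the $O(1)$ heavy nodes near the top, totalling $m^{1+o(1)}$; all deeper instances are sparse, so level $d$ contributes $2^d\cdot\tO(2^{-2d/3}n)=\tO(2^{d/3}n)$, a geometric series dominated by the deepest level and summing to $\tO(n^{1.5})$. For the leaves: a leaf hitting base case 1 has $\tO(\eps^{-2})$ vertices and runs in $\tO(\eps^{-4})$ time, and such leaves lie at depth $d$ with $2^{-2d/3}n=\tO(\eps^{-2})$, i.e.\ $2^d=\tO(n^{1.5}\eps^{3})$, so together they contribute $\tO(n^{1.5}\eps^{3})\cdot\tO(\eps^{-4})=\tO(n^{1.5}\eps^{-1})$ --- this is exactly where the $\eps^{-1}$ enters. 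A leaf hitting the Monte Carlo or importance‑sampling base cases on $n_d$ vertices runs in $(\tilde m_d)^{1+o(1)}+\tO(n_d^{1.5})$ time; summing $\tO(n_d^{1.5})$ over the $\le 2^d$ leaves at level $d$ gives $\tO(n^{1.5})$ per level and hence $\tO(n^{1.5})$ in total, while the $(\tilde m_d)^{1+o(1)}$ terms are dominated by the top‑level $m^{1+o(1)}$. Combining, $T(n,m)=m^{1+o(1)}+\tO(n^{1.5}\eps^{-1})$.

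The main obstacle I expect is the bookkeeping needed to make the recursion tree well‑behaved \emph{in expectation} despite the interleaved sparsification: one must show that sparsification is rare enough not to blow up the branching, that the edge count collapses to $\tO(n)$ and stays there so that internal work is governed by the single $m^{1+o(1)}$ plus the geometric $\tO(n^{1.5})$ tail (rather than a per‑level $m^{1+o(1)}$), and that the rare failure of the whp size bounds --- where the recursion could in principle descend many levels without shrinking --- contributes only a lower‑order term; keeping the Gomory--Hu computations on the sparse recursive instances from accumulating more than the claimed polylogarithmic overhead is the most delicate piece of this accounting.
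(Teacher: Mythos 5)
Your proof is correct and identifies all the same key mechanisms as the paper --- geometric shrinkage of the recursion tree by a factor of $2^{-2/3}$ per level (\Cref{lem:contract-size-dec}), edge counts collapsing to $\tO(n)$ after a sparsification+contraction step so that $m^{1+o(1)}$ dominates only near the root, and the $\eps^{-1}$ factor entering via the small-$n$ base case from \Cref{thm:karger-estimator} --- but the bookkeeping is organized differently from the paper. The paper restructures the tree by shortcutting irregular contraction nodes (those whose parent sparsified), charges the $m^{1+o(1)}$ cost of a sparsification node to its parent, and then introduces a potential $\rho = n\,(1+n^{-0.1})\,\log^2 m$; \Cref{lem:runtime-sparsify-step} shows $\log^2 m' \le 2^{-2/3}\log^2 m$ in expectation, so both the $2$-way contraction recurrence and the $4$-way sparsification recurrence collapse into $T(\rho) = \tO(\rho^{1.5}\eps^{-1}) + 2T(2^{-2/3}\rho)$ and $T(\rho) = \tO(\rho^{1.5}\eps^{-1}) + 4T(2^{-4/3}\rho)$, which a master-theorem argument solves to $\tO(\rho^{1.5}\eps^{-1})$. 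You instead unroll the tree level by level, using the $O(1)$-sparsifications-per-path observation (and your $\log m^{(i)} \le 2^{-i/3}\log m$ fallback, which is essentially the same potential in disguise) to control the $4$-way branching, and you sum a geometric series; you also separately count the base-case-$1$ leaves to extract the $\eps^{-1}$, whereas the paper folds this into the recurrence by observing directly that $\tO(n^2) = \tO(n^{1.5}\eps^{-1})$ once $n = \tO(\eps^{-2})$. What the paper's potential buys is a single clean recurrence that mechanically absorbs the interaction between branching factor and edge-count decay without any case analysis over how many sparsifications a path contains; what your unrolling buys is a more transparent picture of where the work actually accumulates. Both you and the paper are informal about the contribution of the tail event where the whp shrinkage of \Cref{lem:contract-size-dec} fails (your ``crude worst-case bound'' is not actually exhibited, and in principle the worst-case recursion depth is $\Omega(n)$), so this is not a gap you introduced; a fully rigorous treatment would absorb the $\exp(-\Omega(n^{0.8}))$ failure probability into the recurrence via a $(1 - 2\Pr[\text{bad}])^{-1}$ factor as is standard.
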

\begin{proof}
Let $T(n, m)$ be the expected running time of a recursive call on a graph with $n$ vertices and $m$ edges. Recall that there are three types of nodes in the computation tree: base cases, contraction nodes, and sparsification nodes. If the parent of a contraction node is a sparsification node, we call it an {\em irregular} contraction node; otherwise, the contraction node is called a {\em regular} contraction node. First, we shortcut all irregular contraction nodes (by making their parent the parent of their children). The running time of all such irregular contraction nodes are accounted for by their parent sparsification nodes. Note that each sparsification node has at most two irregular contraction nodes as children; hence, it accounts for the cost of at most three nodes (itself and its two children irregular contraction nodes). Because of this transformation, the number of children of a sparsification node can increase to at most 4.

In the rest of the discussion, we assume that the contraction tree has only three types of nodes: sparsification nodes, regular contraction nodes and base cases (which are leaves of the computation tree). 
In the first base case of $n<O(\eps^{-2}\log^2\frac{1}{\eps})$, the running time is $\tO(n^2)=\tO(n^{1.5}\eps^{-1})$ by \Cref{thm:karger-estimator}.
The second and third base cases take $m^{1+o(1)}+\tO(n^{1.5})$ time by \Cref{lem:interface-importance,lem:mc,lem:mc-2step}.
In a sparsification node, the sparsification algorithm in \Cref{sec:sparsify} takes $O(m)$ time. 
In a contraction node, constructing a Gomory-Hu tree for unweighted graphs takes time $m^{1+o(1)}$ \cite{Abboud22}, which dominates all other operations. 
So the time spent at any node of the computation tree is $m^{1+o(1)}+\tO(n^{1.5}\eps^{-1})$ in total (including the charge received by a sparsification node from its children irregular contraction nodes).

\Cref{lem:contract-size-dec} shows that whp each recursive contraction reduces the vertex size by a factor of $2^{-2/3}+O(n^{-0.1})$. In particular, this holds for a regular contraction node and its children, as well as a sparsification node and its children inherited from an irregular contraction child. 

First, we consider regular contraction nodes. Note that these nodes still have at most two children.
 Moreover, they satisfy $m\le n^{1.5-\xi}$ (which implies $m^{1+o(1)} \le \tO(n^{1.5})$). Therefore, the recurrence is
\begin{align}\label{eq:recur-contract}
    T(n, m) \le \tO(n^{1.5}\eps^{-1}) + 2 T(2^{-2/3}(1+n^{-0.1})\cdot n, m).
\end{align}    

Now, we consider a non-root sparsification node. This is more complicated because $m > n^{1.5-\xi}$. Recall that the running time incurred at this node (including that inherited from irregular contraction children) is $m^{1+o(1)}+\tO(n^{1.5}\eps^{-1})$. Of these, the term $\tO(n^{1.5}\eps^{-1})$ can be handled as in the previous case, namely it appears in the recursion. The other term $m^{1+o(1)}$ is charged to the parent of the sparsification node. Let $\hat{n}, \hat{m}$ respectively represent the number of vertices and edges in the parent node. If the parent is a regular contraction node, then we have $m \le \hat{m} \le \hat{n}^{1.5-\xi}$. In this case, we can charge the $m^{1+o(1)}$ term to the parent's recurence relation \Cref{eq:recur-contract}. Otherwise, the parent is a sparsification node and we have $m=\tO(\hat{n})$ in expectation by \Cref{lem:runtime-sparsify-step}. So, we can also charge $m^{1+o(1)}$ to $\tO(\hat{n}^{1.5})$. Finally, note that a sparsification node has at most 4 children. Let $m'$ denote the number of edges in any child of the sparsification node. We can write the following recurrence for a sparsification node:
\begin{align}\label{eq:recur-sparsify}
    T(n, m) \le \tO(n^{1.5}\eps^{-1}) + 4 T(2^{-2/3}(1+n^{-0.1})\cdot n, m'), \quad \text{where $m'$ satisfies \Cref{lem:runtime-sparsify-step}.}
\end{align}

For the sake of the master theorem, we define a potential $\rho = n\cdot (1+n^{-0.1})\cdot \log^2 m$. \Cref{lem:runtime-sparsify-step} measures the progress in $\log^2 m$ for the second recurrence (\Cref{eq:recur-sparsify}) by $\log^2 m' \le 2^{-2/3} \cdot \log^2 m$. We have
\[T(\rho) = \tO(\rho^{1.5}\eps^{-1}) + 2T(2^{-2/3}\rho) \text{ or } T(\rho) = \tO(\rho^{1.5}\eps^{-1}) + 4T(2^{-4/3}\rho)\]
The solution is $T(\rho) = \tO(\rho^{1.5}\eps^{-1})$, or $T(n, m) = \tO(n^{1.5}\eps^{-1})$.

Finally, note that the root node takes $m^{1+o(1)}+\tO(n^{1.5}\eps^{-1})$ time (and unlike non-root nodes, the $m^{1+o(1)}$ term isn't chargeable elsewhere). Therefore, the overall running time is $m^{1+o(1)} + \tO(n^{1.5}\eps^{-1})$ in expectation.
\end{proof}
\eat{
Now, we write recurrence for normal recursive contraction.
All base cases take $m^{1+o(1)}+\tO(n^{1.5})$ time.
In a recursive call with $n$ vertices and $m$ edges, constructing a Gomory-Hu tree for unweighted graphs takes time $m^{1+o(1)}$ \cite{Abboud22}. Sampling all edges takes $O(m)$ time. 
We guarantee $m^{1+o(1)} \le n^{1.5}$.
\Cref{lem:contract-size-dec} shows that whp each recursion step reduces the vertex size by a factor of $2^{-2/3}+O(n^{-0.1})$. If we define $T(n)$ as the expected running time on a graph of $n$ vertices, then the recurrence is
\[T(n, m') = \tO(n^{1.5}) + 2 T(2^{-2/3}(1+n^{-0.1})n, m').\]

Next we write recurrence for contraction with sparsification. The sparsification takes $O(m)$. The recursive contraction step is discussed above. The base cases cost $m^{1+o(1)}+\tO(n^{1.5})$ time. Notice that base cases are called after $m$ is decreased to $\tO(n)$, so $m^{1+o(1)} < \tO(n^{1.5})$. Constructing a Gomory-Hu tree takes $m^{1+o(1)}$. We will charge this time from the parent node (except for the initial graph), so that this term is handled in the parent node where $m^{1+o(1)} < \tO(n^{1.5})$. (Algorithmically, when we generate $H\sim G(q)$, we also compute its Gomory-Hu tree.) Therefore, in this case we have

\[T(n, m') = \tO(n^{1.5}) + 4 T(2^{-2/3}(1+n^{-0.1})n, (m')^{2/3+o(1)}).\]

For the sake of master theorem, define potential $\rho = n(1+n^{-0.1})\cdot \log^2 m'$. Then
\[T(\rho) = \tO(\rho^{1.5}) + 2T(2^{-2/3}\rho) \text{ or } T(\rho) = \tO(\rho^{1.5}) + 4T(2^{-4/3}\rho)\]

The solution is $T(\rho) = \tO(\rho^{1.5})$, or $T(n, m') = \tO(n^{1.5}\log^3 m')$. Because $\log m' = O(\log n)$, the overall running time is $\tO(n^{1.5})$.

In the initial graph, the edge size is $m$, so the base cases and calculating Gomory-Hu tree take time $m^{1+o(1)}+\tO(n^{1.5})$. 
Therefore, the overall running time is $m^{1+o(1)} + \tO(n^{1.5})$ in expectation, which establishes \Cref{thm:running-time-with-lambda}.
}

    \Cref{lem:interface-contract} now follows from \Cref{lem:contract-bias,lem:contract-var,lem:contraction-runtime}. \qed

\section{Sparsification in Recursive Contraction}
\label{sec:sparsify}
This section describes the sparsification step, which is used in recursive contraction at the root of the computation tree, or when the edge size $m > n^{1.5-\xi}$. The goal is to reduce the minimum cut value to $O(\log^3 n)=\tO(1)$.


We apply the sparsification lemma (\Cref{lem:sparsify}) with parameter $\delta=O\left(\frac{1}{\log n}\right)$ to obtain a sparsifier graph $\tilde{G}$. The graph $\tilde{G}$ is generated by picking each edge independently with probability $\alpha= \Theta\left(\frac{\log^3 n}{\lambda}\right)$ from $G$. $\tilde{G}$ has min-cut value $\tilde{\lambda}=O(\log^3 n)$, which is our desired property.

In the rest of this section, we show that there is a value $q$ such that $u_{\tilde{G}}(q)$ is an unbiased estimator of $\ugp$ with $O(1)$ relative variance. This would allow us to focus on estimating $u_{\tilde{G}}(q)$ in the recursive contraction algorithm.

Choose $q$ such that $1-q =\frac{1-p}{\alpha}$.
Note that $q < p$ by the following argument: since $\lambda>\Omega(\log^3n)$, we have $\alpha < 1$, which implies $1-q > 1-p$ and $q < p$. 

\begin{lemma}\label{lem:sparsify-unbiased}
    Assume $\tG$ is generated by picking every edge in $G$ independently with probability $\alpha$. Also, suppose $q$ is chosen such that $1-q = \frac{1-p}{\alpha}$. Then, $u_{\tilde{G}}(q)$ is an unbiased estimator of $u_G(p)$.
\end{lemma}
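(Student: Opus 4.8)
The plan is to mimic the proof of \Cref{lem:contract-unbiased}: we view the $p$-deletion process on $G$ as a two-stage coin-flipping process whose first stage produces the sparsifier $\tG$ and whose second stage is a $q$-deletion on $\tG$. Recall that $u_G(p)$ is, by definition, the probability that the random subgraph of $G$ obtained by retaining each edge independently with probability $1-p$ is disconnected.

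First I would decompose the retention probability. Since $q$ is chosen so that $1-q=\frac{1-p}{\alpha}$, and $q\in[0,1)$ in the relevant parameter regime (recall from the discussion preceding the lemma that $\alpha<1$, hence $q<p<1$; and $q\ge 0$ amounts to $1-p\le\alpha$, which holds since $\lambda=\Omega(\log^3 n)$ forces $\alpha=\Theta(\log^3 n/\lambda)$ to dominate $1-p=O(\log n/\lambda)$ for $n$ large), we have $\alpha(1-q)=1-p$. Now consider the coupled process: \emph{(Stage 1)} retain each edge of $G$ independently with probability $\alpha$ to obtain $\tG$; \emph{(Stage 2)} retain each edge of $\tG$ independently with probability $1-q$ (equivalently, delete it with probability $q$) to obtain a graph $G'$. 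All coin flips are mutually independent, so a fixed edge $e\in E$ survives into $G'$ iff two independent events of probabilities $\alpha$ and $1-q$ both occur, which has probability $\alpha(1-q)=1-p$; moreover these survival events are independent across edges. Hence $G'$ has exactly the distribution of the graph obtained by $p$-deletion on $G$, so $\Pr[G'\text{ is disconnected}]=u_G(p)$.

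Next I would condition on Stage 1. By the law of total probability,
\[
    u_G(p)=\Pr[G'\text{ is disconnected}]=\E_{\tG}\big[\Pr[G'\text{ is disconnected}\mid\tG]\big].
\]
Conditioned on $\tG$, Stage 2 is exactly a $q$-deletion process on $\tG$, so $\Pr[G'\text{ is disconnected}\mid\tG]=u_{\tG}(q)$ by the definition of network unreliability; this remains correct on the event that $\tG$ is itself disconnected, since there $u_{\tG}(q)=1$ and $G'\subseteq\tG$ is disconnected with conditional probability $1$. Combining the two displays yields $\E[u_{\tG}(q)]=u_G(p)$, as claimed.

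I do not expect a substantive obstacle: the argument is a direct reduction to the definitions plus the independence of the two sampling rounds, exactly paralleling \Cref{lem:contract-unbiased}. The only points needing a line of care are verifying that $q$ is a bona fide probability in $[0,1)$ (handled above via the parameter regime) and observing that the identity $\Pr[G'\text{ disconnected}\mid\tG]=u_{\tG}(q)$ is still valid when $\tG$ is already disconnected.
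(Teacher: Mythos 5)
Your proof is correct and follows the same two-stage coupling that the paper uses: retaining an edge with probability $1-p$ is decomposed as first picking it with probability $\alpha$ (yielding $\tG$) and then retaining it with probability $1-q$, so that conditioning on $\tG$ gives $\E[u_{\tG}(q)]=u_G(p)$. The paper states this in a single sentence; your version simply spells out the coupling, the independence across edges, and the edge case where $\tG$ is already disconnected, all of which are consistent with the paper's intent.
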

\begin{proof}
Keeping each edge with probability $1-p$ is equivalent to first choosing it with probability $\alpha$ and then keeping it with probability $1-q = \frac{1-p}{\alpha}$.
\end{proof} 

Our goal now is to bound the relative variance of $u_{\tilde{G}}(q)$. Instead, we will bound the relative variance of $z_{\tilde{G}}(q)$ and invoke \Cref{lem:u-from-z}.
%
We now bound the relative second moment of $z_{\tG}(q)$:
\begin{lemma} \label{lem:sparsify-before-contract}
    When $n^{-3} \le p^{\lambda} \le n^{-0.5}$, we have that $z_{\tilde{G}}(q)$ is an unbiased estimator of $\zgp$ with relative second moment at most $2+O\left(\frac1{\log n}\right)$.
\end{lemma}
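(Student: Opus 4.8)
The plan is to follow the template of \Cref{lem:relvar-1step}, with the variance amplification now originating from sparsification instead of contraction. First I would record the elementary algebra of the process. Since $\tilde G$ retains each edge of $G$ independently with probability $\alpha$, for any cut $C$ (identified with its set of crossing edges) with $c$ edges in $G$ we have $d_{\tilde G}(C)\sim\mathrm{Bin}(c,\alpha)$, so $\E\big[q^{d_{\tilde G}(C)}\big]=(\alpha q+1-\alpha)^c=p^{\,c}$, using the defining relation $1-q=(1-p)/\alpha$ (equivalently $\alpha q+1-\alpha=p$). Summing over all cuts yields $\E[z_{\tilde G}(q)]=\sum_C p^{\,c}=z_G(p)$, the $z$-analogue of \Cref{lem:sparsify-unbiased}. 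For the second moment, the same bookkeeping applied to an ordered pair of cuts $C_i,C_j$ splits the edges crossing at least one of them into $E(C_i)\cap E(C_j)$, which contributes a factor $\mu:=\alpha q^2+1-\alpha$ per edge, and the two symmetric differences, which contribute $\alpha q+1-\alpha=p$ per edge. Hence, writing $a_{ij}:=|C_i\cap C_j|$ for the number of edges crossing both cuts,
\[
\E[z_{\tilde G}(q)^2]=\sum_{C_i,C_j}\mu^{\,a_{ij}}\,p^{\,c_i+c_j-2a_{ij}}=\sum_{C_i,C_j}p^{\,c_i+c_j}\Big(\frac{\mu}{p^2}\Big)^{a_{ij}}=\sum_{C_i,C_j}p^{\,|C_i\cup C_j|}\Big(\frac{\mu}{p}\Big)^{a_{ij}}.
\]

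Next I would establish three estimates on $\mu$. A one-line calculation gives $\mu=p-q(1-p)$, so $0<\mu<p$; moreover $\tfrac{\mu}{p^2}-1=\tfrac{(1-p)^2(1-\alpha)}{\alpha p^2}$ and $\tfrac{\mu}{p}-p=\tfrac{(1-p)^2(1-\alpha)}{\alpha p}$. Sparsification is invoked only when $\alpha=\Theta(\log^3 n/\lambda)\le 1$, i.e.\ $\lambda=\Omega(\log^3 n)$, and $p^\lambda\ge n^{-3}$ forces $1-p\le\ln(1/p)\le 3\ln n/\lambda$ and $p\ge\tfrac12$ for large $n$; therefore $\tfrac{(1-p)^2}{\alpha p^2}=O\!\big(\tfrac{1}{\lambda\log n}\big)$. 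Consequently: (i) $r:=\mu/p^2\ge 1$ with $r^\lambda\le e^{\lambda(r-1)}=1+O(1/\log n)$; (ii) $0<\mu/p<1$ with $\mu/p\le p\big(1+O(\tfrac{1}{\lambda\log n})\big)$, hence $(\mu/p)^\lambda\le p^\lambda\big(1+O(1/\log n)\big)$; and trivially (iii) $z_G(p)\ge p^\lambda$.

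Now I would split $\E[z_{\tilde G}(q)^2]$ over all ordered pairs by whether $a_{ij}\le\lambda$ or $a_{ij}>\lambda$. For $a_{ij}\le\lambda$ we use $r^{a_{ij}}\le r^\lambda$, so that part is at most $r^\lambda\sum_{C_i,C_j}p^{c_i+c_j}=r^\lambda\,z_G(p)^2\le(1+O(1/\log n))z_G(p)^2$. For $a_{ij}>\lambda$ we use the third form above and $\mu/p<1$ to get $p^{|C_i\cup C_j|}(\mu/p)^{a_{ij}}\le p^{|C_i\cup C_j|}(\mu/p)^\lambda$, so that part is at most $(\mu/p)^\lambda\sum_{a_{ij}>\lambda}p^{|C_i\cup C_j|}$; the diagonal pairs here (those with $c_i>\lambda$) sum to at most $\sum_{c_i>\lambda}p^{c_i}\le z_G(p)$ and the off-diagonal pairs sum to at most $x_G(p)$, so this part is at most $(\mu/p)^\lambda\big(z_G(p)+x_G(p)\big)\le p^\lambda(1+O(1/\log n))\big(z_G(p)+x_G(p)\big)$. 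Finally, $p^\lambda\le z_G(p)$ by (iii), and $x_G(p)\le z_G(p)/\log n$ by \Cref{lem:z-approx-u} — the recursive-contraction algorithm invokes sparsification only when $p<\theta$ — so the second part is also at most $(1+O(1/\log n))z_G(p)^2$. Adding the two parts gives $\E[z_{\tilde G}(q)^2]\le(2+O(1/\log n))z_G(p)^2$, which is the claim.

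The delicate point is the ``$a_{ij}>\lambda$'' (in particular diagonal) contribution: the factor $(\mu/p^2)^{a_{ij}}$ grows with $a_{ij}$, and in the regime where $p^\lambda$ is merely $\le n^{-0.5}$ one cannot control sums like $\sum_C(\text{something just below }p)^{c}$ by cut counting alone. What makes it work is choosing the split threshold to be exactly $\lambda$, combined with the tight estimate $(\mu/p)^\lambda=p^\lambda(1+O(1/\log n))\le(1+O(1/\log n))\,z_G(p)$: this lets the whole ``$a_{ij}>\lambda$'' sum be absorbed into $z_G(p)^2$ using only $z_G(p)\ge p^\lambda$ and the phase-transition bound $x_G(p)\le z_G(p)/\log n$ — the very ingredients that also power the $V_3$ bound in \Cref{lem:relvar-1step}.
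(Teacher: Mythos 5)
Your proof is correct and follows essentially the same template as the paper's: compute $\E[q^{Y_e}]=p$ and $\mu=\E[q^{2Y_e}]=\alpha q^2+1-\alpha$, show $\mu/p^2=1+O(1/(\lambda\log n))$ and $(\mu/p)^\lambda\le p^\lambda(1+O(1/\log n))\le z_G(p)(1+O(1/\log n))$, split the double sum at $|C_i\cap C_j|=\lambda$, and finish with $z_G(p)\ge p^\lambda$ and the phase-transition bound $x_G(p)\le z_G(p)/\log n$. The only cosmetic difference is that you use a two-part split ($a_{ij}\le\lambda$ vs.\ $a_{ij}>\lambda$, absorbing the diagonal terms with $c_i>\lambda$ into the second part via $\sum_{c_i>\lambda}p^{c_i}\le z_G(p)$), whereas the paper carves out the diagonal $C_i=C_j$ as a separate third part; both give $2+O(1/\log n)$ by the same estimates.
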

\begin{proof}
Since $p^\lambda \ge n^{-3}$, we have $1-p \le 1-e^{-3\ln n/\lambda}\le O\left(\frac{\log n}{\lambda}\right)$. Similarly, $p^\lambda \le n^{-0.5}$ implies that $1-p \ge 1-e^{-0.5\ln n/\lambda} \ge \Omega\left(\frac{\log n}\lambda\right)$. Therefore, denote 
\begin{equation*}
\tau = 1-p = \Theta\left(\frac{\log n}\lambda\right).
\end{equation*}
Recall from above that we defined $q$ so as to satisfy 
\begin{equation*}
1-q = \frac{\tau}{\alpha} = \Theta\left(\frac{\frac{\log n}{\lambda}}{\frac{\log^3 n}{\lambda}}\right) = \Theta\left(\frac{1}{\log^2 n}\right).
\end{equation*}

Let $Y_e$ be the indicator that edge $e$ is picked by the random graph ${\tilde{G}}$.  For any edge $e$, we have
\begin{align}
\E\left[q^{Y_e}\right] &= \alpha q + (1-\alpha) = 1-\alpha(1-q) = p \label{eq:qtoY} \\ 
\E\left[q^{2Y_e}\right] &= \alpha q^2 + (1-\alpha) = 1-\tau (1+ q) \le (1-\tau)(1-\tau q) = p\cdot (1-\tau q).\label{eq:qto2Y}
\end{align}
 We can bound $\E\left[q^{2Y_e}\right]$ in two ways:
\begin{align}
\frac{\E\left[q^{2Y_e}\right]}{p} &\le  1-\tau q \label{eq:qto2Yfirst}\\
\frac{\E\left[q^{2Y_e}\right]}{p^2} &\le  \frac{1-\tau q}{p} = \frac{(1-\tau)+\tau(1-q)}{1-\tau} = 1+\frac{\tau(1-q)}{1-\tau} \le 1+2\tau(1-q) = 1+O\left(\frac{1}{\lambda\log n}\right). \label{eq:qto2Ysecond}
\end{align}

Next we calculate the expectation and relative variance of $z_{\tilde{G}}(q)$. Notice that $Y_e$'s are independent for each edge $e$. Use $C_i\Delta C_j$ to denote the symmetric difference $(C_i\setminus C_j)\cup (C_j\setminus C_i)$ over two cuts $C_i, C_j$. Use $\tilde{d}(\cdot)$ to denote the cut value function in $\tilde{G}$.  First, we calculate the expectation of $z_{\tilde{G}}(q)$:
\[
\E[z_{\tilde{G}}(q)]
= \E\left[\sum_{C_i} q^{\tilde{d}(C_i)}\right] 
= \sum_{C_i}\E\left[q^{\sum_{e\in C_i}Y_e}\right]
= \sum_{C_i}\prod_{e\in C_i} \E\left[q^{Y_e}\right] 
\stackrel{(\ref{eq:qtoY})}{=} \sum_{C_i}p^{c_i}  
= z_G(p).
\]
Next, we bound the second moment of $z_{\tilde{G}}(q)$
\begin{align*}
    \E\left[(z_{\tilde{G}}(q))^2\right] 
    &= \E\left[\sum_{C_i}\sum_{C_j}q^{\tilde{d}(C_i)+\tilde{d}(C_j)}\right] 
    =\sum_{C_i}\sum_{C_j} \E\left[q^{\sum_{e\in C_i}Y_e + \sum_{e\in C_j}Y_e}\right] \\
    &=\sum_{C_i}\sum_{C_j}\prod_{e\in C_i\cap C_j}\E\left[q^{2Y_e}\right] \prod_{e\in C_i\Delta C_j} \E\left[q^{Y_e}\right]
    \stackrel{(\ref{eq:qtoY})}{=} \sum_{C_i}\sum_{C_j}p^{|C_i\Delta C_j|} \cdot \left(\E\left[q^{2Y_e}\right]\right)^{|C_i\cap C_j|}.
\end{align*}
We partition this sum into three parts and separately bound their ratios with $(\zgp)^2$.

For terms with $C_i=C_j$,
\begin{align}
&\frac{\sum_{C_i} p^{|C_i\Delta C_i|} \cdot \left(\E\left[q^{2Y_e}\right]\right)^{|C_i\cap C_i|}}{(\zgp)^2} 
= \frac{\sum_{C_i} \left(\E\left[q^{2Y_e}\right]\right)^{c_i}}{(\zgp)^2} 
= \frac{\sum_{C_i} p^{c_i}\cdot \left(\frac{\E\left[q^{2Y_e}\right]}{p}\right)^{c_i}}{\sum_{C_i} p^{c_i} \cdot \zgp} 
\le \max_{C_i} \frac{\left(\frac{\E\left[q^{2Y_e}\right]}{p}\right)^{c_i}}{\zgp} \nonumber \\
&\stackrel{(\ref{eq:qto2Yfirst})}{\le} \frac{(1-\tau q)^\lambda}{\zgp} 
\le \left(\frac{1-\tau q}{p}\right)^\lambda
\stackrel{(\ref{eq:qto2Ysecond})}{\le} \left(1+O\left(\frac{1}{\lambda\log n}\right)\right)^\lambda = 1+O\left(\frac{1}{\log n}\right). \label{eq:equal-terms-summation}
\end{align}

For terms with $|C_i\cap C_j|\le \lambda$,
\begin{align*}
&\frac{\sum_{|C_i\cap C_j|\le\lambda} p^{|C_i\Delta C_j|}\cdot \left(\E\left[q^{2Y_e}\right]\right)^{|C_i\cap C_j|}}{(\zgp)^2}
=\frac{\sum_{|C_i\cap C_j|\le \lambda}\, p^{c_i+c_j}\cdot \left(\frac{\E\left[q^{2Y_e}\right]}{p^2}\right)^{|C_i\cap C_j|}}{(\zgp)^2} \\
&\stackrel{(\ref{eq:qto2Ysecond})}{\le} \frac{\sum_{|C_i\cap C_j|\le \lambda}\, p^{c_i+c_j}\left(1+O\left(\frac1{\lambda\log n}\right)\right)^{\lambda}}{\sum_{C_i,C_j}p^{c_i+c_j}}  
\le \left(1+O\left(\frac{1}{\lambda\log n}\right)\right)^\lambda = 1+O\left(\frac{1}{\log n}\right).
\end{align*}

For terms with $|C_i\cap C_j| > \lambda$ and $C_i\ne C_j$, we have 
\begin{align*}
&\frac{\sum_{C_i\ne C_j,|C_i\cap C_j|>\lambda} \, p^{|C_i\Delta C_j|}\cdot \E\left[q^{2Y_e}\right]^{|C_i\cap C_j|}}{(\zgp)^2} 
= \frac{\sum_{C_i\ne C_j,|C_i\cap C_j|>\lambda} \, p^{|C_i\cup C_j|} \cdot \left(\frac{\E\left[q^{2Y_e}\right]}{p}\right)^{|C_i\cap C_j|}}{(\zgp)^2} \\
&\stackrel{(\ref{eq:qto2Yfirst})}{\le} \frac{\sum_{C_i\ne C_j,|C_i\cap C_j|>\lambda} \, p^{|C_i\cup C_j|}\cdot (1-\tau q)^{\lambda}}{(\zgp)^2} 
\le \frac{x_G(p) \cdot (1-\tau q)^{\lambda}}{(\zgp)^2} \quad \text{(by definition of $\xgp$)}.
\end{align*}
Applying $\frac{\xgp}{\zgp} \le \frac1{\log n}$ from \Cref{lem:z-approx-u}, this is at most
\[\frac{1}{\log n}\cdot \frac{(1-\tau q)^\lambda}{\zgp} 
\stackrel{(\ref{eq:equal-terms-summation})}{\le} \frac{1}{\log n}\cdot \left(1+O\left(\frac{1}{\log n}\right)\right) 
\le O\left(\frac1{\log n}\right) .\]
In conclusion, the total relative variance is given by
\[\frac{\E[z_{\tilde{G}}(q)^2]}{(\zgp)^2} \le \left( 1+O\left(\frac1{\log n}\right)\right)+\left( 1+O\left(\frac1{\log n}\right)\right)+O\left(\frac1{\log n}\right)=2 + O\left(\frac1{\log n}\right).\qedhere \]
\end{proof}

Combining \Cref{lem:sparsify-before-contract,lem:u-from-z}, we obtain the following:
\begin{corollary}\label{cor:sparsify-u-var}
$u_{\tilde{G}}(q)$ is an unbiased estimator of $u_G(p)$ with relative second moment at most $2+O\left(\frac{1}{\log n}\right)$.
\end{corollary}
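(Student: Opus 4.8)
The plan is to obtain this corollary simply by composing the unbiasedness and variance bounds already proved in this section with the general composition lemma \Cref{lem:u-from-z}. First I would recall that \Cref{lem:sparsify-unbiased} already establishes that $u_{\tG}(q)$ is an unbiased estimator of $\ugp$, so nothing further is needed on the expectation side; the entire content of the corollary is the bound on the relative second moment.

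Next I would invoke \Cref{lem:sparsify-before-contract}. In the moderately reliable case we have $4n^{-3}\le p^\lambda\le n^{-0.5}$, which in particular implies $n^{-3}\le p^\lambda\le n^{-0.5}$, so the hypothesis of that lemma is met; it gives that $z_{\tG}(q)$ is an unbiased estimator of $\zgp$ with relative second moment at most $2+O(1/\log n)$, i.e.\ with relative variance $\eta\le 1+O(1/\log n)$. Then I would apply \Cref{lem:u-from-z} with $H=\tG$: its hypothesis $p<\theta$ holds throughout the recursive contraction case, and its remaining hypotheses are exactly the two facts just recalled (that $z_{\tG}(q)$ is unbiased for $\zgp$ with relative variance $\eta$, and that $u_{\tG}(q)$ is unbiased for $\ugp$). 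The conclusion is that $u_{\tG}(q)$ has relative variance at most $(1+O(1/\log n))\,\eta+O(1/\log n)$, and substituting $\eta\le 1+O(1/\log n)$ gives relative variance $1+O(1/\log n)$, hence relative second moment $2+O(1/\log n)$.

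I do not expect any real obstacle here: the only point requiring (minimal) care is that composing two multiplicative factors of the form $1+O(1/\log n)$ and then adding one further additive $O(1/\log n)$ term still lands in the class $1+O(1/\log n)$, which is immediate. This corollary is the sparsification analogue of \Cref{cor:contract-u-var}, and its role is purely to package \Cref{lem:sparsify-before-contract} together with \Cref{lem:u-from-z} into the clean form (a $u$-estimator with relative second moment $2+O(1/\log n)$) that the induction in the proof of \Cref{lem:contract-var} consumes when handling a sparsification node of the computation tree.
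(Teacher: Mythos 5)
Your proposal is correct and matches the paper's own derivation exactly: the paper obtains this corollary by the one-line remark ``Combining \Cref{lem:sparsify-before-contract,lem:u-from-z}'', which is precisely your composition of the $z$-estimator bound from \Cref{lem:sparsify-before-contract} with the transfer lemma \Cref{lem:u-from-z}, together with the unbiasedness from \Cref{lem:sparsify-unbiased}. The hypothesis check ($4n^{-3}\le p^\lambda\le n^{-0.5}$ implying the weaker $n^{-3}\le p^\lambda$ needed by \Cref{lem:sparsify-before-contract}, and $p<\theta$ holding throughout the recursion) and the closing arithmetic are both right.
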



\section{Monte Carlo Sampling}
\label{sec:mc}
Finally, we use Monte Carlo sampling for the unreliable case. 
We use it in two different ways: na\"ive Monte Carlo sampling and two-step recursive Monte Carlo sampling. These two algorithms respectively handle the cases $p \ge \theta$ and $n^{-\frac{1}{2\lambda}} < p < \theta$, for $u_G(\theta) = n^{-O(1/\log \log n)}$. (The precise value of $\theta$ is the one given in \Cref{lem:z-approx-u}.) 
Note that in conjunction with the previous sections, this covers all possibilities.

\subsection{Na\"ive Monte Carlo Sampling}\label{sec:naive}
When $p \ge \theta$, we run a na\"ive Monte Carlo sampling algorithm. 
%
%
Our goal is to show \Cref{lem:mc}, which we restate below:

\mc*

In each round of this algorithm, we run the following sampling process: remove each edge independently with probability $p$ and check whether the graph gets disconnected. The corresponding indicator variable $X$ is a Bernoulli random variable with parameter $\ugp$. The expectation, variance, and relative variance of this variable are given below:
\begin{lemma}\label{lem:mc-single}
For a single round of Monte Carlo sampling, the mean, variance, and relative variance of the estimator are given by
$\ex[X] = \ugp$, $\var[X] = \ugp(1-\ugp)$ and $\relv[X] = \frac{\ugp(1-\ugp)}{(\ugp)^2} \le \frac{1}{\ugp}$. 
\end{lemma}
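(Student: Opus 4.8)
The plan is to observe that $X$ is, by construction, an indicator random variable: in one round we delete each edge independently with probability $p$ and set $X = 1$ exactly when the resulting subgraph is disconnected, and $X = 0$ otherwise. By the very definition of the unreliability $\ugp$ as the probability that the graph disconnects under independent edge deletions with probability $p$, we have $\Pr[X = 1] = \ugp$ and $\Pr[X = 0] = 1 - \ugp$. Hence $X$ is a Bernoulli random variable with parameter $\ugp$, and all three claimed quantities are just the standard moments of a Bernoulli variable.

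Concretely, first I would compute $\ex[X] = 1\cdot \ugp + 0\cdot(1-\ugp) = \ugp$. Next, since $X$ is $\{0,1\}$-valued we have $X^2 = X$, so $\ex[X^2] = \ugp$ and therefore $\var[X] = \ex[X^2] - \ex[X]^2 = \ugp - (\ugp)^2 = \ugp(1-\ugp)$. Finally, dividing by $\ex[X]^2 = (\ugp)^2$ gives $\relv[X] = \frac{\ugp(1-\ugp)}{(\ugp)^2} = \frac{1-\ugp}{\ugp}$, and since $1 - \ugp \le 1$ this is at most $\frac{1}{\ugp}$.

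There is essentially no obstacle here: the statement is a direct unpacking of the definition of the estimator together with elementary properties of Bernoulli random variables. The only point worth flagging is that the bound $\relv[X] \le 1/\ugp$ is exactly what drives the rest of the argument for \Cref{lem:mc} — combined with \Cref{fact:rel-var-decrease} it says that $\tO(1/\ugp)$ independent rounds suffice to push the relative variance down to $O(1)$, and in the regime $p \ge \theta$ we have $\ugp = n^{-O(1/\log\log n)} = n^{-o(1)}$ by \Cref{lem:z-approx-u}, so this is only $n^{o(1)}$ rounds. That amplification step, however, belongs to the proof of \Cref{lem:mc} rather than to this lemma.
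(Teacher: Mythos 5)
Your proof is correct and matches the paper's (implicit) approach: the paper simply notes that $X$ is Bernoulli with parameter $\ugp$ and states the lemma without further argument, and you have spelled out exactly the routine Bernoulli moment calculations and the bound $1-\ugp \le 1$ that the paper leaves to the reader.
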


By repeated Monte Carlo sampling for $\tO(1/\ugp)$ independent rounds, we can reduce the relative variance to $O(1)$ by \Cref{fact:rel-var-decrease}. The running time of each round is $O(m)$, so the total running time is $\tO(m/\ugp)$. Note that $\ugp \ge u_G(\theta) = n^{-o(1)}$; therefore, the running time is $m^{1+o(1)}$. 

This completes the proof of \Cref{lem:mc}.

\eat{
Then, we can get a $(1\pm\eps)$-approximation of $\ugp$ whp by applying \Cref{lem:mc-sample}. This yields the following lemma:

\begin{lemma}\label{lem:mc}
    For any $p$ such that $\ugp > n^{-o(1)}$, there is an algorithm that obtains a $(1\pm\eps)$-approximation to $\ugp$ with probability $\ge 1-\frac{1}{\poly(n)}$ and runs in time $m^{1+o(1)}$.
\end{lemma}
}

\subsection{Two-step Monte Carlo Sampling}\label{sec:two-step}
When $p < \theta$ and $p > n^{-\frac{1}{2\lambda}}$, we use a two-step Monte Carlo sampling algorithm. This follows a sparse sampling technique used by Karger~\cite{Karger16}. Our goal is to show \Cref{lem:mc-2step}, which we restate below:

\mctwo*


To describe this algorithm, we first note that instead of removing each edge with probability $p$ and checking if the graph is disconnected, we can equivalently {\em contract} every edge with probability $1-p$ and check if we get more than one vertex. Instead of doing this in one shot, we stage this contraction process out into two steps: in the first step, for some $q > p$, we contract each edge with probability $1-q$ to form a graph $H$ (i.e., $H\sim G(q)$), and then in the second step, we contract each edge in $H$ with probability $1-p/q$. Note that the indicator variable for obtaining $> 1$ vertex at the end of this two step contraction process is an unbiased estimator for $u_H(p/q)$, and since $\ex[u_H(p/q)] = \ugp$, is also an unbiased estimator of $\ugp$.

But, what do we gain in this two-step process? To understand this, we need to bound the running time for each of the two steps. In each step, we bound the running time for a single round of Monte Carlo sampling and also the relative variance of the resulting estimator, which in turn bounds the number of rounds by \Cref{lem:mc}.

For the first step, we use a na\"ive implementation of Monte Carlo sampling in $O(m)$ time. To ensure efficiency in terms of the number of rounds of sampling, we need to choose $q$ to be large enough such that the relative variance at the end of this step is small. We choose $q$ such that $q^\lambda = \nicefrac 12$. Note that if $q^\lambda = \nicefrac 12$ and $p < \theta$  (for $\theta$ given in \Cref{lem:z-approx-u}), then 
\[
q^\lambda = \frac 12 > n^{-O(1/\log\log n)} = u_G(\theta) \ge \ugp \ge p^\lambda;
\]
 thus $q > p$ as required. Furthermore, since $p \le \theta$, we can apply \Cref{lem:karger-relvar}.
Note that $p^\lambda \le \ugp \le u_G(\theta) = n^{-O(1/\log \log n)}$ (by \Cref{lem:z-approx-u}). Thus, 
\[
O\left(\frac{1}{\log p^{-\lambda}}\right) = O\left(\frac{\log\log n}{\log n}\right) = o(1).
\]
Since $q^{-\lambda} = 2$, \Cref{lem:karger-relvar} implies that the relative variance of $u_H(p/q)$ over the randomness of $H$ is at most $2 + o(1) = O(1)$.

Now, we consider the second step of Monte Carlo sampling. Here, we na\"ively bound the relative variance of the estimator by the relative variance of the overall estimator, which by \Cref{lem:mc-single} is given by $1/\ugp$. Now, since $\ugp \ge p^\lambda > n^{-1/2}$, we get that the relative variance of the estimator in the second step is at most $\sqrt{n}$. Since we are aiming for a running time of $\tO(m+n^{3/2})$, we must give an implementation of the second step of Monte Carlo sampling in $\tO(n)$ time. Crucially, because of edge contractions, the number of edges in $H$ generated after the first step of sampling is at most $\frac{n}{1-q}$ in expectation (by \Cref{lem:contraction-size-bound}).
By our choice of $q$ that ensures $q^\lambda = \nicefrac 12$, \Cref{lem:contraction-size-bound} implies that in expectation, $H$ contains $O(n\lambda)$ edges. So, a na\"ive implementation of Monte Carlo sampling takes $O(n\lambda)$ time per round in expectation. But, this is still not enough since we are aiming for a running time of $\tO(n)$ per round. Recall that the probability of contracting an edge in $H$ is $1 - p/q$, which can be bounded as follows (the second inequality uses $p^\lambda > n^{-1/2}$):
\[1-\frac{p}{q} < 1-p \le 1-e^{-0.5\ln n/\lambda} \le \frac{0.5\ln n}{\lambda} = O\left(\frac{\log n}{\lambda}\right).\]
Now, since $H$ only has $O(n\lambda)$ edges in expectation, it follows that the expected number of contracted edges in $H$ is $O(n\log n)$. 

Instead of iterating over all edges in $H$, we directly choose the edges to contract and check if they contain a spanning tree over the set of vertices in $H$. The first step is to determine the number of edges to contract. Note that this is a Binomial random variable with parameters $|E(H)|$ and $1-p/q$. Once we have selected the number of contracted edges by generating the Binomial random variable, we must then select these edges to contract uniformly at random from the edges in $H$. We choose edges uniformly at random (with replacement, discarding if we get a duplicate) until we have the desired number of edges. This algorithm requires $\tO(1)$ time per contracted edge, which is $\tO(n)$ time in expectation overall. 

We now describe our overall algorithm. For each round of the first sampling step that reduces $G$ to $H$, we use $O(\sqrt{n})$ independent rounds of the second step of Monte Carlo sampling on $H$. This gives an estimator of $u_H(p/q)$ with relative variance 1, and takes $\tO(m+n^{1.5})$ time. The overall relative variance of the estimator of $u_G(p)$ is now $O(1)$ by \Cref{lem:relvar-multiply}. So, we now invoke \Cref{lem:mc-sample} by repeating this sampling for $\tO(\eps^{-2})$ rounds to get a $(1\pm\eps)$-approximate estimate of $\ugp$ in $\tO((m+n^{1.5})\eps^{-2})$ time overall.

This completes the proof of \Cref{lem:mc-2step}.

\section{Conclusion}
\label{sec:conclusion}

We obtain an algorithm for the network unreliability problem that runs in $m^{1+o(1)} + \tO(n^{1.5})$ time and improves on the previous best running time of $\tO(n^2)$. Our main technical contribution is a new algorithm for estimating unreliability in the {\em very reliable} situation, which is normally the bottleneck for unreliability algorithms. Our algorithm utilizes a carefully defined importance sampling procedure on a collection of cuts defined via a spanning tree packing of the graph. In addition, for the {\em moderately reliable} setting, we give a new, improved analysis for (a version of) the recursive contraction algorithm. Our algorithm is almost-linear time for dense instances ($m = \Omega(n^
{1.5})$). Obtaining an almost linear running time for all instances is the natural eventual goal.  But, this will require new ideas that go beyond the techniques described in this paper which are optimized to obtain the $\tO(n^{1.5})$ bound.

\eat{
We set our threshold of moderate and very reliable cases to be $p^\lambda \approx n^{-3}$. This, of course, is decided by the tradeoff between running times of the two cases. Our techniques can achieve almost-linear running time under stronger assumptions, which is roughly $p^\lambda< n^{-4}$ for importance sampling and $p^\lambda > n^{-1}$ for recursive contraction. 
}

Our result holds for the {\em uniform} case where every edge fails with the same probability $p$. A more general setting is when each edge $e$ fails with a different probability $p_e$. This can be simulated in the uniform setting by using multiple parallel edges for each edge, and our algorithm works for this simulated graph but at the cost of an increase in the running time because of the dependence on the number of edges $m$.

The problem of estimating reliability, i.e., the probability that the graph {\em stays connected} under edge failures, is equivalent to the unreliability problem when we seek exact computation. However, for approximation algorithms, the problems are different. For approximating reliability, the bottleneck graphs are the very unreliable ones, and require an entirely different set of techniques. For the reliability problem, the current best running time is $\tO\left(\frac{mn^2}{\eps^2(1-p)}\right)$ due to Guo and He~\cite{GuoH20}. Improving this bound is also an interesting question in the general area of network reliability.

\bibliographystyle{plain}
\bibliography{refs}

\appendix

\section{Additional Proofs}\label{sec:proofs}\subsection{Proof of Lemma \ref{lem:z-approx-u}}

To prove this lemma, we use the following lemma shown by Karger~\cite{Karger20}:
\begin{lemma}[Theorems 5.1 and 7.1 of \cite{Karger20}]\label{lem:karger-z-approx-u}
There exist thresholds $0<s<b<1$ such that
\begin{enumerate}
    \item When $p<s$, $\frac{x_G(p)}{z_G(p)} \le (p/s)^{\lambda/2}$.
    \item $u_G(b)=\frac 12$, $s^{\lambda} = \Omega(b^\lambda / \log^2 n)$.
    \item $z_{G, \alpha}(b) = n^{-O(1/\log \alpha)}$, where $z_{G, \alpha}(b)$ is the expected number of failed $\alpha\lambda$-weak cuts in $G$ when each edge is removed with probability $b$.
\end{enumerate}
\end{lemma}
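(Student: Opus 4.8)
This is the phase-transition analysis of \cite{Karger20} (its Theorems 5.1 and 7.1); the plan is to recall the architecture of that proof. Let $N$ denote the number of failed cuts when each edge fails independently with probability $p$, so that $\E[N]=z_G(p)$, $\E[N(N-1)]=x_G(p)$, and $\E[N^2]=z_G(p)+x_G(p)$. The Paley--Zygmund bound gives $u_G(p)=\Pr[N\ge 1]\ge \E[N]^2/\E[N^2]=z_G(p)^2/(z_G(p)+x_G(p))$, so a bound $x_G(p)\le \epsilon\cdot z_G(p)$ immediately yields $(1-\epsilon)z_G(p)\le u_G(p)\le z_G(p)$. Since $u_G(\cdot)$ is a continuous, nondecreasing function of $p$ with $u_G(0)=0$ and $u_G(1)=1$, there is a value $b$ with $u_G(b)=\tfrac12$; this gives the first clause of item~2. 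The threshold $s\le b$ is then taken as large as the argument for item~1 allows, and the real content is that item~1 holds for an $s$ satisfying the quantitative bound $s^\lambda=\Omega(b^\lambda/\log^2 n)$ of item~2.

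For item~1 one bounds $x_G(p)=\sum_{C_i\ne C_j}p^{|C_i\cup C_j|}$ by a weighted count of $3$- and $4$-way cuts. An ordered pair of distinct cuts $(C_i,C_j)$, with vertex bipartitions $(A,\bar A)$ and $(B,\bar B)$, refines $V$ into the nonempty cells among $A\cap B,\ A\cap\bar B,\ \bar A\cap B,\ \bar A\cap\bar B$; since the two bipartitions differ there are $r\in\{3,4\}$ cells, the edge set $C_i\cup C_j$ is exactly the set of edges crossing the resulting $r$-way partition $\{P_1,\dots,P_r\}$, and only $O(1)$ ordered pairs of distinct cuts refine to a given $r$-way partition. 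Hence $x_G(p)\le O(1)\,(y_3(p)+y_4(p))$, where $y_r(p)=\sum_{r\text{-way cuts }D}p^{|D|}$. Since $2|D|=\sum_{i=1}^r\delta(P_i)$ with each part-boundary $\delta(P_i)$ a cut of value $\ge\lambda$, every $r$-way cut satisfies $|D|\ge \tfrac r2\lambda$; the delicate step is to show that $y_r(p)$ exceeds $z_G(p)$ by at most a polylogarithmic factor beyond the ``extra decay'' $(p/b)^{(r-2)\lambda/2}$ forced by this larger minimum value --- a generalization of the Karger--Stein recursive-contraction cut-counting bound from single cuts to cut pairs. This yields $x_G(p)\le O(\log n)\,(p/b)^{\lambda/2}\,z_G(p)$; choosing $s$ with $O(\log n)\,(s/b)^{\lambda/2}=1$, i.e.\ $s^\lambda=\Theta(b^\lambda/\log^2 n)$, makes $x_G(p)/z_G(p)\le (p/s)^{\lambda/2}$ for every $p<s$ (item~1) and simultaneously gives $s^\lambda=\Omega(b^\lambda/\log^2 n)$ (second clause of item~2).

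For item~3, at $p=b$ we have $z_G(b)\ge u_G(b)=\tfrac12$, and we must argue that truncating to $\alpha\lambda$-weak cuts costs only an $n^{-o(1)}$-type factor. Writing $z_{G,\alpha}(b)\ge \Pr[\text{some }\alpha\lambda\text{-weak cut fails}]=u_G(b)-\Pr[\text{graph disconnects but every failed cut has value}>\alpha\lambda]$, it suffices to bound the last probability by $n^{-\Omega(1/\log\alpha)}$. This is again done by a recursive-contraction argument: after contracting the graph toward a constant number of vertices, an event in which disconnection uses only cuts of value $>\alpha\lambda$ survives with probability that decays at this rate in $\alpha$. Hence $z_{G,\alpha}(b)\ge \tfrac12-n^{-\Omega(1/\log\alpha)}\ge n^{-O(1/\log\alpha)}$, which is the claimed bound (this lower bound is the direction actually used downstream).

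The main obstacle is the middle step of item~1: extending the recursive-contraction cut-counting bound to $3$- and $4$-way cuts with only a polylogarithmic loss relative to $z_G(p)$. A merely polynomial loss would be fatal, since it would force $s^\lambda$ to be polynomially below $b^\lambda$, contradicting the $\Omega(b^\lambda/\log^2 n)$ lower bound of item~2 (note $b^\lambda$ itself need not be polynomially small --- e.g.\ it is $\Theta(1)$ when the minimum cut is unique). Carrying out this refined analysis of how near-minimum cut pairs behave under random contraction is the bulk of the work, and the same contraction machinery also drives the tail bound in item~3.
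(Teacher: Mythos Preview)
The paper does not give its own proof of this lemma: it is quoted verbatim as ``Theorems 5.1 and 7.1 of \cite{Karger20}'' and then used as a black box to derive \Cref{lem:z-approx-u} and \Cref{lem:z-approx-u-reliable}. So there is nothing in the present paper to compare your proposal against.

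That said, your outline is a faithful high-level summary of Karger's phase-transition argument. The decomposition of $x_G(p)$ into 3- and 4-way partition sums, the choice of $s$ via $O(\log n)(s/b)^{\lambda/2}=1$, and the use of recursive contraction to control both the multiway-cut counts in item~1 and the large-cut tail in item~3 are indeed the ingredients of \cite{Karger20}. Your identification of the ``main obstacle'' --- getting only a polylogarithmic rather than polynomial loss when passing from $z_G$ to $y_3,y_4$ --- is also correct; this is precisely what forces the careful bucketed analysis in Karger's paper. One small wrinkle: in your item~3 derivation the inequality $\tfrac12 - n^{-\Omega(1/\log\alpha)} \ge n^{-O(1/\log\alpha)}$ is only meaningful once $\alpha$ is at least a large constant (so the subtracted term is below $\tfrac12$), but this is harmless since the statement is vacuous for bounded $\alpha$ and the downstream application takes $\alpha\to\infty$ with $n$.
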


Recall that $\xgp$ is the expected number of failed cut pairs, and $\zgp$ is the expected number of failed cuts. A simple inclusion-exclusion on $\ugp$ gives
\[
    \zgp-\xgp \le \ugp \le \zgp.
\]

Let $s$ and $b$ be the thresholds provided by \Cref{lem:karger-z-approx-u}. Set $\theta^\lambda = s^\lambda / \log^2 n$, $\theta<s$. Then when $p<\theta$, 
\[
    \frac{\xgp}{\zgp} \le (p/s)^{\lambda/2} \le (\theta/s)^{\lambda/2} = \frac{1}{\log n},
\]
so the second property holds.

Next we prove the first property. $\theta^\lambda=s^\lambda / \log^2 n = \Omega(b^\lambda / \log^4 n)$.
\begin{align*}
u_G(\theta) &\ge \left(1-\frac{1}{\log n}\right)z_G(\theta) \ge \frac 12 z_{G, \alpha}(\theta)\\
 &= \frac 12 \sum_{C_i:c_i \le \alpha\lambda} \theta^{c_i} \ge \frac 12 \sum_{C_i:c_i \le \alpha\lambda} b^{c_i} \left(\frac{\theta}{b}\right)^{\alpha\lambda}\\
 &\ge z_{G, \alpha}(b) \cdot (\Omega(\log^{-4} n))^\alpha \ge n^{-O(1/\log \alpha)} \cdot e^{-O(\alpha\log \log n)}
\end{align*}

Set $\alpha = O(\log n / (\log \log n)^2)$, then $n^{-O(1/\log \alpha)} = n^{-O(1/\log \log n)}$, and $e^{-O(\alpha\log \log n)} = n^{-O(1/\log \log n)}$. In conclusion $u_G(\theta) = n^{-O(1/\log \log n)}$. So, the first property holds.

\subsection{Proof of Lemma \ref{lem:z-approx-u-reliable}}

We first state a simple fact about the ratio $p^\lambda$, the probability that a specific min cut fails, and $\ugp$, the probability that any cut fails.
\begin{lemma}[Corollary II.2 of \cite{Karger16}]\label{lem:z-trivial-bound}
For a graph $G$ with $n$ vertices and min cut value $\lambda$, 
\[p^\lambda \le \ugp \le n^2p^\lambda.\]
\end{lemma}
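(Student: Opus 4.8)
The plan is to treat the two inequalities separately. The lower bound $\ugp\ge p^{\lambda}$ is essentially free: fix any one minimum cut, which has exactly $\lambda$ edges; with probability $p^{\lambda}$ all $\lambda$ of them are deleted, and on that event $G$ is disconnected.

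For the upper bound $\ugp\le n^{2}p^{\lambda}$, I would first dispose of the trivial case $n^{2}p^{\lambda}\ge 1$, where $\ugp\le 1\le n^{2}p^{\lambda}$, and so assume $p^{\lambda}<n^{-2}$. Pass to the partition function via the union bound over cuts, $\ugp\le\zgp=\sum_{C\in\mathcal{C}(G)}p^{|C|}$, so it suffices to bound $\zgp$. Split this sum by cut value and apply the cut-counting bound (\Cref{lem:cut-counting}): the minimum cuts contribute at most $\binom{n}{2}p^{\lambda}<\tfrac12 n^{2}p^{\lambda}$, while for $k>\lambda$ the number of cuts of value at most $k$ is at most $n^{2k/\lambda}$. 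An Abel (summation-by-parts) estimate over $k=\lambda+1,\lambda+2,\dots$ then bounds the contribution of the non-minimum cuts by a geometric-type series with first term $O\!\left(n^{2+2/\lambda}p^{\lambda+1}\right)$ and ratio $n^{2/\lambda}p=(n^{2}p^{\lambda})^{1/\lambda}<1$; combining the two contributions gives $\zgp=O(n^{2}p^{\lambda})$.

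The step I expect to be the real obstacle is obtaining the clean constant $n^{2}$ uniformly over $p$, as the statement asserts: when $p^{\lambda}$ lies just below $n^{-2}$ the geometric ratio $n^{2/\lambda}p$ approaches $1$, the cut-counting prefix bound becomes too lossy to sum term by term, and the naive argument above degrades, so one must either sharpen the counting or fall back on $\ugp\le 1$. Since the lemma is quoted verbatim from \cite{Karger16}, I would simply invoke that reference for the optimized constant; I would also remark that every downstream use of it in this paper only needs the weaker $O(n^{2}p^{\lambda})$ --- for instance, in the proof of \Cref{lem:z-approx-u-reliable} it is used (together with $u_G(b)=\tfrac12$ from \Cref{lem:karger-z-approx-u}) merely to force $b^{\lambda}=\Omega(n^{-2})$, hence $s^{\lambda}=\Omega(n^{-2}/\log^{2}n)\gg p^{\lambda}$, and therefore $x_G(p)/z_G(p)\le(p/s)^{\lambda/2}=O(\log n/\sqrt{n})$.
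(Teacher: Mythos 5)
The paper offers no proof of this lemma --- it is quoted directly from \cite{Karger16} --- so the only ``approach'' to compare against is the citation itself, which you also ultimately invoke. Your lower-bound argument is complete and correct. For the upper bound, your sketch correctly combines the trivial case $n^2p^\lambda\ge 1$ with the union bound $\ugp\le\zgp$ and cut counting, but the limitation you identify is not merely a boundary effect near $p^\lambda\approx n^{-2}$: an exact Abel summation against the prefix count $N(k)\le n^{2k/\lambda}$ (number of cuts of value at most $k$) gives
\[\sum_{C}p^{|C|}=(1-p)\sum_{k\ge\lambda}N(k)\,p^{k}\le \frac{1-p}{1-n^{2/\lambda}p}\cdot n^{2}p^{\lambda},\]
and the prefactor $\frac{1-p}{1-n^{2/\lambda}p}$ strictly exceeds $1$ for every $p\in(0,1)$ and $n\ge 2$ (since $n^{2/\lambda}\ge 1$), so this route by itself can \emph{never} recover the literal constant $n^{2}$, only $O(n^{2}p^{\lambda})$ when $n^{2/\lambda}p$ is bounded away from $1$. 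Deferring to \cite{Karger16} for the exact constant is therefore the right move and matches what the present paper does, and your closing observation is also accurate: the only downstream use of this lemma in the paper, inside the proof of \Cref{lem:z-approx-u-reliable}, combines it with $u_G(b)=\tfrac12$ to deduce $b^{\lambda}=\Omega(n^{-2})$ and hence $s^{\lambda}=\Omega(n^{-2}/\log^{2}n)$, for which an $O(n^{2}p^{\lambda})$ bound suffices.
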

By \Cref{lem:z-trivial-bound} and the definition of $b$ in \Cref{lem:karger-z-approx-u}, we have 
\[
    \nf 12 = u_G(b)\le n^2b^\lambda. 
\]    
Therefore, $s^\lambda = \Omega(b^\lambda/\log^2 n ) = \Omega(n^{-2}/\log^2 n)$.
When $p^\lambda = O(n^{-3})$, we have 
\[
    (p/s)^{\lambda/2} =\sqrt{p^\lambda / s^\lambda} \le \sqrt{\frac{O(n^{-3})}{\Omega\left(\frac{n^{-2}}{\log^2 n}\right)}} = O\left(\frac{\log n}{\sqrt{n}}\right). 
\]    
    The lemma follows by plugging this bound into the first property of \Cref{lem:karger-z-approx-u}.

\subsection{Proof of \Cref{lem:tree-packing}}

We use Gabow's algorithm that packs directionless spanning trees in a directed graph $D$. A directionless spanning tree of $D$ is a subgraph of $D$ such that if we replace all its edges with undirected edges, we get a tree spanning all vertices of $D$.

\begin{lemma}[\cite{Gabow95}]\label{lem:gabow-tree-packing}
Given a directed graph $D$ with min-cut value $\lambda$, we can construct in $O(\lambda m \log(n^2/m))$ time a packing of $\lambda$ edge-disjoint directionless spanning trees.
\end{lemma}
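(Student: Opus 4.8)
The plan is to prove Gabow's lemma in two parts: first establish the combinatorial fact that a digraph of min-cut value $\lambda$ admits $\lambda$ arc-disjoint directionless spanning trees, and then describe the augmenting-path algorithm, implemented with (compressed) dynamic trees, that realizes such a packing in $O(\lambda m\log(n^2/m))$ time.

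\textbf{Combinatorial correctness.} Fix an arbitrary root $r\in V$. By the definition of the min-cut value of $D$, every cut has in-degree at least $\lambda$; in particular $d^-_D(S)\ge\lambda$ for every $S$ with $\emptyset\ne S\subseteq V\setminus\{r\}$. By Menger's theorem this means there are $\lambda$ arc-disjoint $r\to v$ paths for every $v$, so Edmonds' branching theorem guarantees $\lambda$ arc-disjoint spanning out-arborescences $A_1,\dots,A_\lambda$ rooted at $r$. Forgetting orientations, each $A_i$ becomes a directionless spanning tree, and arc-disjointness is preserved, so existence is settled. (Equivalently, the underlying undirected multigraph of $D$ is $2\lambda$-edge-connected, hence has $\lambda$ edge-disjoint spanning trees by the Nash--Williams/Tutte theorem; either route works, and I would take whichever makes the algorithm cleanest.)

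\textbf{Efficient construction.} For the running time I would follow Gabow's augmenting scheme: maintain $\lambda$ branchings $F_1,\dots,F_\lambda$ rooted at $r$ (forests in which every vertex other than $r$ has in-degree at most one, with no cycle directed away from $r$), initially empty, to be grown into $\lambda$ arc-disjoint spanning $r$-arborescences. Arcs of $D$ are examined and, when possible, an arc $(u,v)$ is inserted into some $F_i$ where $v$ has no in-arc. When no $F_i$ can directly accept a needed in-arc at $v$, run an \emph{augmenting search}: starting at $v$, repeatedly remove the in-arc of some vertex from one branching and insert a replacement into another, following a path of such swaps until a branching with a free slot at the relevant vertex is reached; executing the swaps along the path raises the number of packed arcs by one while preserving the invariant that the packed arcs decompose into $\lambda$ valid $r$-branchings. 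There are $O(\lambda n)$ augmentations in total (the final packing has $\lambda(n-1)$ arcs), and Gabow shows that across the whole run the number of arc examinations and elementary forest operations (find-root, cut, link, evert on the branchings) is $O(\lambda m)$. These forest operations are supported by a dynamic-tree structure, and Gabow's refinement stores the branchings in a \emph{compressed} form whose size is controlled by the $O(\lambda n)$ packed arcs against the $m$ candidate arcs, which is exactly what turns the generic $O(\log n)$ per operation into $O(\log(n^2/m))$ amortized. Multiplying gives the claimed $O(\lambda m\log(n^2/m))$ bound; forming the auxiliary objects, choosing $r$, and outputting the $\lambda$ trees are dominated by this.

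\textbf{Main obstacle.} The combinatorial step is classical and immediate. The real work is the second step: verifying that the augmenting-search genuinely preserves the decomposition-into-$\lambda$-branchings invariant after each swap-path, and — the technically heavy part — carrying out the amortized (potential-function) analysis of the compressed dynamic-tree operations so that the whole sequence of augmentations costs $O(\lambda m\log(n^2/m))$ rather than a $\log n$ factor more, exploiting the sparsity of the maintained forests. Reproducing Gabow's data structure and its analysis faithfully is where essentially all of the difficulty lies.
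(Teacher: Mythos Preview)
The paper does not prove this lemma; it is stated with a citation to Gabow~(1995) and invoked as a black box in the proof of \Cref{lem:tree-packing}. There is therefore no ``paper's own proof'' to compare your proposal against.

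As for the proposal itself: your existence argument is sound. If $D$ has directed min-cut $\lambda$, then for any root $r$ every $r$--$v$ cut has value at least $\lambda$, so Edmonds' branching theorem yields $\lambda$ arc-disjoint $r$-arborescences, which are in particular arc-disjoint directionless spanning trees; your Nash--Williams/Tutte alternative (the underlying undirected multigraph is $2\lambda$-edge-connected) is equally valid. Your algorithmic sketch, however, describes an arborescence-packing routine (maintain rooted branchings, swap in-arcs along augmenting paths), whereas Gabow's actual round-robin algorithm works directly with directionless forests via matroid intersection between the graphic matroid on the underlying undirected multigraph and a partition matroid, not via rooted branchings. Either framework can in principle be pushed to the stated bound, but you correctly identify that the entire content of the lemma lies in the amortized $O(\log(n^2/m))$ data-structure analysis, which your proposal only names without carrying out. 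For the purposes of this paper that is appropriate: the lemma is an imported result, and the citation is the proof.
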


Given an undirected graph $G$, we construct a directed graph $D$ by replacing every undirected edge by two directed edges in opposite directions. Then the min-cut value in $D$ is the same as the min-cut value $\lambda$ in $G$. We apply \Cref{lem:gabow-tree-packing} to get $\lambda$ edge-disjoint directionless spanning trees in $D$. After that, remove the directions of all edges in the trees to form a tree packing $\cal T$ in $G$. Removing direction maps two edges in $D$ into one edge in $G$. Therefore, each edge of $G$ is used by at most two trees in $\cal T$.

\subsection{Proof of \Cref{fact:importance-distribution-correct}}

Among the $N^j$ possible sampling sequences, we count the number of sequences that form set $A$. Such a sequence can be generated by first partitioning the $j$ elements into $\alpha$ sets, then choosing a bijection between these sets and elements in $A$. The first step has $S(j, \alpha)$ possibilities, and the second step has $\alpha!$ choices. So, the total number of sequences that yield $A$ is given by $\alpha!\ S(j, \alpha)$. The first part of the lemma now follows since in uniform sampling with replacement, all $N^j$ sequences are equally likely. For the second part, we note that when $j=2$, we have $S(j, \alpha)=1$ for $\alpha\in\{1, 2\}$ and $S(j, \alpha)=0$ otherwise.

\subsection{Proof of \Cref{fact:dfdt}}

Since $t\ge \frac{5}{\log n} = \frac{5\ln 2}{\ln n} > \frac{3}{\ln n}$, we have $2^t-1\ge t\ln 2 > \frac{3\ln 2}{\ln n}$. This implies
$\frac{2^t}{2^t-1} = 1+\frac{1}{2^t-1} < 1+\frac{\ln n}{3\ln 2}$,
and $\frac{2^t \ln 2}{2^t-1} < \ln 2+\frac{1}{3}\ln n$. When $n$ is greater than some constant, we have $\frac{2^t \ln 2}{2^t-1} < 0.375\ln n$. Therefore,
\begin{align*}
    \frac{df}{dt}
    &= (2u)^t \ln (2u) - u^t \ln u
    = u^t \left(2^t\ln 2 - (2^t-1)\ln\frac{1}{u}\right) 
    \le u^t (2^t \ln 2 - (2^t-1) 0.375\ln n)
  < 0.
\end{align*}

\end{document}